\definecolor{DarkRed}{rgb}{0.5,0.1,0.1}
\definecolor{DarkBlue}{rgb}{0.1,0.1,0.5}
\colorlet{YellowOrange}{RawSienna}
\newtheorem{theorem}{Theorem}
\newtheorem{claim}{Claim}[section]
\newtheorem{lemma}[claim]{Lemma}
\newtheorem{definition}{Definition}
\newtheorem{example}{Example}
\newtheorem*{lemma*}{Lemma}
\newcommand{\toShrink}{-.20cm}
\newcommand{\toShrinkEnu}{-.2cm}
\newcommand{\tvd}[2]{\ensuremath{\norm{#1 - #2}_{tvd}}}
\renewcommand{\tvd}[2]{\ensuremath{\Delta_{\textnormal{\texttt{TV}}}(#1,#2)}}
\newcommand{\eps}{\ensuremath{\varepsilon}}
\newcommand{\paren}[1]{\ensuremath{\left(#1\right)}\xspace}
\newcommand{\card}[1]{\left\vert{#1}\right\vert}
\newcommand{\prob}[1]{\Pr\paren{#1}}
\newcommand{\poly}{\mbox{\rm poly}}
\DeclareMathOperator*{\Prob}{\ensuremath{\textnormal{Pr}}}
\renewcommand{\Pr}{\Prob}
\newenvironment{tbox}{\begin{tcolorbox}[
		enlarge top by=5pt,
		enlarge bottom by=5pt,
		 breakable,
		 boxsep=0pt,
                  left=4pt,
                  right=4pt,
                  top=10pt,
                  boxrule=1pt,toprule=1pt,
                  colback=white,
                  arc=-1pt,
                  ]
	}
{\end{tcolorbox}}
\title{Sublinear Time Hypergraph Sparsification via Cut and Edge Sampling Queries}
\date{}
\newcommand*\samethanks[1][\value{footnote}]{\footnotemark[#1]}
\newcommand{\OVA}{O_{\rm value}}
\newcommand{\OED}{O_{\rm edge}}
\newcommand{\ONONE}{O_{\rm nbr}^1}
\newcommand{\ONTWO}{O_{\rm nbr}^2}
\author{Yu Chen\thanks{Department of Computer and Information Science, University of Pennsylvania. Email: {{\small {\tt \{chenyu2,sanjeev\}@cis.upenn.edu.}}}}
\and Sanjeev Khanna\samethanks
\and Ansh Nagda\thanks{University of Washington. Email: {{\small {\tt ansh@cs.washington.edu.}}}}
}
\begin{document}
\maketitle

\begin{abstract}
The problem of sparsifying a graph or a hypergraph while approximately preserving its cut structure has been extensively studied and has many applications.
In a seminal work, Benczúr and Karger (1996) showed that given any $n$-vertex undirected weighted graph $G$ and a parameter $\eps \in (0,1)$, there is a near-linear time algorithm that outputs a weighted subgraph $G'$ of $G$ of size $\tilde{O}(n/\eps^2)$ such that the weight of every cut in $G$ is preserved to within a $(1 \pm \eps)$-factor in $G'$. The graph $G'$ is referred to as a {\em $(1 \pm \eps)$-approximate cut sparsifier} of $G$. Subsequent recent work has obtained a similar result for the more general problem of hypergraph cut sparsifiers. However, all known sparsification algorithms require $\Omega(n + m)$ time where $n$ denotes the number of vertices and $m$ denotes the number of hyperedges in the hypergraph. Since $m$ can be exponentially large in $n$, a natural question is if it is possible to create a hypergraph cut sparsifier in time polynomial in $n$, {\em independent of the number of edges}. We resolve this question in the affirmative, giving the first sublinear time algorithm for this problem, given appropriate query access to the hypergraph.

Specifically, we design an algorithm that constructs a $(1 \pm \eps)$-approximate cut sparsifier of a hypergraph $H(V,E)$ in polynomial time in $n$, independent of the number of hyperedges, when given access to the hypergraph using the following two queries:
    \begin{enumerate}
        \item given any cut $(S,\bar{S})$, return the size $\card{\delta_{E}(S)}$ ({\em cut value queries}); and
        \item given any cut $(S,\bar{S})$, return a uniformly at random edge crossing the cut ({\em cut edge sample queries}).
    \end{enumerate}
   Our algorithm outputs a sparsifier with $\tilde{O}(n/\eps^2)$ edges, which is essentially optimal. 
We then extend our results to show that cut value and cut edge sample queries can also be used to construct hypergraph {\em spectral sparsifiers} in $\poly(n)$ time, independent of the number of hyperedges.

We complement the algorithmic results above by showing that any algorithm that has access to only one of the above two types of queries can not give a hypergraph cut sparsifier in time that is polynomial in $n$. Finally, we show that our algorithmic results also hold if we replace the cut edge sample queries with a {\em pair neighbor sample query} that for any pair of vertices, returns a random edge incident on them. In contrast, we show that having access only to cut value queries and queries that return a random edge incident on a given single vertex, is not sufficient.
    
\end{abstract}

\section{Introduction}

In many applications, the underlying graphs are too large to fit in the main memory, and one typically builds a compressed representation that preserves relevant properties of the graph. 
Cuts in graphs are a fundamental object of study, and play a central role in the study of graph algorithms. Consequently, the problem of {\em sparsifying} a graph while approximately preserving its cut structure has been extensively studied (see, for instance,~\cite{Karger:1993:GMR:313559.313605, benczur1996approximating,Karger99, SpielmanT04, AhnG09, AhnGM12b,GoelKP12, BatsonSS12, AhnGM13, LeeS17, KapralovLMMS17, BansalST19, KapralovMMMNST20}, and references therein).
A cut-preserving sparsifier not only reduces the space requirement for any computation, but it can also reduce the time complexity of solving many fundamental cut, flow, and matching problems as one can now run the algorithms on the sparsifier which may contain far fewer edges.
In a seminal work, Benczúr and Karger~\cite{benczur1996approximating} showed that given any $n$-vertex undirected weighted graph $G$ and a parameter $\eps \in (0,1)$, there is a near-linear time algorithm that outputs a weighted subgraph $G'$ of $G$ of size $\tilde{O}(n/\eps^2)$ such that the weight of every cut in $G$ is preserved to within a multiplicative $(1 \pm \eps)$-factor in $G'$. The graph $G'$ is referred to as the {\em $(1 \pm \eps)$-approximate cut sparsifier} of $G$.

In this work, we consider the problem of cut sparsification for hypergraphs. A hypergraph $H(V,E)$ consists of a vertex set $V$ and a set $E$ of hyperedges where each edge $e \in E$ is a subset of vertices. The {\em }rank of a hypergraph is the size of the largest edge in the hypergraph, that is, $\max_{e \in E}\card{e}$. Hypergraphs are a natural generalization of graphs and many applications require estimating cuts in hypergraphs (see, for instance, ~\cite{CatalyurekA99,CatalyurekBDBHR09,HuangLM09,YamaguchiOTI15}).
Note that unlike graphs, an $n$-vertex hypergraph may contain exponentially many (in $n$) hyperedges. It is thus natural to ask if cut-preserving sparsifiers in the spirit of graph sparsifiers can also be created for hypergraphs as this would allow algorithmic applications to work with hypergraphs whose size is polynomially bounded in $n$.  

Kogan and Krauthgamer~\cite{kogan2015sketching} initiated a study of this basic question and showed that given any weighted hypergraph $H$, there is an $O(mn^2)$ time algorithm to find a $(1 \pm \eps)$-approximate cut sparsifier of $H$ of size $\tilde{O}(\frac{nr}{\eps^2})$ where $r$ denotes the rank of the hypergraph. Similar to the case of graphs, the {\em size} of a hypergraph sparsifier refers to the number of edges in the sparsifier. Since $r$ can be as large as $n$, in general, this gives a hypergraph cut sparsifier of size $\tilde{O}(n^2/\eps^2)$, which is a factor of $n$ larger than the Benczúr-Karger bound for graphs. 
Chekuri and Xu~\cite{Chekuri018} designed a more efficient algorithm for building a hypergraph sparsifier. They gave a near-linear time algorithm in the total representation size (sum of the sizes of all hyperedges) to construct a hypergraph sparsifier of size $\tilde{O}(nr^2/\eps^2)$ in hypergraphs of rank $r$, thus speeding up the run-time obtained in the work of Kogan and Krauthgamer~\cite{kogan2015sketching} by at least a factor of $n$, but at the expense of an increased sparsifier size. Until recently, it was an open question if the Benczúr-Karger bound is also achievable on hypergraphs, that is, do there exist hypergraph sparsifiers with $\tilde{O}(n/\eps^2)$ edges. In a very recent work~\cite{CKN20}, we were able to resolve this question in the affirmative by giving a $\tilde{O}(mn + n^{10}/\eps^7)$ time algorithm for creating hypergraph sparsifiers of size $\tilde{O}(n/\eps^2)$.

All known results for creating $\poly(n)$ size hypergraph sparsifiers have at least one thing in common -- the running time of these algorithms has at least a linear dependence on both $n$ and $m$. All known algorithms are essentially based on sampling edges in proportion to their importance, and they primarily differ in how the importance of an edge is defined and computed.
A linear dependence on $n$ is unavoidable since the hypergraph size itself is $\Omega(n)$. However, since the number of hyperedges can be exponential in $n$, even a linear dependence on $m$ means that the running time of a sparsification algorithm can be exponentially large in $n$ in the worst-case. This motivates the following natural question: is there an algorithm for building a hypergraph sparsifier that runs in time that is polynomial in $n$? In other words, is there a sublinear time algorithm for creating hypergraph sparsifiers? 

In order to tackle this question, we need to first define our model for accessing the input hypergraph $H=(V,E)$. The most basic requirement is to have the ability to efficiently evaluate the size or weight of any cut in a given hypergraph. 
We assume here access to a {\em cut value oracle}, denoted as $\OVA$, which takes as input a cut $C = (S,\bar{S})$, returns the size of the cut $\card{\delta_H(S)}$. This is akin to the standard assumption in submodular function minimization, namely, the algorithm has an oracle access to the value of the submodular function on any set $S$ since the cut function is a submodular function. However, as it turns out, it is easy to show that the access to a cut value oracle is provably not sufficient to construct a sparsifier, regardless of the time allowed as this oracle can not differentiate between hypergraphs where all edges have size $2$ from hypergraphs where all edges have size $3$\footnote{For instance, the cut value oracle can not distinguish between a copy of $K_4$ and the hypergraph that contains all possible hyperedges of size $3$ on $4$ vertices. Note that this does not rule out the possibility of efficiently constructing a data structure/sketch that can be used to answer cut queries. Our focus in this paper, however, is on constructing sparsifiers, namely, sparse subgraphs of the original graph that preserve all cuts.}. So we also need a mechanism for accessing edges of the underlying graph. We thus introduce a second oracle, referred to as the {\em cut edge oracle}, denoted as $\OED$, which takes as input a cut $C = (S,\bar{S})$, returns a random edge crossing the cut. Given access to both these oracles, 
we are indeed able to solve the problem of hypergraph sparsification in polynomial time in $n$.

\begin{theorem} \label{thm:main}
    Suppose we are given an unweighted hypergraph $H=(V,E)$ that can be accessed using the oracles $\OVA$ and $\OED$. Then for any $0<\eps<1$, a $(1 \pm \eps)$-approximate sparsifier with $\tilde{O}(n/\eps^2)$ hyperedges can be constructed in $O(n^{10}/\eps^7)$ time, independent of the number of hyperedges.
\end{theorem}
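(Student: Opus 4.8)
I would follow the importance-sampling paradigm of Benczúr--Karger, organized into two phases tailored to the oracle model. In Phase~1, using only $\OVA$ and $\OED$, I would build an \emph{explicit} weighted sub-hypergraph $\hat H$ of $H$ with only $\poly(n,1/\eps)$ hyperedges that already preserves every cut of $H$ to within $(1\pm\eps/2)$. In Phase~2, I would forget the oracles and run the near-linear-size (non-oracle) hypergraph sparsification algorithm of~\cite{CKN20} on $\hat H$, obtaining a $(1\pm\eps/2)$-sparsifier of $\hat H$ with $\tilde{O}(n/\eps^2)$ hyperedges; composing the distortions gives a $(1\pm\eps)$-sparsifier of $H$, and since $\hat H$ has polynomially many edges, Phase~2 runs in $\tilde{O}(n\cdot\poly(n)+n^{10}/\eps^7)=O(n^{10}/\eps^7)$ time, which also absorbs Phase~1. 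So the real content is Phase~1.

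\textbf{Primitives from the oracles.} First I would record what $\OVA$ and $\OED$ give us beyond their literal definitions. The hypergraph cut function is submodular, so submodular function minimization with $\poly(n)$ calls to $\OVA$ computes a global minimum cut (and, fixing a source/sink pair, any $s$--$t$ minimum cut); iterating this yields, for any threshold $\tau$, the laminar partition of $V$ into maximal $\tau$-edge-connected clusters. Next, by inclusion--exclusion over how a hyperedge distributes among the blocks of a constant-size partition of $V$, $\OVA$ lets us \emph{count} hyperedges with any fixed incidence pattern --- e.g.\ the number contained in a prescribed set $U$, or contained in $U$ and crossing a given bipartition of $U$ --- which in particular simulates cut-value queries on every induced sub-hypergraph $H[U]$. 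Finally, pairing these counts with $\OED$ and rejection sampling lets us draw near-uniform hyperedges from any such pattern class; making the acceptance probability non-negligible is the delicate point, addressed below.

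\textbf{Constructing $\hat H$.} Fix $\tau=\poly(n,1/\eps)$ large enough that a Chernoff bound over all $2^{|U|}$ cuts of a cluster $U$, compounded over the (polynomially bounded) recursion depth, costs only a $(1\pm\eps/2)$ factor in total. Recursively: given the current (re-scaled) collection of hyperedges, compute the maximal $\tau$-edge-connected decomposition $V=V_1\sqcup\cdots\sqcup V_t$. Fewer than $\tau t\le\tau n$ hyperedges fail to lie inside some single $V_j$, so by coupon-collecting we can extract \emph{all} of them exactly --- query $\OED$ on the cuts induced by the decomposition until the number of distinct hyperedges seen equals the total reported by $\OVA$ --- and deposit each into $\hat H$ with its current weight. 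For the hyperedges inside a cluster $V_j$, note that $H[V_j]$ has minimum cut $\ge\tau$; since, for any cut of $H$, the set of its crossing hyperedges that lie inside $V_j$ is exactly a cut of $H[V_j]$ and hence has size $0$ or $\ge\tau$, uniformly subsampling those hyperedges with probability $1/2$ and doubling their weight preserves $V_j$'s contribution to every cut of $H$ to within $(1\pm\eps')$ w.h.p.; then recurse on each subsampled cluster. The depth is polynomial because a fixed cluster's edge count (at most $2^n$) halves each round and the splitting tree has depth $\le n$; a Benczúr--Karger-style accounting over levels then gives the overall $(1\pm\eps/2)$ guarantee, and the total number of hyperedges ever deposited into $\hat H$ is $\poly(n,1/\eps)$.

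\textbf{Where the difficulty lies.} The step I expect to be hardest is the \emph{interior} sampling used to subsample within a cluster: $H$ is fixed, so $\OED$ only ever returns a uniform crossing hyperedge of a cut \emph{of all of $H$}, whereas I need near-uniform samples of hyperedges confined to $V_j$ that cross a bipartition of $V_j$. Naive rejection against $\OED$ on the corresponding cut of $H$ can accept with exponentially small probability when $V_j$ has a large boundary in $H$; the remedy is to use $\tau$-edge-connectivity of $V_j$ to lower-bound the ``useful'' fraction of crossing hyperedges and, when that is still not enough, to refine the target bipartition so that the conditional distribution is recoverable from $\poly(n)$ many $\OED$ calls. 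This is also where one must keep the rank of $H$ out of the final size bound: large hyperedges sitting inside well-connected clusters have to be handled via the refined sampling scheme of~\cite{CKN20} rather than plain strength-proportional sampling, so that $\hat H$ --- and hence the output --- has size depending only on $n$ and $\eps$. Everything after that is bookkeeping: fixing the exponents in $\tau$ and $\eps'$ so all union bounds go through, bounding the coupon-collector query counts, and composing the two phases to reach the claimed $O(n^{10}/\eps^7)$ running time and $\tilde{O}(n/\eps^2)$ edges.
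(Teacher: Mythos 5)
Your two-phase architecture --- build a $\poly(n)$-size intermediate sparsifier with the oracles, then run the algorithm of~\cite{CKN20} on it to get down to $\tilde{O}(n/\eps^2)$ edges --- matches the paper exactly. But the core primitive on which your Phase~1 rests is false: you cannot, via inclusion--exclusion on $\OVA$, count hyperedges confined to a prescribed set $U$, nor simulate cut-value queries on an induced sub-hypergraph $H[U]$. For a partition of $V$ into blocks $A,B,C$, the cut values $|\delta(A)|, |\delta(B)|, |\delta(C)|, |\delta(A\cup B)|,\ldots$ yield only three independent linear constraints on the four unknowns $n_{AB}, n_{AC}, n_{BC}, n_{ABC}$ (where $n_T$ counts hyperedges whose set of intersected blocks is exactly $T$), so $n_{AB}$ --- the induced cut size between $A$ and $B$ inside $H[A\cup B]$ --- is simply not determined by $\OVA$. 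The paper makes this concrete: take three equal-size sets $A,B,C$ and let every hyperedge meet all three; then $H[A\cup B]$ has \emph{no} edges, yet $\tfrac12(|\delta(A)|+|\delta(B)|-|\delta(A\cup B)|)$ equals half the total (possibly exponential) number of hyperedges. Even more starkly, $K_4$ and the $3$-uniform hypergraph on four vertices answer every $\OVA$ query identically but have induced subgraphs with different minimum cuts, and this can be amplified to a $0$ versus $\Omega(n)$ gap. Consequently you cannot compute the $\tau$-edge-connected decomposition that drives your recursion, and this is an information-theoretic obstruction, not a rejection-sampling efficiency issue.

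The missing idea is the paper's notion of \emph{pseudo cut} $\Delta_X(S)=\tfrac12(|\delta(S)|+|\delta(X\setminus S)|-|\delta(X)|)$: it is computable from $\OVA$, it is submodular (so pseudo min-cuts are found by submodular minimization), and although it can wildly over-count the true induced cut, the resulting \emph{pseudo-strength} $k'_e$ is provably at most $n\,k_e$. Recursively splitting along pseudo min-cuts and sampling $\Theta(n^3/\eps^2)$ edges per cut therefore over-samples each edge relative to Kogan--Krauthgamer by at most a controlled $\poly(n)$ factor, yielding a polynomial-size $(1\pm\eps)$-sparsifier that CKN20 then compresses. Your ``interior sampling'' worry is real but secondary; without pseudo cuts your recursion cannot even identify where to split, so the plan as written does not go through.
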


At a high-level, graph and hypergraph sparsification algorithms work by estimating the importance of each edge in preserving cut sizes, and then sampling edges with probability proportional to their importance and assigning them an appropriately scaled weight.  
The main technical challenge in proving the above theorem is that the cut value oracle on the original graph cannot be used to estimate cut sizes in the vertex-induced subgraphs of the original graph -- a step that is implicit in determining importance of edges in preserving the cut structure. This issue does not arise in normal graphs where each edge contains $2$ vertices, and the cut value oracle on the original graph indeed suffices to recover cut values in any induced subgraph. But once we consider hypergraphs with even edges of size $3$, it is easy to show that the cut value oracle on the original graph can not distinguish between induced subgraphs that have minimum cut value $0$ and induced subgraphs where the minimum cut value is polynomially large. We refer the reader to Section~\ref{sec:OED} for a more detailed discussion of this. We get around this issue by introducing for any subset of vertices $X$, a weaker notion of {\em pseudo cut size} for approximating cut sizes in the subgraph induced by $X$. The new cut size function remains submodular, and we show that it suffices to approximate the importance of each edge to within a factor $n$ of its true importance. We then use the $\OED$ oracle to sample edges in accordance with their approximate importance. The resulting sparsifier $H'$ has $\poly(n)$ edges which we further sparsify to $\tilde{O}(n/\eps^2)$ edges in $\poly(n)$ time by applying the result of~\cite{CKN20} to $H'$.

We complement the algorithmic result above by showing that just like the oracle $\OVA$ alone is not sufficient to achieve the result above, the oracle $\OED$ alone is also not sufficient to create a $\poly(n)$ size hypergraph sparsifier in $\poly(n)$ time.

\begin{theorem}
\label{thm:OED_LB}
	There is no polynomial time randomized algorithm that can use $\OED$ queries alone to construct a $(1\pm \eps)$-approximate sparsifier of an underlying hypergraph $H$ with probability better than $o(1)$.
\end{theorem}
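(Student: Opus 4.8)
The plan is to prove a query lower bound: any (possibly randomized) algorithm making $q = \poly(n)$ queries to $\OED$ succeeds in outputting a valid $(1\pm\eps)$-sparsifier with probability only $o(1)$. By Yao's minimax principle it suffices to produce a distribution over inputs on which every deterministic $q$-query algorithm has success probability $o(1)$; the worst-case statement for randomized algorithms then follows by averaging.

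I would draw the input hypergraph $H = H_\sigma$ as a function of a \emph{secret} $\sigma$ chosen uniformly from a set of super-polynomial size $K$ (think $K = 2^{\Theta(n)}$), together with public randomness. The hypergraph has a \emph{noise} part $N$, whose distribution does not depend on $\sigma$ and which contains super-polynomially many hyperedges arranged so that \emph{every} cut $(S,\bar S)$ has $\card{\delta_N(S)}$ super-polynomially large, and a \emph{planted} part $P_\sigma$, consisting of super-polynomially many hyperedges tied to one particular cut $C_\sigma$. The calibration is that $P_\sigma$ bumps the value of the cuts near $C_\sigma$ up by more than a $(1\pm\eps)$ factor relative to $\delta_N$, so that any valid sparsifier of $H_\sigma$ must somehow match this $\sigma$-dependent distortion, while on every cut that is \emph{not} close to $C_\sigma$ the planted edges are only an $n^{-\omega(1)}$ fraction of the crossing edges, so that a uniformly random crossing edge returned by $\OED$ is a noise edge except with negligible probability.

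The argument then has two halves. \emph{Information hiding:} since the secret influences the answer to a query $(S,\bar S)$ only through the planted edges crossing that cut, and those are a vanishing fraction unless $S$ is close to $C_\sigma$, a hybrid/coupling argument over the $q$ adaptive queries shows the transcript is $o(1)$-close in total variation to a distribution independent of $\sigma$ — conditioned on the event that the algorithm never queries a cut near $C_\sigma$; and since on a $\sigma$-oblivious transcript the queries are $\sigma$-independent while $C_\sigma$ is information-theoretically hidden among $K = n^{\omega(1)}$ possibilities, that bad event has probability $O(q/K) = o(1)$. \emph{Output incompatibility:} a valid sparsifier $H'$ of $H_\sigma$ must have $\delta_{H'}(\cdot)$ within a $(1\pm\eps)$ factor of $\delta_N(\cdot)+\delta_{P_\sigma}(\cdot)$, a function that, because of the bump near $C_\sigma$, essentially encodes $C_\sigma$; hence a fixed $H'$ can be a valid sparsifier of $H_\sigma$ for at most a negligible fraction of secrets $\sigma$. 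Combining the two: the algorithm's output is, up to $o(1)$, a $\sigma$-oblivious random variable, so it is a valid sparsifier for the true $\sigma$ with probability at most $1/K + o(1) = o(1)$, which is the claim.

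The technical heart — and the step I expect to be the main obstacle — is constructing $N$ and $P_\sigma$ to meet these competing demands simultaneously: the planted edges must be numerous and \emph{$C_\sigma$-correlated} enough to move $\delta_{H_\sigma}$ out of the $(1\pm\eps)$ band in the neighborhood of $C_\sigma$, yet on every other cut (in particular every cut the algorithm can explicitly name, such as singletons) they must be swamped by noise, and neither $N$ nor the rarely-sampled planted edges may leak $C_\sigma$ through, say, the intersection/union pattern of the hyperedges the algorithm collects. Producing edge families that are heavily aligned with one cut but essentially uncorrelated with all others, while keeping each vertex's local edge distribution identical across secrets, is exactly where the rank being at least $3$ is essential (it is impossible for ordinary graphs, consistent with $\OED$ presumably sufficing there), and making all of this quantitative is where most of the work lies. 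I would also need to pin down the oracle's behavior on a cut with no crossing edges and ensure $C_\sigma$ is never empty.
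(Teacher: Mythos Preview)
Your framework (Yao, hidden secret, coupling over adaptive queries, output-incompatibility) is sound in the abstract, but the construction is left open, and the specific scheme you sketch carries a structural tension you have not shown how to resolve. You want $P_\sigma$ to (i) raise $\delta_H(C_\sigma)$ by at least a $(1+\eps)$ factor, yet (ii) be an $n^{-\omega(1)}$ fraction of the crossing edges on every cut the algorithm can name. But every hyperedge crossing $C_\sigma$ also crosses $(\{v\},\overline{\{v\}})$ for each $v$ it contains, so by pigeonhole some singleton cut sees at least $|P_\sigma|/n \ge \eps\,|\delta_N(C_\sigma)|/n$ planted edges; for (ii) to hold there, $|\delta_N(\{v\})|$ must exceed $|\delta_N(C_\sigma)|$ by a super-polynomial factor. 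Thus $C_\sigma$ must already be a distinguished near-minimum cut of the $\sigma$-\emph{independent} noise $N$, so the family $\{C_\sigma\}_\sigma$ is a fixed set of candidate cuts known in advance. You would then need super-polynomially many such candidates, together with planted families that bump exactly one of them without spilling onto the others or onto singletons --- and you would still need the output-incompatibility step, i.e.\ that no single weighted sub-hypergraph can simultaneously match the bumped profile for more than a negligible fraction of $\sigma$'s. None of this is supplied. You correctly flag the construction as ``the main obstacle''; as written, the proposal is a plan for a plan.

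The paper sidesteps this tension by hiding not the \emph{identity} of a cut but the \emph{size} of a fixed one. There is a special vertex $v_0$; for every $S \subseteq \{v_1,\dots,v_n\}$ with $|S|\ge 2$ exactly one of $S$ or $S\cup\{v_0\}$ is an edge, with $v_0$ attached to $M$ uniformly random such sets. The two hard distributions take $M=2^{n/2}$ versus $M=2^{n/4}$, so $|\delta_H(\{v_0\})|=M$ and any $(1\pm\eps)$-sparsifier must tell them apart. But $\OED$ returns a uniform crossing edge, which cannot estimate a count: conditioned on the first $i-1$ answers, the distribution of the $i$th answer differs between the two cases by at most $O(i/2^{n/4})$ in total variation (on any cut other than $(\{v_0\},\cdot)$ one first draws the underlying set $S$ identically in both cases and only then checks whether $v_0$ is attached; on $(\{v_0\},\cdot)$ the answer is essentially a uniformly random set containing $v_0$ regardless of $M$). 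A hybrid over $k$ queries gives indistinguishability up to $k^2/2^{n/4}$, and $\log n$ independent copies boost the $\tfrac12-o(1)$ failure probability to $1-o(1)$. No ``focused yet invisible'' planted structure is required at all.
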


One may wonder if the oracle $\OED$ can be replaced with another access oracle that is used in sublinear algorithms for standard graphs, namely, ability to access the $i_{th}$ neighbor of a vertex $v$ for any integer $i$ that is at most the degree of $v$. It is easy to see that this is essentially same as the ability to access a random edge incident on a vertex $v$. 
We can generalize this idea to the setting of hypergraphs as follows. A neighbor query oracle in a hypergraph takes as input a set $S \subseteq V$, and returns a random edge that contains all vertices in $S$ if there is such an edge, and returns ${\rm NIL}$ if there is no edge. We say that a neighbor query is a {\em single vertex neighbor query} if $\card{S}=1$, and it is a {\em vertex pair neighbor query} if $\card{S}=2$. We denote the oracles that answer a single vertex neighbor query and a vertex pair neighbor query as $\ONONE$ and $\ONTWO$ respectively. We next show that the oracle $\OED$ can be replaced with the oracle $\ONTWO$, to obtain an alternate $\poly(n)$ time implementation of the result in Theorem~
\ref{thm:main}.

\begin{theorem} \label{thm:neighbor}
    Given an unweighted hypergraph $H=(V,E)$, suppose the algorithm can access the hypergraph using $\OVA$ and $\ONTWO$,
    then for any $0<\eps<1$, a $(1 \pm \eps)$-approximate sparsifier with $\tilde{O}(n/\eps^2)$ hyperedges can be constructed in $O(n^{10}/\eps^7)$ time in $n$, independent of the number of hyperedges.
\end{theorem}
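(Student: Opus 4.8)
The plan is to prove Theorem~\ref{thm:neighbor} by reducing it to Theorem~\ref{thm:main}: I will show that a single call to the cut edge oracle $\OED$ can be simulated, up to a $(1\pm 1/\poly(n))$ perturbation of its output distribution, using $\poly(n)$ calls to $\OVA$ and $\ONTWO$ plus $\poly(n)$ arithmetic. Since the algorithm behind Theorem~\ref{thm:main} uses $\OED$ only as a black box that returns a (near-)uniform hyperedge crossing a queried cut of $H$ (the final re-sparsification step of~\cite{CKN20} runs on the explicit $\poly(n)$-edge graph $H'$ and needs no oracle), substituting this simulation for each $\OED$ call yields an algorithm that accesses $H$ only through $\OVA$ and $\ONTWO$ and runs in $\poly(n)$ time. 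One harmless normalization up front: hyperedges of size $\le 1$ contribute $0$ to every cut, so they may be ignored, and $\ONTWO(\{u,v\})$ never returns such an edge, so the object we build is automatically a subgraph of $H$ preserving all cuts.

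To simulate $\OED(S,\bar S)$, I first recover for every $u\in S$, $v\in\bar S$ (an estimate of) the pair degree $d(u,v) := \card{\{e\in E : \{u,v\}\subseteq e\}}$. By inclusion--exclusion over the cuts $(\{u\},\cdot)$, $(\{v\},\cdot)$, $(\{u,v\},\cdot)$, and using that no edge is a singleton, one gets $\OVA(\{u\})+\OVA(\{v\})-\OVA(\{u,v\}) = d(u,v) + \mathbbm{1}[\{u,v\}\in E]$; call this quantity $\hat d(u,v)$, which therefore equals $d(u,v)$ up to an additive $\{0,1\}$ term, and all $O(n^2)$ relevant $\OVA$ values are precomputed once. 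The residual bit $\mathbbm{1}[\{u,v\}\in E]$ is genuinely invisible to cut queries, but $\ONTWO$ resolves it: if $\hat d(u,v)\le D$ for a threshold $D=\poly(n)$, then $\Theta(D\log n)$ calls to $\ONTWO(\{u,v\})$ reveal whether the two-element set $\{u,v\}$ is an edge, with failure probability $n^{-\Omega(1)}$; if $\hat d(u,v)>D$ we guess the bit is $0$, incurring only a $(1+1/D)$ multiplicative distortion in that pair's weight. This is done once for all $\binom{n}{2}$ pairs, at total cost $\poly(n)$.

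The core of the simulation is rejection sampling. With the (corrected) weights $\hat d(u,v)$ and $\hat Z := \sum_{u\in S,\,v\in\bar S}\hat d(u,v)$ in hand, I repeat: draw a crossing pair $(u,v)$ with probability $\hat d(u,v)/\hat Z$; call $\ONTWO(\{u,v\})$ to obtain a uniformly random edge $e\supseteq\{u,v\}$ — every such $e$ automatically crosses $(S,\bar S)$ — and then accept $e$ with probability $\tfrac{1}{p(e)}\cdot\tfrac{d(u,v)}{\hat d(u,v)}\le 1$, where $p(e):=\card{e\cap S}\cdot\card{e\cap\bar S}$ is the number of crossing pairs inside $e$, computed from the returned edge. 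A one-line telescoping computation shows that in each round every crossing edge $e$ is output with the same probability $1/\hat Z$ (exactly on pairs where the residual bit was resolved, and within a $(1\pm 1/D)$ factor otherwise), since $e$ is reached through exactly $p(e)$ pairs $(u,v)\in(e\cap S)\times(e\cap\bar S)$, each contributing $\tfrac{\hat d(u,v)}{\hat Z}\cdot\tfrac{1}{d(u,v)}\cdot\tfrac{d(u,v)}{\hat d(u,v)p(e)}=\tfrac{1}{\hat Z p(e)}$; hence conditioned on acceptance the output is $(1\pm 1/\poly(n))$-uniform over $\delta_H(S)$. For efficiency, $\hat Z = \sum_{e\in\delta_H(S)}p(e) + \#\{\text{crossing size-}2\text{ edges}\}\le 2\sum_{e\in\delta_H(S)}p(e)\le \tfrac{n^2}{2}\card{\delta_H(S)}$, so each round accepts with probability at least $2/n^2$ and $O(n^2\log n)$ rounds suffice w.h.p.; an empty cut is caught by a single $\OVA(S)$ query and returns $\mathrm{NIL}$.

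Putting the pieces together: each $\OED$ call is simulated with $\poly(n)$ work and $\poly(n)$ queries to $\OVA,\ONTWO$, on top of the one-time $O(n^2)$-query precomputation of degrees and pair cuts and the one-time $\poly(n)$-query resolution of the residual bits, so replacing every $\OED$ invocation in the algorithm of Theorem~\ref{thm:main} and charging the overhead against its $O(n^{10}/\eps^7)$ running time proves Theorem~\ref{thm:neighbor}. I expect the main obstacle to be exactly the handling of the residual bit $\mathbbm{1}[\{u,v\}\in E]$ — namely the fact that cut-value queries cannot even in principle determine pair degrees exactly — together with the bookkeeping needed to certify that the resulting $(1\pm1/\poly(n))$ perturbation of the $\OED$ distribution is absorbed without loss by the sampling-concentration analysis underlying Theorem~\ref{thm:main}; the rejection-sampling efficiency bound, by contrast, is the easy part.
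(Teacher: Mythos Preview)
Your proposal is correct and follows essentially the same approach as the paper: simulate each $\OED$ call by rejection sampling over crossing pairs $(u,v)$ weighted by (an estimate of) $\card{E(\{u,v\})}$, draw an edge via $\ONTWO$, accept with probability $1/(|e\cap S|\cdot|e\cap\bar S|)$, and then plug this simulated oracle into the algorithm of Theorem~\ref{thm:main}. The only difference is how $\card{E(\{u,v\})}$ is estimated---the paper runs Monte Carlo on the fraction of $\ONTWO(\{u,v\})$ returns equal to the edge $\{u,v\}$ and divides $2\Delta_{\{u,v\}}(\{u\})$ by $1+\hat\alpha$, whereas you threshold on $\hat d$ and resolve the residual bit exactly for small degrees; note that your acceptance correction $d(u,v)/\hat d(u,v)$ is either identically $1$ (bit resolved, so $\hat d=d$) or unknowable (bit unresolved), so you should simply drop it and let the $(1\pm 1/D)$ distortion pass through, which you already acknowledge.
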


In contrast to the result above, we show any algorithm that has access only to oracles $\OVA$ and $\ONONE$, requires exponentially many queries in the worst-case to construct a $\poly(n)$ size sparsifier.

\begin{theorem}
\label{thm:OVA_LB}
	There is no polynomial time randomized algorithm that can use $\OVA$ and $\ONONE$ queries alone to construct a $(1\pm \eps)$-approximate sparsifier of an underlying hypergraph $H$ with probability better than $o(1)$.
\end{theorem}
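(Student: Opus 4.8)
The plan is to prove the claim via Yao's minimax principle: I design a distribution over (unweighted) hard instances and argue that every \emph{deterministic} algorithm issuing $\poly(n)$ queries to $\OVA$ and $\ONONE$ fails, with probability $1-o(1)$ over the instance and the oracle's internal randomness, to output a valid $(1\pm\eps)$-cut sparsifier; this immediately upgrades to the stated bound against randomized algorithms. The instance is specified by a uniformly random hidden string $\vec c \in \{A,B\}^{k}$ with $k:=\lceil \log n\rceil$, and the whole point of the construction is that neither oracle leaks any information about $\vec c$.

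The construction combines two ingredients. (i) A \emph{cut-equivalent pair of gadgets} on $4$ vertices: configuration $A$ is $K_4$ (all six size-$2$ edges) and configuration $B$ is the complete $3$-uniform hypergraph on those $4$ vertices (all four size-$3$ edges). A direct check shows these two hypergraphs induce the \emph{same} cut function on their $4$ vertices (the crossing count depends only on the side-sizes, and matches for every split), and both give each of the $4$ vertices degree exactly $3$. (ii) A layout of $k$ vertex-disjoint \emph{dumbbells}: dumbbell $i$ consists of two ``clusters'' $P_i^{L},P_i^{R}$, each of $\Theta(n/k)$ vertices, inside each of which I place a super-dense unweighted hypergraph $G_0$ (e.g.\ all $\lfloor |P_i^{L}|/2\rfloor$-subsets), with no edge crossing between the two clusters; I then add a gadget $Q_i$ on two vertices of $P_i^{L}$ and two of $P_i^{R}$, installed in configuration $c_i$. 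Call the resulting hypergraph $H_{\vec c}$. Two observations drive the lower bound. First, since each gadget configuration is cut-equivalent to the other, $\delta_{H_{\vec c}}(S)$ is the \emph{same value} for every $\vec c$, so $\OVA$'s answers carry zero information about $\vec c$ (it can even be used to reconstruct $G_0$ entirely -- which is irrelevant). Second, every gadget vertex lies in $2^{\Omega(n/k)}$ edges of $G_0$ (and in only $3$ gadget edges, in either configuration), so a single call to $\ONONE$ at any vertex returns a gadget edge with probability at most $3\cdot 2^{-\Omega(n/k)}$; over $\poly(n)$ queries the algorithm therefore never sees a gadget edge with probability $1-o(1)$, and conditioned on that event the entire transcript -- hence the output $H'$ -- is distributed independently of $\vec c$ (by a routine induction on the queries, using that $\OVA$'s answers are $\vec c$-independent and that $\ONONE$'s answer, conditioned on not being a gadget edge, is uniform over the $\vec c$-independent set of $G_0$ edges at the queried vertex).

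It remains to argue that an output independent of $\vec c$ cannot be a valid sparsifier except with probability $2^{-k}\le 1/n$. For each $i$, consider the cut $(P_i^{L},V\setminus P_i^{L})$: no $G_0$ edge and no edge of any other dumbbell crosses it, and exactly the four gadget edges of $Q_i$ with an endpoint on each side cross it -- $4$ of them in \emph{either} configuration -- so $\delta_{H_{\vec c}}(P_i^{L})=4$. Consequently any $(1\pm\eps)$-sparsifier $H'$ must contain at least one such $Q_i$ crossing edge with positive weight; being a sub-hypergraph of $H_{\vec c}$, every $Q_i$ edge of $H'$ is a configuration-$c_i$ edge, and since the size-$2$ and size-$3$ gadget edges are disjoint, the mere presence of this edge in $H'$ determines $c_i$. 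As $H'$ is independent of $\vec c$ and the bits $c_i$ are independent and uniform, the probability that $H'$ is consistent with all $k$ gadgets is at most $2^{-k}$; combining with the $1-o(1)$ probability of seeing no gadget edge yields overall success probability $o(1)$. (One should also remark, e.g.\ by applying the guarantee of~\cite{CKN20} to $G_0$ and adjoining the $O(k)$ gadget edges with unit weight, that $H_{\vec c}$ \emph{does} admit a $(1\pm\eps)$-sparsifier with $\tilde O(n/\eps^2)$ edges, so the obstruction is genuinely the inability to \emph{locate} the planted edges rather than nonexistence of a sparsifier.)

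The delicate point -- and where most of the care goes -- is making the two ingredients coexist. Dilution forces each gadget vertex to have super-polynomial degree in $G_0$, which forces the clusters to be large; but one simultaneously needs a whole family of cheap ``bottleneck'' cuts, one per gadget, so that a sparsifier is forced to commit \emph{edge by edge} to every $c_i$, not merely to $O(1)$ of them. (A single global bottleneck cut could be satisfied by one heavily reweighted gadget edge, which would only give a useless $\tfrac12$-type bound.) The disjoint-dumbbell layout with $k=\omega(1)$ clusters threads this needle; verifying it requires checking that \emph{no} cut cheaper than $(P_i^{L},V\setminus P_i^{L})$ ``isolates'' gadget $Q_i$ (any cut that splits a cluster nontrivially already has super-polynomial value, drowning out the $O(1)$ contribution of all gadgets combined), and that installing a gadget never perturbs $\OVA$'s answer on any cut whatsoever -- exact cut-equivalence of the two configurations, applied coordinate-by-coordinate in the sum over dumbbells.
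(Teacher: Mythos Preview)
Your proof is correct and follows essentially the same approach as the paper. Both use the cut-equivalent pair ($K_4$ versus the complete $3$-uniform hypergraph on four vertices) embedded into super-dense clusters so that $\ONONE$ almost never returns a gadget edge, and both amplify via $\Theta(\log n)$ independent gadgets; the only difference is presentational---the paper first proves a $1/2+o(1)$ bound with a single gadget and then amplifies by taking a disjoint union of $\log n$ copies, whereas you build the $\log n$ dumbbells directly into a single instance and obtain the $2^{-k}$ bound in one shot.
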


\noindent\textbf{Hypergraph Spectral Sparsification:} We also consider the problem of hypergraph spectral sparsification, a notion that strengthens cut sparsification. A $(1 \pm \eps)$-approximate {\em spectral sparsifier} of a graph $G(V,E)$ is a weighted graph $G'(V,E')$ such that for every vector $x \in \mathbb{R}^{n}$, we have 
$$  ~~~ |x^T L_{G'} ~x - x^T L_G ~x | \leq \eps (x^T L_G ~x) ,$$
where $L_G$ and $L_{G'}$  denote the Laplacian matrices of $G$ and $G'$, respectively. To see that the notion of spectral sparsifier only strengthens the notion of a cut sparsifier, observe that the cut sparsification requirement for any cut $(S, \bar{S})$ is captured by the definition above when we choose $x$ to be the $0/1$-indicator vector of the set $S$.
Batson, Spielman, and Srivastava~\cite{BatsonSS12} gave a polynomial-time algorithm that for every graph $G$, gives a weighted graph $G'$ with $O(n/\eps^2)$ edges such that $G'$ is a $(1 \pm \eps)$-approximate spectral sparsifier of $G$. Subsequently, Lee and Sun~\cite{LeeS17} gave an $O(m/\eps^{O(1)})$ time algorithm to construct a spectral graph sparsifier with $O(n/\eps^2)$ edges.

The notion of spectral sparsification can be extended to hypergraphs~\cite{Louis15,Yoshida19} as follows. The \emph{Laplacian} $L_H$ of a hypergraph $H$ is a function $\mathbb{R}^n \to \mathbb{R}^n$, such that for any $n$-dimensional vector $x$, we have
$$
x^TL_H(x) = \sum_{e \in E} w(e) \max_{u,v \in e}(x(u)-x(v))^2.
$$

Given a weighted hypergraph $H$ with $n$ vertices, a $(1 \pm \eps)$-spectral sparsifier $H'$ is a subgraph of $H$ such that for any $n$-dimensional vector $x$, we have

$$
(1-\eps)x^T L_{H'}(x) \le x^T L_H(x) \le (1+\eps)x^T L_{H'}(x).
$$

Soma and Yoshida~\cite{SomaY19} give a polynomial-time algorithm that outputs a weighted spectral sparsifier with $\tilde{O}(n^3)$ hyperedges. The algorithm of~\cite{SomaY19} is also based on sampling edges based on a suitable notion of importance, and in their work, the importance of a hyperedge $e$ is measured by $\min_{u,v \in e} \card{E(\{u,v\})}$ where $E(\{u,v\})$ is the set of edges that contains both $u$ and $v$.
If we assume access to the underlying hypergraph using $\ONTWO$ queries, then we can sample a random hyperedge in $E(\{u,v\})$ for any pair of vertices $u$ and $v$, which makes it in turn straightforward to simulate the algorithm of~\cite{SomaY19}.

\begin{theorem} \label{thm:spectral-neighbor}
    Given an unweighted hypergraph $H=(V,E)$, suppose the algorithm can access the hypergraph using $\OVA$ and $\ONTWO$ queries. Then for any $0<\eps<1$, a $(1 \pm \eps)$-spectral sparsifier with $\tilde{O}(n^3/\eps^2)$ hyperedges can be constructed in polynomial time in $n$, independent of the number of hyperedges.
\end{theorem}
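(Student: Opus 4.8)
The idea is to simulate, hyperedge by hyperedge, the spectral sparsification algorithm of Soma and Yoshida~\cite{SomaY19}, using $\OVA$ to recover the exact importance of every hyperedge and $\ONTWO$ to draw the hyperedges that survive; since we will reproduce its distribution over weighted subgraphs verbatim, its correctness guarantee transfers with no further work. Recall that their algorithm keeps each hyperedge $e$ independently with a probability $p_e$ that, by their description, depends only on $n$, $\eps$, and the importance $\iota(e):=\min_{u,v\in e}\card{E(\{u,v\})}$ — concretely $p_e=\min\{1,\alpha/\iota(e)\}$ for an appropriate parameter $\alpha=\tilde{O}(n/\eps^2)$ — and assigns weight $1/p_e$ to the kept hyperedges; the output is with high probability a $(1\pm\eps)$-spectral sparsifier, and its expected size is $\sum_e p_e\le \alpha\sum_e\max_{u,v\in e}\tfrac1{\card{E(\{u,v\})}}\le\alpha\binom n2=\tilde{O}(n^3/\eps^2)$, where the middle inequality bounds $\max$ by $\sum_{\{u,v\}\subseteq e}$ and then swaps the order of summation so that each pair contributes $1$. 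It therefore suffices to implement this process in $\poly(n)$ time with the two oracles. We may assume every hyperedge has size at least $2$, since singleton and empty edges contribute nothing to $x^TL_H(x)$.

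\noindent\textbf{Recovering $\card{E(\{u,v\})}$ from $\OVA$.} Write $\deg(u)$ for the number of hyperedges containing $u$ and $d_{uv}:=\card{E(\{u,v\})}$. Since there are no singleton edges, $\OVA(\{u\})=\deg(u)$; and splitting the hyperedges meeting $\{u,v\}$ into those hitting only $u$, only $v$, or both, an inclusion--exclusion count gives $\OVA(\{u,v\})=\deg(u)+\deg(v)-d_{uv}-\mathbbm{1}[\{u,v\}\in E]$, hence
\[
 d_{uv}=\OVA(\{u\})+\OVA(\{v\})-\OVA(\{u,v\})-\mathbbm{1}[\{u,v\}\in E].
\]
This costs $O(n^2)$ cut-value queries in total. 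The correction term $\mathbbm{1}[\{u,v\}\in E]$ perturbs $d_{uv}$ by at most $1$, which is immaterial when $d_{uv}>\alpha$ (it changes $p_e$ by a $1\pm o(1/n)$ factor, and one may always use $\alpha{+}1$ in place of $\alpha$ to stay on the safe side) and is recovered exactly when $d_{uv}$ is small as a by-product of the enumeration step below. Given the table $(d_{uv})$, for any hyperedge $e$ we can compute $\iota(e)$, hence $p_e$; fix a deterministic tie-break and let $r(e)$ be the lexicographically smallest pair $\{u,v\}\subseteq e$ with $d_{uv}=\iota(e)$.

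\noindent\textbf{Drawing the surviving hyperedges with $\ONTWO$.} The classes $C_{\{u,v\}}:=\{e\in E:r(e)=\{u,v\}\}$ partition $E$, and every $e\in C_{\{u,v\}}$ has the same keep-probability $p_{uv}:=\min\{1,\alpha/d_{uv}\}$, so it suffices, for each pair, to keep each member of $C_{\{u,v\}}$ independently with probability $p_{uv}$. Fix a large polynomial threshold $B=B(n)$. If $d_{uv}\le B$ (in particular whenever $p_{uv}=1$), we enumerate $E(\{u,v\})$ by calling $\ONTWO(\{u,v\})$ until no new hyperedge has appeared in the last $\tilde{O}(d_{uv})$ calls; a standard coupon-collector bound shows this recovers all of $E(\{u,v\})$ with high probability in $\poly(n)$ queries (and reveals $\mathbbm{1}[\{u,v\}\in E]$), after which we flip an independent $p_{uv}$-coin for each $e\in E(\{u,v\})$ with $r(e)=\{u,v\}$. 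If $d_{uv}>B$, we draw $K\sim\mathrm{Binomial}(d_{uv},p_{uv})$ — we know $d_{uv}$ and $p_{uv}$ — and realize a uniformly random $K$-subset of $E(\{u,v\})$ by taking $K$ i.i.d.\ samples from $\ONTWO(\{u,v\})$; since $K=\tilde{O}(\alpha)$ with high probability while $d_{uv}>B$, the samples are distinct with high probability, and on the rare event of a collision we resample. In all cases a drawn hyperedge $e$ is added to the sparsifier with weight $1/p_{uv}$ if $r(e)=\{u,v\}$ and discarded otherwise; a hyperedge drawn while processing a pair other than its representative is always discarded, so each $e\in E$ is kept iff its independent $p_{r(e)}$-coin comes up heads. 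Hence each $e$ survives independently with probability exactly $p_e$ and weight $1/p_e$, which is precisely the distribution of~\cite{SomaY19}; a union bound over the $\binom n2$ pairs keeps the total failure probability small, and the running time is $\poly(n)$ ($O(n^2)$ calls to $\OVA$, $\poly(n)$ calls to $\ONTWO$ per pair, and $\poly(n)$ work to evaluate $r(\cdot)$ on the $\poly(n)$ drawn hyperedges). The output is therefore, with high probability, a $(1\pm\eps)$-spectral sparsifier with $\tilde{O}(n^3/\eps^2)$ hyperedges.

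\noindent\textbf{Main obstacle.} The only conceptual step is the identity $d_{uv}=\OVA(\{u\})+\OVA(\{v\})-\OVA(\{u,v\})-\mathbbm{1}[\{u,v\}\in E]$, which turns the cut-value oracle into an exact evaluator of the Soma--Yoshida importances; everything after that is routine. The main delicacy is the bookkeeping in the last step — matching ``keep each member of $C_{\{u,v\}}$ with probability $p_{uv}$'' to calls of $\ONTWO$ without replacement, and enumerating the small classes via coupon collecting — but this introduces only negligible, union-boundable error and no new ideas.
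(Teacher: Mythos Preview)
Your approach is correct and shares the paper's high-level strategy: simulate the Soma--Yoshida sampler by drawing edges with $\ONTWO$ and recovering $\card{E(\{u,v\})}$ via the cut oracle (your inclusion--exclusion identity is exactly the paper's $2\Delta_{\{u,v\}}(\{u\})$, and the paper's Algorithm~\ref{alg:pair} adds only a Monte Carlo step with $\ONTWO$ to estimate the $\mathbbm{1}[\{u,v\}\in E]$ correction you flag). The implementations diverge in the sampling step. The paper simply makes $Cn\log n/\eps^2$ calls to $\ONTWO(\{u,v\})$ for \emph{every} pair $(u,v)$, so each hyperedge $e$ ends up sampled with some probability $q_e\ge p_e$; it then computes $q_e$ for each sampled edge from the table of approximate $\card{E(\{u,v\})}$ values and assigns weight $1/q_e$, invoking the oversampling-robust form of the Soma--Yoshida guarantee (the remark following Theorem~\ref{thm:spectral-S19}). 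Your representative-pair partition, Binomial draw, and coupon-collector enumeration are all devoted to reproducing the Soma--Yoshida distribution \emph{exactly} rather than oversampling --- extra machinery with no real payoff, since the total number of $\ONTWO$ calls is $\tilde{O}(n^3/\eps^2)$ either way and the oversampling lemma absorbs constant-factor slack for free. One small slip: when $d_{uv}$ is known only to $\pm 1$ and you draw $K\sim\mathrm{Binomial}(d'_{uv},p_{uv})$ followed by a uniform $K$-subset of $E(\{u,v\})$, the result is no longer exactly the independent-Bernoulli process (and $K$ could even exceed $\card{E(\{u,v\})}$); the discrepancy is negligible and easily patched, but your claim that the output distribution matches Soma--Yoshida ``verbatim'' is not quite right.
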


The more interesting case is when we can only access the hypergraph using $\OVA$ and $\OED$ queries. It is now provably impossible to get a handle on $\card{E(\{u,v\})}$ using only polynomially many queries, and thus there is no direct way to simulate the algorithm in~\cite{SomaY19}. Recently, Bansal, Svensson, and Trevisan~\cite{BansalST19} designed another hypergraph spectral sparsification algorithm that in polynomial-time algorithm creates a weighted spectral sparsifier with $\tilde{O}(nr^3)$ hyperedges; here $r$ denotes the maximum arity of any hyperedge. Unlike the algorithm of~\cite{SomaY19}, their measure of importance of a hyperedge is derived from an auxiliary standard graph created by converting every hyperedge into a clique. The importance of a hyperedge is then given by the maximum effective resistance among all the edges in the clique associated with that hyperedge. We show that we can simulate this process using only $\poly(n)$ many $\OVA$ and $\OED$ queries.

\begin{theorem} \label{thm:spectral-edge}
    Given an unweighted hypergraph $H=(V,E)$, suppose the algorithm can access the hypergraph using only $\OVA$ and $\OED$ queries. Then for any $0<\eps<1$, a $(1 \pm \eps)$-spectral sparsifier with $\tilde{O}(n^3/\eps^2)$ hyperedges can be constructed in polynomial time in $n$, independent of the number of hyperedges.
\end{theorem}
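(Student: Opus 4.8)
The plan is to simulate the hypergraph spectral sparsification algorithm of Bansal, Svensson, and Trevisan~\cite{BansalST19} using only $\poly(n)$ calls to $\OVA$ and $\OED$, and then to compress its output with an offline sparsification routine. Recall that their algorithm first forms the ordinary graph $\Gamma$ on $V$ obtained by replacing each hyperedge $e$ of $H$ with a clique on its vertices, each clique edge carrying weight $w(e)/(\card{e}-1)$; it then assigns to each hyperedge $e$ an importance $\tau_e$ equal, up to a fixed normalization, to $\max_{u,v\in e} R_{\mathrm{eff}}^{\Gamma}(u,v)$, the largest effective resistance in $\Gamma$ over pairs contained in $e$; and finally it keeps each hyperedge independently with probability $p_e=\min\{1,\rho\,\tau_e\}$ for $\rho=\poly(n)\,\log n/\eps^2$, rescaling each kept hyperedge by $1/p_e$. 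Their analysis shows the result is a $(1\pm\eps)$-spectral sparsifier of $H$ and that $\sum_e p_e=\Ot(nr^3/\eps^2)$. It therefore suffices to (i) extract enough of $\Gamma$ from the oracles to evaluate approximate importances $\tilde\tau_e$, (ii) realize the importance-proportional sampling of the hyperedges of $H$ with $\OED$, and (iii) shrink the resulting explicit $\poly(n)$-edge hypergraph to $\Ot(n^3/\eps^2)$ edges.

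For step (i), the starting point is that, although we have no oracle for $\Gamma$ itself, every cut value of $\Gamma$ is cheaply estimable: for any cut $(S,\bar S)$ we have $\mathrm{cut}_\Gamma(S)=\sum_{e\in\delta_H(S)}\frac{\card{e\cap S}\cdot\card{e\setminus S}}{\card{e}-1}=\card{\delta_H(S)}\cdot\E_{e}\!\left[\frac{\card{e\cap S}\cdot\card{e\setminus S}}{\card{e}-1}\right]$, where $e$ is drawn uniformly from $\delta_H(S)$, and each summand lies in $[1,n]$ --- only a $\poly(n)$-factor spread, since $\card{e\cap S}\cdot\card{e\setminus S}\ge\card{e}-1$. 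Hence one $\OVA$ call for $\card{\delta_H(S)}$ together with $\poly(n)/\eps^2$ samples from $\OED$ gives a $(1\pm\eps)$-multiplicative estimate of $\mathrm{cut}_\Gamma(S)$, and a rejection-sampling argument similarly produces, for any cut, a sample of a crossing edge of $\Gamma$ drawn exactly in proportion to its weight. Moreover $\mathrm{cut}_H(S)\le\mathrm{cut}_\Gamma(S)\le n\cdot\mathrm{cut}_H(S)$ for every $S$, so sparse regions of $\Gamma$ coincide with sparse regions of $H$ up to a $\poly(n)$ factor. Building on this, we run (a spectral variant of) the cut-query decomposition machinery developed in the proof of Theorem~\ref{thm:main}, but now on the graph $\Gamma$ accessed through these simulated queries, to compute estimates $\tilde\tau_e$ of the importances that are accurate to within a $(1\pm\eps)$ factor whenever $\tau_e\ge 1/\poly(n)$ and a harmless $\poly(n)$-factor over-estimate otherwise. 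Over-estimating importances never violates the matrix-Chernoff guarantee of~\cite{BansalST19}; it only inflates $\sum_e p_e$ by a $\poly(n)$ factor, so the number of sampled hyperedges stays $\poly(n)$.

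For step (ii), once the (approximate) importances are available the sampling of hyperedges is carried out exactly as in the cut-sparsifier construction of Theorem~\ref{thm:main}: hyperedges are bucketed into $O(\log n)$ classes by the magnitude of $\tilde\tau_e$, and for each class we identify a small family of cuts of $H$ whose crossing hyperedges cover that class and draw the required sub-sample with $\OED$; the $\poly(n)/\eps^2$ samples so obtained, appropriately reweighted, form an explicit weighted hypergraph $H_1$ on $V$ with $\poly(n)$ hyperedges that is a $(1\pm\eps)$-spectral sparsifier of $H$. For step (iii), since $H_1$ is explicit we may now run any polynomial-time offline hypergraph spectral sparsification algorithm on it; applying the algorithm of Soma and Yoshida~\cite{SomaY19} (the same routine invoked in Theorem~\ref{thm:spectral-neighbor}) to $H_1$ yields a $(1\pm\eps)$-spectral sparsifier $H'$ of $H_1$ with $\Ot(n^3/\eps^2)$ hyperedges. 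Composing the two $(1\pm\eps)$ guarantees and rescaling $\eps$ by a constant shows $H'$ is a $(1\pm\eps)$-spectral sparsifier of $H$, and the whole procedure runs in $\poly(n)$ time, independent of $m$, as required by Theorem~\ref{thm:spectral-edge}.

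The main obstacle is step (i), and specifically the estimation of effective resistances in $\Gamma$: cut values provably do not determine effective resistances, and our oracles are weak enough that they cannot even reliably test whether a given pair of vertices lies in a common hyperedge (cf. the discussion around Theorem~\ref{thm:OED_LB}), so $\Gamma$ cannot be reconstructed entry by entry. What makes the approach go through is a combination of two features. First, $H$ is unweighted, which forces all nonzero weights of $\Gamma$ --- hence all finite effective resistances, once normalized against a cut value --- to lie in $\poly(n)$-bounded ranges; this is exactly what keeps the cut-value estimator above low-variance and caps the crude over-estimate at a $\poly(n)$ factor. Second, the sampling scheme of~\cite{BansalST19} tolerates arbitrary over-estimates of the importances, so one only needs multiplicative accuracy in the regime where importances are polynomially bounded below, which is precisely the regime reachable through the simulated $\Gamma$-queries. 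Making this trade-off precise --- in particular, bounding the number of $\OED$ samples needed to approximate an effective resistance of $\Gamma$ to within a $(1\pm\eps)$ factor in this regime, and verifying that the recursive decomposition underlying Theorem~\ref{thm:main} degrades gracefully when fed a multiplicatively approximate cut oracle rather than an exact one --- is the technical heart of the proof.
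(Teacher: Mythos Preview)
Your proposal has a genuine gap at step~(i). You correctly flag the estimation of effective resistances in $\Gamma$ as ``the main obstacle'' and ``the technical heart of the proof,'' but you never actually close it: you invoke ``a spectral variant of the cut-query decomposition machinery'' from Theorem~\ref{thm:main} without saying what that variant is, and you yourself note that cut values do not determine effective resistances. The machinery of Theorem~\ref{thm:main} computes (pseudo-)strengths, not resistances, and there is no evident way to turn approximate cut-value and edge-sample access on $\Gamma$ into $(1\pm\eps)$-accurate effective-resistance estimates, even in the ``polynomially bounded below'' regime you describe. Step~(ii) inherits the same issue: you speak of bucketing hyperedges by $\tilde\tau_e$, but there may be exponentially many hyperedges, so the importances cannot be computed one at a time; you would need an implicit decomposition keyed to resistance, which you do not supply.

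The paper's proof sidesteps this obstacle entirely: it never estimates $r_e$. Instead it proves two short inequalities. First (Lemma~\ref{lem:resistance}), in any ordinary graph $r_f\le n/k_f$ for every edge $f$: the $u$--$v$ min-cut has value $\ge k_f$, so max-flow/min-cut gives $k_f$ edge-disjoint $u$--$v$ paths of length at most $n$, i.e.\ $k_f$ parallel resistors of resistance at most $n$, and Rayleigh monotonicity finishes. Second (Lemma~\ref{lem:kappa}), for any hyperedge $e$ the strength in $G_H$ of every clique edge $f\in F_e$ satisfies $k_f\ge k_e$, since any cut of $H[X]$ crossed by $e$ is crossed by some edge of $F_e$ in $G_H[X]$. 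Together these give $r_e\le n/\kappa_e\le n/k_e$, so the Bansal--Svensson--Trevisan threshold satisfies $p_e=\min\{1,Cr^4r_e\log n/\eps^2\}\le\min\{1,Cn^5\log n/(\eps^2 k_e)\}$. Hence the strength-based sampler of Section~\ref{sec:OED} (Algorithm~\ref{alg:pesudo}), run with an extra $n^4$ oversampling factor, already samples every hyperedge with probability exceeding the BST requirement --- no resistance estimation, no simulated access to $\Gamma$, no new machinery. Your step~(iii), applying Soma--Yoshida to the explicit $\poly(n)$-edge output to reach $\Ot(n^3/\eps^2)$ hyperedges, matches the paper.
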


The idea is to associate the strength of a hyperedge with the resistance of the edges inside the clique associated with this hyperedge. We first show that in any standard graph the effective resistance of an edge $f$ is at most $n/k_f$ where $k_f$ denote the strength of the edge $f$. We then show that the strength of a hyperedge $e$ is at most the strength of any edge in the clique associated with the hyperedge $e$. With this pair of relationships, we are able to simulate the algorithm in~\cite{BansalST19} by the algorithm in Theorem~\ref{thm:main}, except that we will sample somewhat larger number of hyperedges to meet the sampling probability requirement for the sparsification algorithm in~\cite{BansalST19}.

Note that unlike hypergraph cut sparsification which is now well-understood~\cite{CKN20}, the problem of determining the optimal size of a hypergraph spectral sparsifier is still open. However, note that any further improvements on the size of hypergraph spectral sparsifiers can be used in a black-box manner with our sparsification algorithms -- we can simply apply the improved sparsification algorithm to the sparsifier generated by our algorithm.

Finally, we note that since any spectral sparsifier is also a cut sparsifier, the lower bounds in Theorem~\ref{thm:OED_LB} and Theorem~\ref{thm:OVA_LB} also hold for spectral sparsifiers.

\medskip
\noindent
{\bf Extension to weighted hypergraphs:} All our algorithmic results stated thus far are for sparsifying unweighted hypergraphs. We note that all algorithms can be extended in a straightforward manner to sparsifying weighted hypergraphs assuming natural weighted versions of the access oracles used in proving these results. Specifically, it suffices to assume that the oracle $\OVA$ returns the {\em weight} of a cut, and the oracles $\OED$ and $\ONTWO$ return an edge with probability {\em proportional to its weight}.

\medskip
\noindent
{\bf Organization:} We set up our notation and state some useful background results in Section~\ref{sec:prelims}. 
In Section~\ref{sec:OED}, we give a $\poly(n)$ time algorithm for creating a $\tilde{O}(n/\eps^2)$ size sparsifier using the cut value and cut edge sample oracles, proving Theorem~\ref{thm:main}. In Section~\ref{sec:ONTWO}, we give a $\poly(n)$ time algorithm for creating a $\tilde{O}(n/\eps^2)$ size sparsifier using the cut value and the vertex pair neighbor oracles, proving Theorem~\ref{thm:neighbor}. 
Then in Section~\ref{sec:spectral}, we present $\poly(n)$ time algorithms for hypergraph spectral sparsification proving Theorems~\ref{thm:spectral-neighbor} and~\ref{thm:spectral-edge}.
We show in Section~\ref{sec:lowerbound} that any weakening of the oracles
assumed in Theorems~\ref{thm:main} and~\ref{thm:neighbor} necessarily requires worst-case exponential time for creating a $\poly(n)$ size hypergraph cut sparsifier, proving Theorems~\ref{thm:OED_LB} and~\ref{thm:OVA_LB}. In Section~\ref{sec:weighted}, we briefly describe how the results of Theorems~\ref{thm:main} and~\ref{thm:neighbor} can be extended to weighted hypergraphs. Finally, we conclude in Section~\ref{sec:conclusions}
with some directions for future work.

\section{Preliminaries} \label{sec:prelims}
\subsection{Notation}
Given an integer $n$ and a probability $p$, let $B(n,p)$ be the Bernoulli distribution with $n$ trials where each trial succeeds with probability $p$. Suppose a set has $n$ elements, and given a probability $p$. Then the following process will sample each element in the set with probability $p$: we first sample a number $N \sim B(n,p)$, then randomly sample $N$ elements in the set.

Given any weight function $w:S\rightarrow \mathbb{R}_{\geq 0}$, we extend it to also be a function on subsets of $S$ so that $w(S') = \sum_{e\in S'}w(e)$ for $S'\subseteq S$. 

Given a graph $H = (V,E)$ and a subset of vertices $V'\subseteq V$, we define $G[V']$ to be the weighted subgraph/subhypergraph of $G$ induced by the vertices in $V'$. We identify an edge/hyperedge $e$ with the set of vertices that are contained in $e$. A hyperedge $e$ has \em{rank} $k$ if $\card{e}=k$. The \em{rank} of a hypergarph is the maximum rank of its hyperedges.

Given a weighted hypergraph $H = (V,E,w)$ and a cut $C = (S,\bar{S})$, we say an edge $e$ crosses the cut if $e \cap S \neq \emptyset$ and $e \cap \bar{S} \neq \emptyset$. We denote by $\delta_H(S)$ the set of the edges crossing the cut $C$ in $H$. By definition, $\card{\delta_H(S)}$ is the number of edges crossing $C$ and $w(\delta_H(S))$ is the weight of $C$. For any $\eps > 0$, a {\em $(1 \pm \eps)$-approximate cut sparsifier} of $H$ is a hypergraph $H' = (V,E',w')$ with $E' \subseteq E$ such that

$$\forall S \subseteq V, ~~~\card{ w'(\delta_{H'}(S))- w(\delta_H(S)) } \leq \eps w(\delta_H(S)).$$

Given a weighted normal graph $G$ with $n$ vertices, the \emph{Laplacian} $L_G \in \mathbb{R}^{n \times n}$ is defined as follows: for any $u$, $L_G(u,u)$ is the total weight edges incident on $u$, and for any $u \neq v$, $L_G(u,v)$ is minus the weight of edges between $u$ and $v$. For any $n$-dimensional vector $x \in \mathbb{R}^{n}$, $x^T L_G x = \sum_{(u,v) \in e} w(e)(x(u)-x(v))^2$. A $(1\pm \eps)$-spectral sparsifier $G'$ is a subgraph of $G$ such that for any vector $x$, 
$$(1-\eps)x^T L_{G'} x \le x^T L_G x \le (1+\eps)x^T L_{G'} x.$$

The notion of spectral sparsification can be extended to hypergraphs as follows. The \emph{Laplacian} $L_H$ of a hypergraph $H$ is a function $\mathbb{R}^n \to \mathbb{R}^n$, such that for any $n$-dimensional vector $x$, we have
$$
x^TL_H(x) = \sum_{e \in E} w(e) \max_{u,v \in e}(x(u)-x(v))^2.
$$

Given a weighted hypergraph $H$ with $n$ vertices, a $(1 \pm \eps)$-spectral sparsifier $H'$ is a subgraph of $H$ such that for any $n$-dimensional vector $x$, we have

$$
(1-\eps)x^T L_{H'}(x) \le x^T L_H(x) \le (1+\eps)x^T L_{H'}(x).
$$

\subsection{Hypergraph Cut Sparsification}
In this section, we review some important concepts and results of hypergraph cut sparsification. 

Given a weighted hypergraph $H=(V,E,w)$, a \emph{$k$-strong component} of $H$ is a maximal induced subgraph of $G$ that has minimum cut at least $k$. For any edge $e$, the \emph{strength} of $e$, denoted by $k_e$, in $H$ is the maximum value of $k$ such that $e$ is \emph{fully} contained in a $k$-strong component of $H$. Alternatively, the strength of an edge $e\in E$ is the largest minimum cut size among all induced subgraphs $H[X]$ that contain $e$, where $X$ ranges over all subsets of $V$. The sum of $w_e/k_e$ is at most $n-1$.

\begin{lemma} [\cite{kogan2015sketching}] \label{lem:n-1}
    Given a weighted hypergraph $H=(V,E,w)$, we have $\sum_{e \in E} w_e/k_e \le n-1$. 
\end{lemma}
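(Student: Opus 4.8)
The plan is to prove the inequality by strong induction on the number of vertices $n=|V|$, following the template Benczúr and Karger used for ordinary graphs. As a preliminary reduction, it is enough to treat the case in which every hyperedge has positive weight and size at least two: a hyperedge of size at most one crosses no cut, and an edge of weight $0$ adds $0$ to every cut weight, so deleting such edges changes neither any (weighted) minimum cut of any induced subgraph, hence no strength, nor the claimed inequality. The base case $n=1$ is then vacuous (no edges), and the claim reads $0\le 0$.

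For the inductive step, assume $n\ge 2$, and choose a cut $(S,\bar{S})$ with $1\le |S|\le n-1$ as follows: if $H$ is disconnected, let $S$ be the vertex set of one connected component, so that $\delta_H(S)=\emptyset$; if $H$ is connected, let $(S,\bar{S})$ be a global minimum cut and set $c:=w(\delta_H(S))>0$. In either case, split the sum as
$$\sum_{e\in E}\frac{w_e}{k_e}\;=\;\sum_{e\subseteq S}\frac{w_e}{k_e}\;+\;\sum_{e\subseteq \bar{S}}\frac{w_e}{k_e}\;+\;\sum_{e\in\delta_H(S)}\frac{w_e}{k_e}.$$
The first two sums are handled by monotonicity of strength under vertex deletion: for $e\subseteq S$, every induced subgraph of $H[S]$ containing $e$ is also an induced subgraph of $H$ containing $e$, so the strength $k_e$ of $e$ in $H$ is at least the strength of $e$ in $H[S]$, and hence $w_e/k_e$ is at most $w_e$ divided by the latter quantity. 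Since $|S|\le n-1$, the inductive hypothesis applied to $H[S]$ gives $\sum_{e\subseteq S} w_e/k_e\le |S|-1$, and symmetrically $\sum_{e\subseteq \bar{S}} w_e/k_e\le |\bar{S}|-1$. For the last sum, observe that $H=H[V]$ is itself an induced subgraph containing every edge, so $k_e$ is at least the minimum cut weight of $H$ for every edge $e$; in the disconnected case this sum is $0$, while in the connected case every crossing edge has $k_e\ge c$, so $\sum_{e\in\delta_H(S)} w_e/k_e\le w(\delta_H(S))/c=1$. Adding the three bounds yields at most $(|S|-1)+(|\bar{S}|-1)+1=n-1$, completing the induction.

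The only point requiring care is the recursion itself: inside the first two sums, $k_e$ still denotes strength measured in $H$, and it is precisely the inequality ``strength of $e$ in $H$ $\ge$ strength of $e$ in $H[S]$'' that licenses passing to the smaller instances; since the inductive statement is phrased for arbitrary hypergraphs, applying it to $H[S]$ and $H[\bar{S}]$ is legitimate even when those induced subgraphs are themselves disconnected. The remaining ingredients — that deleting vertices only deletes edges, and that a cut of a disconnected hypergraph along a component boundary has no crossing edge — are routine, so there is no substantial obstacle; the only mild bookkeeping concerns the degenerate small/zero‑weight edges, which is dispatched by the preliminary reduction.
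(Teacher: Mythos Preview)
The paper does not prove this lemma; it merely cites it from \cite{kogan2015sketching}, so there is no ``paper's own proof'' to compare against. Your argument is correct and is essentially the standard Bencz\'ur--Karger induction adapted verbatim to hypergraphs: the key observations --- that strength can only increase when passing from an induced subgraph to the full hypergraph, and that every crossing edge of a global minimum cut has strength at least the min-cut value --- hold in the hypergraph setting for exactly the reasons you give, and the arithmetic $(|S|-1)+(|\bar S|-1)+1=n-1$ closes the induction.
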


Benczúr and Karger~\cite{benczur1996approximating, BenczurK15} showed that when we are dealing with a normal graph where each edge contains exactly two vertices, if we sample each edge $e$ independently with probability $p_e=O(\log n/\eps^2k_e)$, and give it weight $1/p_e$ if sampled, then the resulting graph will be a $(1 \pm \eps)$-approximate cut sparsifier of $G$ with high probability.

Kogan and Krauthgamer~\cite{kogan2015sketching} generalized this approach to hypergraphs. They showed that given a hypergraph $H=(V,E)$ with rank $r$, for each hyperedge $e$, if we sample $e$ with probability $p_e=O((\log n+r)/\eps^2 k_e)$, and has weight $1/p_e$ if get sampled, the resulting graph will be a cut sparsifier of $H$ with high probability.

\begin{theorem}[\cite{kogan2015sketching}] \label{thm:kk}
    Let $H$ be a hypergraph with rank $r$, and let $\eps>0$ be an error parameter. Consider the hypergraph $H'$ obtained by sampling each hyperedge $e$ in $H$ independently with probability $p_e=\min \{ 1, \frac{3((d+2)\log n +r)}{k_e \eps^2}\}$, giving it weight $1/p_e$ if included. Then with probability at least $1-O(n^{-d})$
\begin{enumerate}
    \item the hypergraph $H'$ has $O(\frac{n}{\eps^2}(r+\log n))$ edges, and
    \item $H'$ is a $(1 \pm \eps)$-approximate cut sparsifier of $H$.
\end{enumerate}
\end{theorem}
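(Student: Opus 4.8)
The plan is to follow the edge-sampling analysis of Benczúr and Karger, handling the size bound and the cut-preservation bound separately and then indicating the modifications needed to pass from graphs to rank-$r$ hypergraphs. Throughout, write $\rho := 3((d+2)\log n + r)$, so that $p_e = \min\{1, \rho/(k_e \eps^2)\}$.

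Part~1 is a direct calculation. The edges kept deterministically are exactly those with $p_e = 1$, i.e.\ with $k_e \le \rho/\eps^2$; each such edge contributes $1/k_e \ge \eps^2/\rho$ to $\sum_e 1/k_e$, which is at most $n-1$ by Lemma~\ref{lem:n-1} applied with unit weights, so there are at most $\rho(n-1)/\eps^2 = O(\frac{n}{\eps^2}(r + \log n))$ of them. For the remaining edges, $\E[\text{number kept}] = \sum_{e : p_e < 1} p_e = (\rho/\eps^2)\sum_{e : p_e<1} 1/k_e \le \rho(n-1)/\eps^2$, and since these are independent indicators a Chernoff bound gives that the number kept is $O(\frac{n}{\eps^2}(r+\log n))$ with probability $1 - n^{-\omega(1)}$. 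Summing the two contributions proves the first claim.

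For Part~2, fix a cut $(S,\bar S)$ and let $c = \card{\delta_H(S)}$. We have $w'(\delta_{H'}(S)) = \sum_{e \in \delta_H(S)} X_e/p_e$ where the $X_e$ are independent, $X_e = 1$ with probability $p_e$ and $0$ otherwise (edges with $p_e=1$ contributing exactly $1$), so $\E[w'(\delta_{H'}(S))] = c$. The structural fact we use is that every edge $e$ crossing $(S,\bar S)$ satisfies $c \ge k_e$: if $X \subseteq V$ witnesses the strength of $e$ (so $e \subseteq X$ and $H[X]$ has minimum cut $k_e$), then $(S\cap X, \bar S \cap X)$ is a genuine cut of $H[X]$ and hence contributes at least $k_e$ edges to $\delta_H(S)$. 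Thus every crossing edge has $1/p_e \le k_e \eps^2/\rho \le c\eps^2/\rho$ and $\sum_{e \in \delta_H(S)} \mathrm{Var}(X_e/p_e) \le (\eps^2/\rho)\sum_{e \in \delta_H(S)} k_e \le (\eps^2/\rho) c^2$, so a Bernstein/Chernoff bound gives $\Pr[\,|w'(\delta_{H'}(S)) - c| > \eps c\,] \le 2\exp(-\Omega(\rho))$ for a single cut. This is only polynomially small in $n$, so it cannot be combined with a naive union bound over the $2^{n-1}$ cuts of $H$; instead one argues as in Benczúr-Karger via a strength decomposition. Classify the edges by strength into $E_i = \{e : 2^i \le k_e < 2^{i+1}\}$; the spanning subhypergraph $G_i$ of all edges of strength $\ge 2^i$ has connected components equal to the $2^i$-strong components of $H$, each of minimum cut at least $2^i$, and the class-$i$ edges inside such a component $W$ are sampled essentially uniformly at the rate $\Theta(\rho/(2^i\eps^2))$ appropriate to a hypergraph of minimum cut $\ge 2^i$. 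One then bounds, simultaneously over all restrictions to $W$ of cuts of $H$, the deviation contributed by the class-$i$ edges by a small fraction of the value of that restricted cut (which is itself at least $2^i$): the Chernoff exponent here scales like $\rho$ times the ratio of the restricted cut value to the minimum cut of $H[W]$, and this is combined with a hypergraph cut-counting bound and a union bound over geometric shells of that ratio. Summing over the components of $G_i$ (their restricted cut values sum to at most $c$) and, with the Benczúr-Karger accounting of error budgets, over the strength levels, telescopes to a total deviation of at most $\eps c$ for every cut, with failure probability $O(n^{-d})$.

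The crux — and the main obstacle — is this final union bound over cuts in the hypergraph setting. For graphs, Karger's cut-counting bound (a connected graph has at most $n^{O(\alpha)}$ cuts of value at most $\alpha$ times its minimum cut) is exactly what drives the shell union bound, but hypergraphs can have exponentially many minimum cuts (e.g.\ a single spanning hyperedge), so any cut-counting argument must be applied with care — restricted to the randomly sampled sub-hypergraph, where it remains usable — and the rank $r$ must be added into the exponent $\rho$ of the sampling probability precisely to absorb the rank-dependent overhead of accounting for how a cut can interact with hyperedges of arity up to $r$. Getting this bookkeeping right — the strength decomposition, the per-level Chernoff-plus-shell union bound, and the telescoping of error budgets across strength levels — is where essentially all the work lies; Part~1, the single-cut estimate, and the inequality $c \ge k_e$ are routine.
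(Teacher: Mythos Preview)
This theorem is not proved in the present paper at all: it is quoted as a result of Kogan and Krauthgamer~\cite{kogan2015sketching} and used as a black box. The only remark the paper makes about its proof is the one-sentence summary (used to justify Lemma~\ref{lem:kk}) that the original argument proceeds by a per-cut Chernoff bound followed by a union bound over cuts. So there is no ``paper's own proof'' to compare your attempt against.

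For what it is worth, your sketch is a faithful outline of the actual argument in~\cite{kogan2015sketching}: the size bound via Lemma~\ref{lem:n-1}, the single-cut concentration via the observation $c \ge k_e$ for every crossing edge, and the strength-level decomposition combined with a hypergraph cut-counting bound (which is indeed where the additive $r$ in the exponent enters, to absorb the $2^r$ ways a rank-$r$ hyperedge can straddle a cut) are all the correct ingredients. You are also right that the naive $2^{n-1}$-term union bound fails and that the substance lies entirely in the per-level union bound you do not carry out; filling that in would amount to reproducing the analysis of~\cite{kogan2015sketching}, which is presumably why the present paper simply cites it.
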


In fact, if for each edge $e$, the sampling proability $p_e$ is at least $\frac{3((d+2)\log n +r)}{k_e \eps^2}$, then the resulting graph is still a $(1 \pm \eps)$-approximate cut sparsifier. 
This is because in the proof of Theorem~\ref{thm:kk}, the authors showed that each cut has a very small probability that the cut size in $H'$ is not within a factor of $(1 \pm \eps)$ of the cut size in $H$, and the probability that there is no such cut is also very small by taking a union bound over all possible cuts. To bound the probability that a cut has a similar size in $H'$ and $H$, the authors use the Chernoff bound. However, we can also use the following concentration bound to prove the same result.

\begin{lemma}[Theorem $2.2$ in \cite{fung2019general}]\label{chernoff}
    Let $\{x_1,\ldots, x_k\}$ be a set of random variables, such that for $1 \le i \le k$, each $x_i$ independently takes value $1/p_i$ with probability $p_i$ and $0$ otherwise, for some $p_i \in [0,1]$. Then for all $N\geq k$ and $\eps\in (0,1]$,
    $$
    \prob{\card{\sum_{i\in [k]}x_i - k}\geq \eps N} \leq 2e^{-0.38\eps^2\cdot \min_i p_i\cdot N}
    $$
\end{lemma}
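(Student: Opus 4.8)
The plan is to establish the two one-sided tail estimates
$\prob{\sum_{i\in[k]} x_i \geq k + \eps N} \leq e^{-0.38\eps^2 p_{\min} N}$ and
$\prob{\sum_{i\in[k]} x_i \leq k - \eps N} \leq e^{-0.38\eps^2 p_{\min} N}$ separately, where $p_{\min} := \min_i p_i$, and then combine them with a union bound to get the factor $2$. Note first that $\E[x_i] = (1/p_i)\cdot p_i = 1$, so $\E[\sum_i x_i] = k$ and the statement is a concentration-around-the-mean bound. Both tails are handled by the standard exponential-moment (Chernoff) method; the only parts that are not bookkeeping are (i) reducing the per-coordinate exponents to a function of $p_{\min}$ despite the $p_i$ being different, and (ii) extracting the clean numerical constant $0.38$.

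For the upper tail I would apply Markov's inequality to $e^{t\sum_i x_i}$ with $t>0$, using $\E[e^{t x_i}] = 1 + p_i(e^{t/p_i}-1) \leq \exp\!\big(p_i(e^{t/p_i}-1)\big)$. The key observation is that for fixed $t$ the function $p \mapsto p(e^{t/p}-1)$ is nonincreasing on $(0,1]$: substituting $u = t/p$ it equals $t\cdot (e^u-1)/u$, and $(e^u-1)/u$ is increasing in $u$. Hence each exponent is at most its value at $p_{\min}$, so $\prod_i \E[e^{t x_i}] \leq \exp\!\big(k\, p_{\min}(e^{t/p_{\min}}-1)\big)$. Setting $t = p_{\min}\lambda$ and using $k \leq N$ together with $e^\lambda - 1 - \lambda \geq 0$, the bound becomes $\exp\!\big(p_{\min} N\,[(e^\lambda - 1 - \lambda) - \lambda\eps]\big)$, and minimizing over $\lambda$ gives $\lambda = \ln(1+\eps)$ and exponent $-p_{\min} N\,[(1+\eps)\ln(1+\eps) - \eps]$. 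The lower tail is symmetric: apply Markov to $e^{-t\sum_i x_i}$, use $\E[e^{-t x_i}] \leq \exp\!\big(-p_i(1-e^{-t/p_i})\big)$ and the fact that $p \mapsto p(1-e^{-t/p})$ is nondecreasing, so each exponent is at least its value at $p_{\min}$; optimizing over $\lambda$ (with $t = p_{\min}\lambda$, $\lambda = -\ln(1-\eps)$) yields exponent $-p_{\min} N\,[\eps + (1-\eps)\ln(1-\eps)]$, with $\eps = 1$ taken as a limit or checked directly via $\prob{\sum_i x_i = 0} = \prod_i(1-p_i) \leq e^{-p_{\min}N}$.

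What remains is to verify the two elementary inequalities $(1+\eps)\ln(1+\eps) - \eps \geq 0.38\,\eps^2$ and $(1-\eps)\ln(1-\eps) + \eps \geq 0.38\,\eps^2$ for $\eps\in(0,1]$; this is the main (if modest) obstacle. The second is immediate: its Taylor series is $\tfrac{\eps^2}{2} + \tfrac{\eps^3}{6} + \cdots$ with only nonnegative coefficients, hence it exceeds $\tfrac{\eps^2}{2} > 0.38\eps^2$. For the first I would show that $\phi(\eps) := \big((1+\eps)\ln(1+\eps) - \eps\big)/\eps^2$ is decreasing on $(0,1]$ — for instance, a direct computation gives that the numerator of $\phi'(\eps)$ is $2\eps - (2+\eps)\ln(1+\eps)$, which vanishes at $\eps = 0$ and has derivative $1 - \ln(1+\eps) - 1/(1+\eps)$, itself vanishing at $0$ with derivative $-\eps/(1+\eps)^2 < 0$, so the numerator is negative for $\eps > 0$. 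Therefore $\min_{\eps\in(0,1]}\phi(\eps) = \phi(1) = 2\ln 2 - 1 \approx 0.3863 > 0.38$. Combining the two one-sided bounds via a union bound then gives $\prob{\card{\sum_{i\in[k]} x_i - k} \geq \eps N} \leq 2e^{-0.38\eps^2 p_{\min} N}$, as claimed.
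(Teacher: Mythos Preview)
Your proof is correct. The paper does not actually prove this lemma: it is quoted verbatim as Theorem~2.2 of \cite{fung2019general} and used as a black box, so there is no ``paper's own proof'' to compare against. What you have supplied is a clean, self-contained derivation via the exponential-moment method. The two nontrivial ingredients---the monotonicity in $p$ of $p(e^{t/p}-1)$ and $p(1-e^{-t/p})$ that lets you reduce heterogeneous $p_i$ to $p_{\min}$, and the calculus showing $\min_{\eps\in(0,1]}\big((1+\eps)\ln(1+\eps)-\eps\big)/\eps^2 = 2\ln 2 - 1 > 0.38$---are both handled correctly, and the edge case $\eps=1$ in the lower tail is properly addressed.
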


So if we replace the probability of sampling an edge $e$ with $q_e \ge p_e$, the concentration bound in Lemma~\ref{chernoff} still holds. In other words, if we sample according to $q_e$ then the probability that each cut in the sampled graph $H'$ has size close to the cut size in $H$ is at least as large as the probability when edges are sampled according to $p_e$.

\begin{lemma}\label{lem:kk}
Let $H$ be a hypergraph with rank $r$, and let $\eps>0$ be an error parameter. Consider the hypergraph $H'$ obtained by sampling each hyperedge $e$ in $H$ independently with probability $p_e \ge \min\{1, \frac{3((d+2)\log n +r)}{k_e \eps^2}\}$, giving it weight $1/p_e$ if included. Then with probability at least $1-O(n^{-d})$, $H'$ is a $(1 \pm \eps)$-approximate cut sparsifier of $H$.
\end{lemma}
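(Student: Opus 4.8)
The plan is to obtain Lemma~\ref{lem:kk} as a direct strengthening of Theorem~\ref{thm:kk}, by revisiting the latter's proof and re-running its concentration step through Lemma~\ref{chernoff} instead of a Chernoff bound. Recall that the proof of Theorem~\ref{thm:kk} fixes a target cut $C = (S,\bar S)$, controls the probability that the sampled weight $w'(\delta_{H'}(S))$ deviates from $w(\delta_H(S))$ by more than an $\eps$ fraction --- grouping the crossing edges by their strengths into $O(\log n)$ classes, following the Benczúr--Karger strength decomposition, and applying a concentration inequality to the scaled Bernoulli indicators within each class --- and then takes a union bound over the $O(\log n)$ classes and over the cuts that each strong component can induce, the latter via the standard cut-counting bound that a graph of minimum cut $c$ has fewer than $n^{2\alpha}$ cuts of value at most $\alpha c$. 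The sampling probabilities enter the argument only inside the per-class concentration events.

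The first step is to isolate the monotonicity we need. Fix one such concentration event: a group of $k$ crossing edges, edge $e$ sampled independently with probability $p_e$ and assigned weight $1/p_e$ when sampled, so the group's sampled weight is $\sum_i x_i$ with the $x_i$ exactly as in Lemma~\ref{chernoff} and $\E[\sum_i x_i] = k$. In the proof of Theorem~\ref{thm:kk} the corresponding deviation bound has the shape $\eps N$ for some $N \ge k$ determined by purely combinatorial data --- the strength threshold of the class, the minimum cut of the relevant strong component, and the value of $C$ --- none of which depends on the $p_e$'s, and the mean $k$ does not depend on the $p_e$'s either. Lemma~\ref{chernoff} then bounds the failure probability of the event by $2e^{-0.38\eps^2 (\min_i p_i) N}$, which is non-increasing in each $p_i$. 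Writing $\tau_e := \min\{1, 3((d+2)\log n + r)/(k_e\eps^2)\}$ for the Kogan--Krauthgamer threshold, any choice $p_e \ge \tau_e$ with the matching unbiased weight $1/p_e$ leaves the mean of every group unchanged and satisfies $\min_{e \in \text{group}} p_e \ge \min_{e \in \text{group}} \tau_e$, so the failure probability of every concentration event is at most what it is in the original proof.

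Given this, the proof of Lemma~\ref{lem:kk} simply re-runs the proof of Theorem~\ref{thm:kk} verbatim --- same strength decomposition, same grouping of crossing edges, same choices of $N$, same cut-counting union bound --- invoking Lemma~\ref{chernoff} in place of Chernoff at each concentration step. By the monotonicity just established, the total failure probability is at most the original $O(n^{-d})$, and on the complementary event every cut of $H$ is preserved to within a $(1\pm\eps)$ factor in $H'$. Edges with $p_e = 1$ are included deterministically with weight $1$ and contribute their exact count to both $w(\delta_H(S))$ and $w'(\delta_{H'}(S))$ for each cut they cross, so they can be removed from every concentration event harmlessly. Note that, in contrast to Theorem~\ref{thm:kk}, Lemma~\ref{lem:kk} asserts nothing about the number of edges of $H'$; this is intentional, since inflating the $p_e$'s only helps concentration but can make $H'$ as dense as $H$.

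I expect the only real work to be bookkeeping: verifying that every appeal to a Chernoff bound inside the proof of Theorem~\ref{thm:kk} can genuinely be written in the template of Lemma~\ref{chernoff} --- a sum of independent unbiased scaled Bernoullis with a deviation threshold $\eps N$ for some $N \ge k$ chosen independently of the sampling probabilities --- which is precisely the regime Lemma~\ref{chernoff} (the general sampling bound of Fung et al.) is designed for. Once this is confirmed, monotonicity in the $p_e$'s is immediate and nothing else in the argument is touched.
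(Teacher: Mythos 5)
Your proposal is correct and takes essentially the same route as the paper: re-run the proof of Theorem~\ref{thm:kk} with the Fung et al.\ concentration bound (Lemma~\ref{chernoff}) in place of the Chernoff bound, observing that its failure probability $2e^{-0.38\eps^2 (\min_i p_i) N}$ is monotone non-increasing in each $p_i$ while the mean of every group stays fixed at its cardinality, so raising the sampling probabilities only helps. The paper states precisely this monotonicity observation and leaves the bookkeeping implicit; you have spelled it out more explicitly (grouping by strength classes, cut-counting union bound, treatment of $p_e=1$ edges), but no new ideas are needed.
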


Recently,~\cite{CKN20} showed that for every $n$-vertex hypergraph, there is a $(1\pm \eps)$-approximate cut sparsifier with $\tilde{O}(n)$ edges. Moreover, this sparsifier can be constructed in polynomial time in the number of vertices and the number of hyperedges.

\begin{theorem} [\cite{CKN20}] \label{thm:linear}
    Given a weighted hypergraph $H$, for any $0<\eps<1$, there exists an randomized algorithm that constructs a $(1 \pm \eps)$-approximate cut sparsifier of $H$ with $O(\frac{n\log n}{\eps^2})$ hyperedges in $O(mn+n^{10}/\eps^7)$ time with high probability.
\end{theorem}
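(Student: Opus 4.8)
The plan is to prove Theorem~\ref{thm:linear} in two phases: a cheap preprocessing step that shrinks the instance to polynomial size, and then a strength-based importance-sampling step whose \emph{analysis} is the real content. In the preprocessing phase I would run the near-linear-time sparsification algorithm of Chekuri and Xu~\cite{Chekuri018} (or, if preferred, that of Kogan and Krauthgamer~\cite{kogan2015sketching}) to produce, in $O(mn)$ time, an intermediate $(1\pm\eps/3)$-approximate cut sparsifier $H_1=(V,E_1,w_1)$ with $m_1=\poly(n)$ hyperedges; since $E_1\subseteq E$, it then suffices to find a $(1\pm\eps/3)$-approximate cut sparsifier of $H_1$ that is a sub-hypergraph of $H_1$ (hence of $H$), and every remaining computation is on an instance of size $\poly(n)$. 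On $H_1$ I would compute, for every hyperedge $e$, a constant-factor approximation $\tilde k_e$ of its strength $k_e$, set $p_e=\min\{1,\Theta(\log n/(\tilde k_e\,\eps^2))\}$, sample each $e$ independently with probability $p_e$, and give it weight $1/p_e$ if included. The size bound is immediate from Lemma~\ref{lem:n-1}: the expected number of sampled edges is $\sum_e p_e=O(\log n/\eps^2)\sum_e 1/\tilde k_e=O(\log n/\eps^2)\cdot O(n)=\tilde{O}(n/\eps^2)$, and a Chernoff bound makes this hold with high probability. Thus the whole difficulty is \emph{correctness} at this sampling rate: Theorem~\ref{thm:kk} and Lemma~\ref{lem:kk} only guarantee a $(1\pm\eps)$-sparsifier when $p_e=\Omega((\log n+r)/(k_e\eps^2))$, and the extra rank term $r$ is exactly what must be removed.

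For correctness I would use the classical strength decomposition. The $k$-strong components over all thresholds $k$ form a laminar family, so I can group $E_1$ into $O(\log n)$ classes, class $j$ being the edges with $\tilde k_e\in[2^j,2^{j+1})$, and note that every class-$j$ edge lies inside some $2^j$-strong component of $H_1$, all of whose cuts have value at least $2^j$. Fix a target cut $(S,\bar S)$ of value $c$, let $\hat c$ be its value in the sampled hypergraph, and write $c=\sum_j c_j$ and $\hat c=\sum_j\hat c_j$ split by class; by the triangle inequality it suffices to control each $|\hat c_j-c_j|$. Inside a single $2^j$-strong component, the contribution of its class-$j$ edges to the cut is a sum of independent variables of the form appearing in Lemma~\ref{chernoff} with sampling probabilities at least $\Omega(\log n/(2^j\eps^2))$; taking $N$ equal to the value of the induced cut (which is at least $2^j$) makes the exponent $\min_i p_i\cdot N$ equal to $\Omega(\log n/\eps^2)$, \emph{independently of the rank}, which is the point. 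Since, by the robustness noted after Lemma~\ref{chernoff}, over-sampling never hurts, the only remaining task is to union-bound this deviation event over all cuts of each strong component and over the $O(\log n)$ classes, and then sum to get $\sum_j|\hat c_j-c_j|\le\eps c$.

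The crux of the argument — and the step I expect to be the genuine obstacle — is precisely this union bound inside a strong component. For ordinary graphs one uses Karger's cut-counting bound of $n^{O(\alpha)}$ for the number of cuts within a factor $\alpha$ of the minimum, but for hypergraphs the random-contraction analysis does not yield such a clean bound, because a high-rank hyperedge can be ``cheap to contract,'' and this is exactly the phenomenon that forces the $+r$ in~\cite{kogan2015sketching,Chekuri018}. Removing it requires a rank-free handle on the family of near-minimum cuts of a hypergraph, and I would attack this either through a multi-scale (chaining) argument over the set of cuts — approximating each cut of a component by coarser ones, controlling the increments scale by scale, and summing a convergent series — or through a randomized hyperedge-splitting preprocessing that reduces the effective rank enough in aggregate that Lemma~\ref{lem:kk} can be applied level by level with only $\polylog$ arity; establishing that one of these routes actually goes through, with the $1/\eps^2$ dependence intact, is where the real work lies. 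Finally, the running time is routine to tally: the Chekuri--Xu call costs $O(mn)$; on the $\poly(n)$-edge instance $H_1$, computing constant-factor strengths and the laminar family of strong components reduces to $O(\log n)$ rounds of hypergraph minimum-cut / maximum-flow computations together with the associated auxiliary processing, which is where the $O(n^{10}/\eps^7)$ term originates; and the sampling and reweighting are negligible by comparison, for a total of $O(mn+n^{10}/\eps^7)$ as claimed.
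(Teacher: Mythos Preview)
This theorem is not proved in the present paper; it is quoted as a black-box result from~\cite{CKN20}, and the paper uses it only to post-process the $\poly(n)$-size sparsifier $H_1$ produced by Algorithm~\ref{alg:pesudo} down to $\tilde O(n/\eps^2)$ edges. So there is no ``paper's own proof'' to compare against here.

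That said, your proposal correctly identifies both the high-level architecture (preprocess to $\poly(n)$ edges via Chekuri--Xu, then resample by strength) and the precise obstacle: the Kogan--Krauthgamer analysis requires sampling probability $\Omega((\log n + r)/(k_e\eps^2))$, and the entire content of~\cite{CKN20} is removing that $+r$ term. However, your proposal does not actually close this gap. You name two candidate routes --- a chaining/multi-scale union bound over near-minimum cuts, or a rank-reducing random edge-splitting --- and then explicitly write that ``establishing that one of these routes actually goes through\ldots\ is where the real work lies.'' That is an honest assessment, but it means what you have is a proof \emph{outline} with the main lemma left open, not a proof. In particular, neither suggested route works in the straightforward form you describe: the hypergraph cut-counting bound obtained by random contraction genuinely carries a rank-dependent exponent, and naive random splitting of a hyperedge into ordinary edges does not preserve cut values (a size-$r$ hyperedge contributes $1$ to a cut, not $\Theta(r)$ or $\Theta(r^2)$). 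The argument in~\cite{CKN20} requires substantially more than either sketch, and a reader of your proposal would not be able to complete it without new ideas.
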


\subsection{Hypergraph Spectral Sparsification}
\label{subsec:HSS}

In this section, we review some important concepts and results on spectral sparsification in both normal graphs and hypergraphs.

Given a weighted normal graph $G$, the \emph{effective resistance} $r_e$ of an edge $e=(u,v)$ is defined to be the electrical effective resistance between $u$ and $v$ if we view $G$ as a electrical network on $n$ nodes in which each edge $e$ corresponds to a resistor with conductance $w(e)$.

Spielman and Srivastava~\cite{SpielmanS11} showed that given an unweighted graph $G$, if we sample each edge $e$ independently with probability $p_e=O(r_e\log n/ \eps^2)$ and give it a weight of $w_e/p_e$ if sampled, then the resulting graph is a $(1 \pm \eps)$-spectral sparsifier.

In the case of hypergraphs, Soma and Yoshida~\cite{SomaY19} showed that if we sample each hyperedge $e$ with probability proportional to $n \log n/(\eps^2 \min_{u,v \in e} \card{E(\{u,v\})})$ where $E(\{u,v\})$ is the set of hyperedges that contains both $u$ and $v$, then the resulting graph is a $(1 \pm \eps)$-spectral sparsifier.

\begin{theorem} [\cite{SomaY19}] \label{thm:spectral-S19}
    Given an unweighted hypergraph $H$, if we sample each edge $e$ with probability $p_e=\min\{1,\frac{C n \log n}{\eps^2 \min_{u,v \in e} \card{E(\{u,v\})}}\}$ where $C$ is a universal constant, and give weight $1/p_e$ if sampled, then the resulting hypergraph $H'$ is a $(1 \pm \eps)$-spectral sparsifier of $H$ with high probability. The sparsifier has $\tilde{O}(n^3/\eps^2)$ hyperedges. The running time of this algorithm is $\tilde{O}(mn^2 + m + n^3/\eps^2)$.
\end{theorem}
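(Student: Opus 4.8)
The plan is to follow the importance‑sampling paradigm for spectral sparsification, adapted to the ``max of squared differences'' form of the hypergraph Laplacian. Write $Q_H(x) := x^T L_H(x) = \sum_{e \in E} w(e)\, g_e(x)$ where $g_e(x) := \max_{u,v\in e}(x(u)-x(v))^2$, and write $\mathrm{imp}_e := \min_{u,v\in e}\card{E(\{u,v\})}$ for the quantity appearing in the sampling probability. If $H'$ is obtained by the stated process, then $Q_{H'}(x) = \sum_{e} X_e\,(w(e)/p_e)\,g_e(x)$ with $X_e\sim \mathrm{Bernoulli}(p_e)$ independent, a random variable whose expectation is exactly $Q_H(x)$ for every fixed $x$. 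The proof has three ingredients: (i) a bound on the ``hypergraph leverage score'' of each edge; (ii) pointwise concentration of $Q_{H'}(x)$ for a single $x$; and (iii) a net‑plus‑Lipschitz argument upgrading (ii) to simultaneous concentration over all $x$. The edge count is then a side computation: $\sum_e p_e \le \frac{Cn\log n}{\eps^2}\sum_e 1/\mathrm{imp}_e \le \frac{Cn\log n}{\eps^2}\binom{n}{2}$, since charging each $e$ to a pair attaining $\mathrm{imp}_e$ gives each pair $\{u,v\}$ total charge at most $\card{E(\{u,v\})}\cdot\tfrac1{\card{E(\{u,v\})}}=1$; a Chernoff bound then gives $\tilde{O}(n^3/\eps^2)$ sampled edges with high probability, and the runtime follows from computing all $\card{E(\{u,v\})}$ in $\tilde{O}(mn^2)$ time and running $m$ independent trials.

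For ingredient (i) I would prove that for every $e$ and every $x$, $g_e(x)/Q_H(x) \le 1/\mathrm{imp}_e$. Indeed, let $(a,b)$ attain $g_e(x)$; then every hyperedge $f$ containing both $a$ and $b$ has $g_f(x) \ge (x(a)-x(b))^2 = g_e(x)$, so $Q_H(x) \ge \sum_{f\ni a,b} w(f)\,g_e(x) = \card{E(\{a,b\})}\,g_e(x) \ge \mathrm{imp}_e\, g_e(x)$ in the unweighted case. Hence $\sup_x g_e(x)/Q_H(x)\le 1/\mathrm{imp}_e$, and the stated $p_e=\min\{1,\,Cn\log n/(\eps^2\mathrm{imp}_e)\}$ is, up to the factor $n\log n$, proportional to this leverage score — exactly the quantity one needs to control.

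For ingredient (ii), fix $x$ and apply a Bernstein‑type bound to $Q_{H'}(x)=\sum_e X_e c_e$ with $c_e=(w(e)/p_e)g_e(x)$ (edges with $p_e=1$ are deterministic and contribute exactly their mean). The maximum term is $c_e = (w(e)/p_e)g_e(x) \le Q_H(x)/p_e \le \eps^2 Q_H(x)/(Cn\log n)$ using (i), and the variance is at most $\sum_e (w(e)^2/p_e)g_e(x)^2 \le \frac{\eps^2}{Cn\log n}\sum_e w(e)^2\,\mathrm{imp}_e\, g_e(x)^2$, which by (i) applied twice (once as $w(e)\,\mathrm{imp}_e\, g_e(x)\le Q_H(x)$, once as $\sum_e w(e) g_e(x)=Q_H(x)$) is at most $\eps^2 Q_H(x)^2/(Cn\log n)$. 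Bernstein then gives $\Pr[\,\card{Q_{H'}(x)-Q_H(x)} > \eps Q_H(x)\,] \le 2\exp(-\Omega(n\log n))$. For ingredient (iii), observe $g_e(x)=s_e(x)^2$ with $s_e(x):=\max_{u\in e}x(u)-\min_{u\in e}x(u)$ a seminorm, so $\sqrt{Q_H}=\big(\sum_e w(e)s_e(\cdot)^2\big)^{1/2}$ is a seminorm (Minkowski), as is $\sqrt{Q_{H'}}$, and $\ker Q_H\subseteq \ker Q_{H'}$. Take a $\Theta(\eps)$‑net $\mathcal N$ of the unit sphere of $\sqrt{Q_H}$ on the quotient by its kernel; $\card{\mathcal N}=\exp(O(n\log(1/\eps)))$, so a union bound over $\mathcal N$ with (ii) makes $\card{Q_{H'}(y)-Q_H(y)}\le \eps Q_H(y)$ hold for all $y\in\mathcal N$ with high probability — this is precisely where the extra factor of $n$ in $p_e$ is consumed. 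One then passes from $\mathcal N$ to all of $\mathbb R^n$ by a self‑bounding argument: since $\lvert\sqrt{Q_{H'}(x)}-\sqrt{Q_{H'}(y)}\rvert\le \sqrt{Q_{H'}(x-y)}\le M\sqrt{Q_H(x-y)}$ where $M:=\sup_{Q_H(x)\le 1}\sqrt{Q_{H'}(x)}$, choosing $y\in\mathcal N$ within $\sqrt{Q_H}$‑distance $O(\eps)$ of a worst $x$ yields $M\le \sqrt{1+\eps}+M\cdot O(\eps)$, hence $M\le 1+O(\eps)$; the matching lower estimate is analogous, and rescaling $\eps$ by a constant finishes. I expect this last step — setting up the net with respect to the right seminorm and pushing through the off‑net self‑bounding (including the two‑sided version) — to be the main technical obstacle, since everything else reduces to the clean leverage inequality of (i) plus standard concentration.
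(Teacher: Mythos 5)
The paper states Theorem~\ref{thm:spectral-S19} as a cited result of Soma and Yoshida~\cite{SomaY19} and gives no proof of it, so there is no in-paper argument for me to compare your sketch against; I will instead assess the sketch on its own. Your reconstruction follows the right importance-sampling template, and the central leverage-score inequality in step (i) — that $g_e(x) \le Q_H(x)/\min_{u,v\in e}\card{E(\{u,v\})}$, obtained by looking at the pair $(a,b)$ achieving the max and observing that every hyperedge containing both $a$ and $b$ contributes at least $g_e(x)$ to the energy — is correct and is exactly the structural fact that makes $p_e$ the right sampling probability; the edge-count charging argument is also clean. The per-point Bernstein calculation and the seminorm/net/self-bounding argument are, modulo two cautions below, a coherent route to the claim.

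Two cautions worth making explicit. First, your intermediate step ``$c_e \le Q_H(x)/p_e$'' is a red herring and does not by itself give $c_e \le \eps^2 Q_H(x)/(Cn\log n)$; the bound you actually need is $c_e = g_e(x)/p_e = \eps^2\,\mathrm{imp}_e\, g_e(x)/(Cn\log n) \le \eps^2 Q_H(x)/(Cn\log n)$, i.e.\ you must invoke (i) \emph{before} dividing by $p_e$, not after — your final inequality is right, but the stated derivation is not. Second, the union bound balances $\exp(-\Omega(Cn\log n))$ against a net of size $\exp(O(n\log(1/\eps)))$, so for a fixed universal constant $C$ the argument as written only closes when $\log(1/\eps) = O(\log n)$; one should either note this range restriction or observe that $C$ can be chosen depending on the polynomial by which $1/\eps$ is bounded. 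Beyond these, the main genuine subtlety — that the hypergraph energy is not a quadratic form, so the matrix-Chernoff route of Spielman–Srivastava is unavailable and one must instead exploit that $\sqrt{Q_H}$ and $\sqrt{Q_{H'}}$ are seminorms, with $\ker Q_H \subseteq \ker Q_{H'}$, and run a two-sided self-bounding argument off an $\eps$-net of the $\sqrt{Q_H}$-unit sphere — is identified correctly and is the right way to finish.
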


Bansal et al.~\cite{BansalST19} take a different approach to hypergraph spectral sparsification. For any hypergraph $H$, define the auxiliary graph $G_H$ as a normal graph that is obtained by, for each hyperedge $e$, transforming $e$ into a clique $F_e$ over the vertices in $e$. For any hyperedge $e$, we now define $r_e = \max_{f \in F_e} r_f$. Bansal et al. showed that if we sample each hyperedge $e$ with probability proportional to $r^4 r_e \log n/\eps^2$ where $r$ is the maximum size of any hyperedge in $H$, then the resulting graph is a $(1 \pm \eps)$-spectral sparsifier.

\begin{theorem} [\cite{BansalST19}] \label{thm:spectral-B19}
    Given an unweighted hypergraph $H$, if we sample each edge $e$ with probability $p_e=\min\{1,\frac{C r^4 r_e \log n}{\eps^2}\}$ where $C$ is a universal constant, and assign it weight $1/p_e$ if sampled, then the resulting hypergraph $H'$ is a $(1 \pm \eps)$-spectral sparsifier of $H$ with high probability. The sparsifier has $\tilde{O}(nr^3/\eps^2)$ hyperedges.
\end{theorem}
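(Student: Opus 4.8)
The plan is to run an edge-sampling argument in which the (non-linear) hypergraph energy
$$x^T L_H(x)=\sum_{e\in E}w(e)\max_{u,v\in e}(x(u)-x(v))^2$$
is controlled through the quadratic form of the auxiliary clique graph $G_H$, where the clique edges $F_e$ of each hyperedge $e$ carry suitably normalized weights and are, after a preprocessing step, reweighted so that the effective resistances $\{r_f:f\in F_e\}$ within each clique are balanced to within constant factors. The first ingredient is the two-sided comparison of energies: for every $x\in\mathbb{R}^n$,
$$x^T L_{G_H}(x)\ \le\ x^T L_H(x)\ \le\ \binom{r}{2}\,x^T L_{G_H}(x),$$
where the left inequality holds because the average of $(x(u)-x(v))^2$ over the edges of a clique is at most its maximum, and the right inequality because the maximum is at most the sum (and $\binom{|e|}{2}\le\binom{r}{2}$); in particular $G_H$ is an $O(r^2)$-distortion surrogate for $H$ and $x^T L_{G_H}(x)\le x^T L_H(x)$ for all $x$.

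Next I would fix a vector $x$ with $x^T L_H(x)=1$ and study $Y_x=\sum_{e\in E'}\frac{1}{p_e}\max_{u,v\in e}(x(u)-x(v))^2$, a sum of independent nonnegative random variables with $\E[Y_x]=x^T L_H(x)=1$. To bound its deviations it suffices to bound the largest relative contribution $\max_e \frac{1}{p_e}\max_{u,v\in e}(x(u)-x(v))^2$: using the standard electrical inequality $(x(u)-x(v))^2\le r_f\cdot x^T L_{G_H}(x)$ valid for every edge $f=(u,v)$ of $G_H$, together with $r_e=\max_{f\in F_e}r_f$ and $x^T L_{G_H}(x)\le 1$, this quantity is at most $r_e/p_e\le \eps^2/(Cr^4\log n)$ by the definition of $p_e$. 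A Chernoff/Bernstein bound then gives, for each fixed $x$, that $Y_x\in(1\pm\eps)$ except with probability $n^{-\Omega(r^4)}$ for a suitable constant $C$.

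The main obstacle is upgrading this per-vector guarantee to one that holds for all $x$ at once: a naive $\eps$-net over the unit ball of $L_H$ has $e^{\Theta(n)}$ points, far more than the above failure probability can absorb. The key observation is that within each hyperedge $e$, the form $x^T L_H(x)$ depends on $x$ only through the at most $\binom{r}{2}$ linear functionals $\{x(u)-x(v):u,v\in e\}$; equivalently $x^T L_H(x)=\max_\phi x^T L_\phi(x)$ over maps $\phi$ choosing one clique edge per hyperedge, with each $L_\phi$ dominated by $L_{G_H}$. A chaining / net argument over this structured family (rather than over all of $\mathbb{R}^n$) loses precisely the factor $r^4$ built into $p_e$, and a union bound over the resulting net shows that $H'$ is a $(1\pm\eps)$-spectral sparsifier with high probability. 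Finally, the size bound is a counting argument: $\E[\card{E'}]=\sum_e p_e\le \frac{Cr^4\log n}{\eps^2}\sum_e w(e)\,r_e$, and Foster's theorem applied to the reweighted, balanced graph $G_H$ bounds $\sum_e w(e)\,r_e$, yielding $\E[\card{E'}]=\tilde O(nr^3/\eps^2)$; a further concentration step promotes this to a high-probability bound.
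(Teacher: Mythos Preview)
This theorem is not proved in the paper: it is stated in the preliminaries (Section~2.3) as a result quoted from~\cite{BansalST19} and is used only as a black box (in the proof of Theorem~\ref{thm:spectral-edge}). So there is no ``paper's own proof'' to compare your attempt against.

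That said, your sketch follows the right architecture of the Bansal--Svensson--Trevisan argument: the two-sided comparison $L_{G_H}\preceq L_H\preceq \binom{r}{2}L_{G_H}$ (with suitably normalized clique weights), the per-vector boundedness via $(x(u)-x(v))^2\le r_f\cdot x^TL_{G_H}x$, and the size bound through $\sum_e w(e)r_e$ and Foster-type reasoning are all correct ingredients. The one place where the sketch is not yet a proof is the uniformity step. Writing $x^TL_H(x)=\max_\phi x^TL_\phi x$ is fine, but the family of selectors $\phi$ has size $\binom{r}{2}^{m}$, so a per-$\phi$ or per-$x$ failure probability of $n^{-\Omega(r^4)}$ cannot be union-bounded over it, and ``a chaining/net argument over this structured family loses precisely the factor $r^4$'' is an assertion rather than an argument. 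In~\cite{BansalST19} this step is handled by a matrix-concentration statement that, once the clique resistances within each $F_e$ are balanced, controls all $L_\phi$ simultaneously through their common domination by $L_{G_H}$; the $r^4$ arises from combining the $O(r^2)$ energy distortion with an additional $O(r^2)$ loss in this balancing/domination step. If you want a self-contained proof, that is the part you would need to fill in.
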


Similar to the case of Theorem~\ref{thm:kk}, Theorem~\ref{thm:spectral-S19} and Theorem~\ref{thm:spectral-B19} both work when we sample each edge with a probability $q_e \ge p_e$ instead of $p_e$ and give weight $1/q_e$ if sampled.

\subsection{Minimizing a Submodular Function Using Value Queries}
A set function $f:2^{\Omega} \to \mathbb{R}$ is a {\em submodular function} if for every $S,T \subseteq \Omega$, $f(S)+f(T) \ge f(S \cup T) + f(S \cap T)$. Given a graph/hypergraph, for any vertex set $S$, the cut function $f(S)$, defined as the weight of edges crossing cut $(S,\bar{S})$ is easily shown to be submodular. There is an algorithm that for any submodular function $f$ finds a set $S$ that minimizes the value of function $f$ using $\tilde{O}(n^3)$ value queries and in $\tilde{O}(n^4)$ time.

\begin{theorem} [\cite{LeeSW15}] \label{thm:submodular}
    There is an algorithm for submodular function minimization with $O(n^3 \log^2 n)$ value queries and $O(n^4 \log^{O(1)} n)$ time where $n$ is the size of the ground set.
\end{theorem}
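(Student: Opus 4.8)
The plan is to recast submodular function minimization as a convex optimization problem over a box and solve it with a near-optimal cutting-plane method. \textbf{Step 1 (reduction via the Lov\'{a}sz extension).} For $x \in [0,1]^n$ order the coordinates $x_{\sigma(1)} \ge \dots \ge x_{\sigma(n)}$ and set $\hat f(x) = \sum_{k=1}^{n} (x_{\sigma(k)} - x_{\sigma(k+1)})\, f(S_k)$, where $S_k = \{\sigma(1),\dots,\sigma(k)\}$ and $x_{\sigma(n+1)} := 0$. One checks that $\hat f$ is convex, that $\hat f = f$ on $\{0,1\}^n$, and that $\min_{x\in[0,1]^n}\hat f(x) = \min_{S \subseteq \Omega} f(S)$, with a minimizing set recoverable as a level set $\{i : x_i \ge t\}$ of a minimizer $x$. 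The key computational fact is that a subgradient of $\hat f$ at $x$ is the vector $g$ with $g_{\sigma(k)} = f(S_k) - f(S_{k-1})$, which costs exactly $n$ value queries (plus an $O(n\log n)$ sort) and lies in the base polytope $B(f)$. So it suffices to minimize the convex $\hat f$ over $[0,1]^n$ given a subgradient oracle costing $O(n)$ value queries per call.

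\textbf{Step 2 (a fast cutting-plane method --- the heart of the argument).} I would build a cutting-plane method that, given a separation/subgradient oracle for a convex function on a box in $\mathbb{R}^n$, returns an $\eps$-approximate minimizer using $O(n\log(n/\eps))$ oracle calls and $\Ot(n^3)$ additional arithmetic. The method maintains the polytope $P_k$ cut out by the box together with the halfspaces from earlier subgradients, and queries the oracle at a regularized analytic center of $P_k$ (a point minimizing a weighted logarithmic barrier of $P_k$ plus a small quadratic regularizer), updating it incrementally. Two things have to be proved: (i) a potential/volume argument showing that $O(n\log(n/\eps))$ cuts shrink the relevant width of $P_k$ enough to extract an $\eps$-minimizer, even when the optimum lies on the boundary (this matches the iteration count of Vaidya's volumetric-center method); and (ii) an amortized analysis, based on leverage scores and rank-one Hessian updates together with occasional dropping of nearly-redundant constraints (to keep their number $O(n)$), that makes the per-iteration linear algebra $\Ot(n^2)$. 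Feeding in the Lov\'{a}sz subgradient oracle, minimizing $\hat f$ to accuracy $\eps$ then costs $O(n^2\log(n/\eps))$ value queries and $\Ot(n^3)$ time.

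\textbf{Step 3 (approximate to exact, and strong polynomiality).} If $f$ is integer-valued with $|f(S)| \le M$ for all $S$, then taking $\eps < 1/(n+1)$ and rounding a near-optimal $x$ to its best level set already yields an exact minimizer, giving a weakly polynomial bound of $O(n^2\log(nM))$ value queries and $\Ot(n^3)$ time. To remove the dependence on $M$ one iterates: a run of the cutting-plane method also produces a point $y \in B(f)$ (a convex combination of the queried subgradients) with $\sum_i y_i = f(\Omega)$ and $\sum_i \min(y_i,0)$ within $\eps$ of the optimal value; a pigeonhole argument shows that once $\eps$ is polynomially small some element $i$ is forced to be decisive --- in every minimizer, or in none --- so we fix it (by contraction or deletion) and recurse on the remaining $n-1$ elements. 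After $O(n)$ rounds every element is resolved; each round costs $\Ot(n^2)$ value queries and $\Ot(n^3)$ time, and careful accounting of the precision needed across rounds yields the claimed $O(n^3\log^2 n)$ value queries and $O(n^4\log^{O(1)} n)$ running time.

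\textbf{Main obstacle.} The technical core is Step 2: establishing the $O(n\log(n/\eps))$ iteration bound when the feasible region may be lower-dimensional, and --- harder still --- the data-structural work of maintaining the regularized analytic center and its barrier Hessian under insertions and deletions of constraints in $\Ot(n^2)$ amortized time, so that the total arithmetic stays $\Ot(n^3)$. The strongly polynomial reduction in Step 3 is a secondary obstacle: one must show that precision $1/\poly(n)$, rather than the $2^{-\Omega(n)}$ a crude estimate would suggest, already forces a decisive element, which rests on the combinatorial structure of the base polytope $B(f)$.
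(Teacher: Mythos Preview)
The paper does not prove this theorem at all: it is quoted verbatim from~\cite{LeeSW15} and used as a black box. So there is no ``paper's own proof'' to compare against; your write-up is a sketch of the result in the cited reference, not of anything in the present paper.

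That said, your outline is a faithful high-level summary of the Lee--Sidford--Wong approach: the Lov\'asz extension reduction, a cutting-plane method with $O(n\log(n/\eps))$ oracle calls and $\tilde O(n^2)$ amortized per-iteration work, and an element-elimination scheme to upgrade from weakly to strongly polynomial. For the purposes of this paper, however, none of this machinery needs to be reproduced; a one-line citation suffices.
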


\section{Sublinear Time Cut Sparsification with Cut Size and Cut Edge Sampling Queries}
\label{sec:OED}

We now present an algorithm that, given access to a hypergraph $H$ through cut size queries (oracle $\OVA$) and queries to sample a random edge crossing a cut (oracle $\OED$), outputs a $(1 \pm \eps)$-approximate sparsifier with $\tilde{O}(n/\eps^2)$ hyperedges in $\poly(n)$ time. At a high-level, our algorithm will first create a $\poly(n)$ size sparsifier $H_1$ by indirectly implementing the algorithm underlying Theorem~\ref{thm:kk}. We then use the algorithm in Theorem~\ref{thm:linear} to construct a sparsifier $H_2$ of $H_1$ which has $\tilde{O}(n/\eps^2)$ hyperedges. By the definition of cut sparsifier, $H_2$ is also a cut sparsifier of $H$. We can thus focus on the construction of the sparsifier $H_1$.

The primary challenge in simulating the algorithm of Theorem~\ref{thm:kk} is to sample edges according to their strength with a small number of queries. Consider the following recursive algorithm. We start with the graph $H$, and then at each step, we find the minimum cut of the connected graph, and sample $\Theta((r+\log n)/\eps^2)$ edges from the cut. We then recursively execute this algorithm on each side of the cut. Algorithm~\ref{alg:strength} gives an implementation of this idea.

\begin{algorithm}[htp] 
    \SetAlgoLined
    \SetKwInOut{Input}{Input}\SetKwInOut{Output}{Output}
    \Input{A subset of vertices $V'\subseteq V$.}

    Let $(S,\bar{S})$ be a minimum cut of the induced graph $G[V']$\;
    Let $c$ be the number of edges crossing $(S,\bar{S})$ in $G[V']$\;
    Sample an integer $N\sim B(c, \frac{10(\log n + r)}{\epsilon^2 c})$\;
    Sample $N$ edges from $\delta_{G[V']}(S)$ uniformly at random, and assign each of them a weight of $\frac{\epsilon^2 c}{10(\log n + r)}$\;
    Delete all edges in $\delta_{G[V']}(S)$ and recurse on each of the newly created connected components\;
    \caption{Sampling edges with probability proportional to their strength} \label{alg:strength}
\end{algorithm}
\vspace{\baselineskip}

It is easy to see that this algorithm samples each edge independently, and that the sampling probability is at least that of Kogan-Krauthgamer in Theorem~\ref{thm:kk}. The challenge is that unlike cut queries in the normal graph, it is hard to compute the cut size in an induced subgraph of a hypergraph using only cut queries on the original graph, which is crucial as the algorithm proceeds recursively. 

We first note that this task is straightforward to do in graphs where each edge has exactly two vertices. For any two disjoint subsets of vertices $S,T$, the number of edges in $S \times T$ is $\frac{1}{2}(\card{\delta(S)}+\card{\delta(T)}-\card{\delta(S \cup T)})$. 

However, this is far from true in the hypergraph setting.
The problem is that there may be some hyperedges that intersect with each of $S$, $T$, and $V \setminus (S \cup T)$. These edges are inside all of $\delta(S)$, $\delta(T)$ and $\delta(S \cup T)$. We have \begin{align*}
    & \card{\delta(S)}+\card{\delta(T)}-\card{\delta(S \cup T)} \\
    = & 2\card{\{e | e \cap S \neq \emptyset, e \cap T \neq \emptyset, e \cap (\overline{S \cup T}) = \emptyset \}} + \card{\{e | e \cap S \neq \emptyset, e \cap T \neq \emptyset, e \cap (\overline{S \cup T}) \neq \emptyset \}} \\
    = & \card{\{e | e \cap S \neq \emptyset, e \cap T \neq \emptyset\}} + \card{\{e | e \cap S \neq \emptyset, e \cap T \neq \emptyset, e \cap (\overline{S \cup T}) = \emptyset \}}
\end{align*}

\begin{example} \label{exm:1}
Consider a hypergraph $H$ that consists of three equal size sets of vertices $A,B,C$, such that each hyperedge has a non-empty intersection with each of $A$, $B$, and $C$. Then there are no hyperedges in $H[A \cup B]$. But the quantity $\frac{1}{2}(\card{\delta(A)}+\card{\delta(B)}-\card{\delta(A \cup B)})$ is half the total number of hyperedges which could be exponentially large in $n$. 
\end{example}

\begin{example} 
Consider the following pair of hypergraphs on $4$ vertices, say $\{ v_1, v_2, v_3, v_4\}$: the graph $H_1$ is a (rank 2) clique on $4$ vertices while the graph $H_2$ contains every possible edge of size $3$ on these $4$ vertices. It is easy to verify that the answer to every cut query is the same on the graphs $H_1$ and $H_2$. Now consider the subgraph of these graphs induced by the vertices $X = \{ v_1, v_2 \}$. In case of $H_1$, the minimum cut in the induced subgraph is $1$ while in $H_2$, the minimum cut in the graph induced by $X$ is $0$. We can amplify this gap to $0$ versus $\Omega(n)$ by taking $n/4$ copies of $H_1$ in one case, and $n/4$ copies of $H_2$ in the other case, and defining $X$ to be union of arbitrarily chosen pairs of vertices from each copy. This means that $\OVA$ queries can not be used to estimate cut size in induced subgraphs to any multiplicative factor or to better than a polynomial additive error.
\end{example}

To get around the challenge highlighted by examples above, we next introduce notions of \textit{pseudo cut size} over a subset of vertices and \textit{pseudo strength} of hyperedges, such that the pseudo cut sizes are easy to compute by cut queries and pseudo strength of any hyperedge is at most a factor $n$ larger than the strength of the hyperedge. We develop these ideas in detail in the next subsection.

\subsection{Pseudo Cuts and Pseudo Strengths}
Given a set of vertices $X$, we define $\Delta_X(S)$, the \textit{pseudo cut} size of a set $S \subset X$ as $\frac{1}{2}(\card{\delta(S)}+\card{\delta(X \setminus S)}-\card{\delta(X)})$, and define the \textit{pseudo min cut} over $X$ as a cut $(S,X \setminus S)$ that minimizes $\Delta_X(S)$. Note that $\Delta_X(S)$ is at most the number of edges that intersect both $S$ and $X \setminus S$. The following lemma shows that $\Delta_X(S)$ is a submodular function, so we can compute the pseudo min cut over any vertex set in $\poly(n)$ time by Theorem~\ref{thm:submodular}.

\begin{lemma} \label{lem:pes-sub}
    For any vertex set $X \subseteq V$, $\Delta_X(S)$ is a submodular function.
\end{lemma}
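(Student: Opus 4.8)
The plan is to show submodularity of $\Delta_X(S)$ directly from the definition, by expressing it in terms of the (genuine) cut function $\delta$ of $H$, which we already know is submodular. Recall $\Delta_X(S) = \tfrac{1}{2}\big(\card{\delta(S)} + \card{\delta(X \setminus S)} - \card{\delta(X)}\big)$. Here $X$ is fixed, so the term $\card{\delta(X)}$ is a constant and contributes nothing to any submodularity inequality. Thus it suffices to prove that the function $g(S) := \card{\delta(S)} + \card{\delta(X \setminus S)}$, defined for $S \subseteq X$, is submodular on the lattice $2^X$.

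First I would recall that $S \mapsto \card{\delta(S)}$ is submodular on $2^V$ (the cut function of a hypergraph is submodular, as noted in the preliminaries), hence its restriction to the sublattice $2^X$ is submodular as well. Next I would observe that the map $\phi : S \mapsto X \setminus S$ is a lattice anti-automorphism of $2^X$: it swaps unions and intersections, i.e. $\phi(S \cup T) = \phi(S) \cap \phi(T)$ and $\phi(S \cap T) = \phi(S) \cup \phi(T)$. A standard fact is that precomposing a submodular function with such an order-reversing bijection yields another submodular function; concretely, writing $h(S) := \card{\delta(X \setminus S)}$, for any $S, T \subseteq X$ we have
\[
h(S) + h(T) = \card{\delta(X\setminus S)} + \card{\delta(X\setminus T)} \ge \card{\delta\big((X\setminus S)\cup(X\setminus T)\big)} + \card{\delta\big((X\setminus S)\cap(X\setminus T)\big)}
\]
and since $(X\setminus S)\cup(X\setminus T) = X \setminus (S\cap T)$ and $(X\setminus S)\cap(X\setminus T) = X\setminus(S\cup T)$, the right-hand side equals $h(S\cap T) + h(S\cup T)$. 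So $h$ is submodular on $2^X$. Finally, the sum of two submodular functions is submodular, so $g = \card{\delta(\cdot)} + h$ is submodular on $2^X$, and therefore $\Delta_X(S) = \tfrac12\big(g(S) - \card{\delta(X)}\big)$ is submodular (scaling by a positive constant and subtracting a constant both preserve submodularity).

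I do not anticipate a genuine obstacle here — the statement is essentially a bookkeeping exercise once one notices that complementation within $X$ is an anti-automorphism and that $\card{\delta(X)}$ is just an additive constant. The only point requiring a modicum of care is to make sure the complement is taken relative to $X$ (not relative to $V$) throughout, so that $X \setminus S$ really ranges over $2^X$ as $S$ does and the De Morgan identities $(X\setminus S)\cup(X\setminus T) = X\setminus(S\cap T)$ are applied correctly. Having established submodularity, the claimed algorithmic consequence — that a pseudo min cut over $X$ can be computed with $\tilde O(n^3)$ value queries and in $\poly(n)$ time — follows immediately from Theorem~\ref{thm:submodular}, noting that each evaluation of $\Delta_X(S)$ costs three $\OVA$ queries.
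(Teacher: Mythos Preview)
Your proof is correct but takes a somewhat different route from the paper's. The paper gives a combinatorial decomposition: it writes $\Delta_X(S) = \tfrac{1}{2}\big(f_1(S) + f_2(S)\big)$ where $f_2(S)$ is the cut function of the induced subhypergraph $H[X]$ and $f_1(S)$ is the cut function of the hypergraph on vertex set $X$ with edge set $\{e \cap X : e \in E\}$ (each hyperedge intersected with $X$); both are genuine cut functions of hypergraphs on $X$ and hence submodular. You instead stay at the level of the defining formula, observing that the $\card{\delta(X)}$ term is constant and that complementation within $X$ is a lattice anti-automorphism, so $S \mapsto \card{\delta(X\setminus S)}$ inherits submodularity from $\card{\delta(\cdot)}$. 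Your argument is arguably cleaner since it needs no auxiliary hypergraphs and only elementary closure properties of submodular functions; the paper's decomposition, on the other hand, offers a bit more structural insight into what $\Delta_X$ is actually counting (an average of two cut sizes on $X$), which ties back to the displayed identity for $\card{\delta(S)}+\card{\delta(T)}-\card{\delta(S\cup T)}$ preceding Example~\ref{exm:1}.
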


\begin{proof}
    Let $f_1(S)$ be the number of edges that intersect both $S$ and $X \setminus S$, and let $f_2(S)$ be the number of edges that intersect both $S$ and $X \setminus S$ but are fully contained in $X$. By definition, we have $\Delta_X(S) = \frac{1}{2}(f_1(S)+f_2(S))$, so to prove that $\Delta_X(S)$ is a submodular function, it is sufficient to prove that both $f_1$ and $f_2$ are submodular.

    Since $f_2$ is the cut function in the induced graph $H[X]$, it is submodular. In fact, $f_1$ is also the cut function of the hypergraph whose vertex set is $X$ and edge set is $\{e \cap X | e \in E\}$. So $f_1$ is also a submodular function.
\end{proof}

For any edge $e$, we define the \textit{pseudo strength} $k'_e$ as the largest pseudo min-cut size among all sets $X$ that contain $e$, where $X$ ranges over all subsets of $V$. It is easy to see that for any edge $e$, $k'_e$ is at least $k_e$ since for any set of vertices $X$, the minimum cut size of $H[X]$ is at most the pseudo min-cut size of set $X$. More interestingly, although Example~\ref{exm:1} showed that the pseudo min-cut size of a set $X$ may be arbitrarily larger than the minimum cut size of $H[X]$, the lemma below shows that the pseudo strength of an edge is at most a factor $n$ larger than its strength.

\begin{lemma} \label{lem:pes-bound}
    For any edge $e$, $k'_e \le n k_e$.
\end{lemma}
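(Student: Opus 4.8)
The plan is to fix an edge $e$ and a vertex set $X \ni e$ that achieves the pseudo strength, i.e. the pseudo min-cut size of $X$ equals $k'_e$, and then show that $H[X]$ itself contains an induced subgraph $H[Y]$ with $e \subseteq Y \subseteq X$ whose (ordinary) minimum cut is at least $k'_e / n$; this immediately gives $k_e \ge k'_e/n$, which is the claim. The point is that the pseudo cut function $\Delta_X(\cdot)$ can overcount only because of hyperedges that leave $X$, and these can be "charged away" by passing to a large induced sub-piece.

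First I would set up the following structural fact. Let $m_X$ denote the pseudo min-cut value over $X$, so every partition $(S, X\setminus S)$ with both sides nonempty has $\Delta_X(S) \ge m_X = k'_e$. Recall from the decomposition in Lemma~\ref{lem:pes-sub} that $\Delta_X(S) = \tfrac12(f_1(S)+f_2(S))$ where $f_2$ is the genuine cut function of $H[X]$ and $f_1$ is the cut function of the "truncated" hypergraph on vertex set $X$ with edge multiset $\{e\cap X : e\in E\}$. In particular $\Delta_X(S) \ge \tfrac12 f_2(S)$ is a weak bound; what we actually want is a lower bound on $f_2$ itself on a large sub-piece. The natural approach is a recursive/decomposition argument: repeatedly remove a minimum cut of the current induced piece of $H[X]$ whenever that minimum cut is small (smaller than $k'_e/n$), splitting the piece into strictly smaller pieces. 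Since $X$ has at most $n$ vertices, after at most $n-1$ such splits we obtain the strong-component decomposition of $H[X]$ into maximal induced subgraphs each of minimum cut $\ge k'_e/n$ — unless at some stage a removed cut reveals a contradiction with $\Delta_X(S) \ge k'_e$.

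The crux is to show $e$ survives all the way down into one of these strong pieces. Suppose for contradiction that every induced piece of $H[X]$ containing $e$ has minimum cut strictly less than $k'_e/n$. Then we can find a nested sequence $X = X_0 \supsetneq X_1 \supsetneq \cdots \supsetneq X_t \supseteq e$ where $X_{i+1}$ is one side of a min-cut of $H[X_i]$ of value $< k'_e/n$, continuing until... — and here is the key counting step — the hyperedges of $H$ that cross the partition $(X_t, X \setminus X_t)$ inside $X$ number fewer than $k'_e$, because each of the $\le n-1$ removed cuts contributed $< k'_e/n$ edges of $H[X]$ to the boundary, so $f_2(X_t) < (n-1)\cdot k'_e/n < k'_e$. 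But the hyperedges crossing $(X_t, X\setminus X_t)$ that are \emph{not} fully contained in $X$ are exactly those already crossing the cut $(X, V\setminus X)$ and also meeting $X_t$ — and by subtracting a single global boundary term $\card{\delta(X)}$ once, $\Delta_X$ is designed so that these do not inflate $\Delta_X(X_t)$ beyond $f_2(X_t)$ plus a controlled amount. Making this last bookkeeping precise — i.e. showing $\Delta_X(X_t) < k'_e$ and thereby contradicting the definition of $k'_e$ as the pseudo min-cut over $X$ — is the step I expect to be the main obstacle, since it requires carefully tracking how $f_1$ (the truncated-edge cut function) behaves under the iterated splitting and confirming that the "$-\card{\delta(X)}$" correction is spent at most once rather than once per level. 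If that inequality goes through, we conclude $e$ lies in an induced subgraph of $H$ (hence of the original hypergraph) with minimum cut $\ge k'_e/n$, so $k_e \ge k'_e/n$, i.e. $k'_e \le n k_e$.
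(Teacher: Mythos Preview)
Your approach has a genuine gap at exactly the step you flagged as the main obstacle, and unfortunately it is not just bookkeeping: the strategy of recursing \emph{inside} $H[X]$ cannot control $\Delta_X(X_t)$.

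Recall that $\Delta_X(S) = f_2(S) + \tfrac12\cdot\card{\{e': e'\cap S\neq\emptyset,\ e'\cap(X\setminus S)\neq\emptyset,\ e'\cap(V\setminus X)\neq\emptyset\}}$, i.e.\ the genuine cut value inside $H[X]$ plus one-half the number of ``triple'' edges that hit $S$, $X\setminus S$, and $V\setminus X$ simultaneously. Your recursion bounds $f_2(X_t) < k'_e$ correctly, but the $-\card{\delta(X)}$ term in the definition of $\Delta_X$ only cancels the contribution of boundary edges that hit \emph{one} side of the partition; edges hitting both sides still contribute $\tfrac12$ each. Since your entire recursion lives in $H[X]$, it never sees these triple edges and gives no handle on their number. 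Example~\ref{exm:1} is exactly this phenomenon: $H[A\cup B]$ is edgeless (so $f_2\equiv 0$), yet every pseudo cut of $A\cup B$ has value $\Theta(\text{total edges})$ coming purely from triple edges. There is no way to certify $\Delta_X(X_t)<k'_e$ from information internal to $H[X]$.

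The paper's proof runs the recursion in the opposite direction: it starts from $Y=V$ and repeatedly takes a minimum cut of $H[Y]$, recursing on the side containing all of $X$, and stops the first time the min cut of $H[Y]$ separates $X$. Because $e\subseteq X\subseteq Y$ throughout, each such min cut has value at most $k_e$ by the definition of strength, so at most $(n-2)k_e$ edges are removed in total. The key structural fact is that after this peeling, every edge of $H$ that touches $Y$ and was not removed is \emph{fully contained} in $Y$; hence any edge intersecting both $S' := S\cap X$ and $X\setminus S'$ is either one of the removed edges or crosses the final min cut of $H[Y]$. This bounds the full count of edges hitting both sides of $(S',X\setminus S')$---including the triple edges---by $(n-1)k_e$, whence $\Delta_X(S')\le (n-1)k_e$ and the lemma follows. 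The essential difference is that by working down from $V$ rather than from $X$, the paper's argument automatically accounts for edges leaving $X$.
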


\begin{proof}
    Let $X$ be any set of vertices that contains the edge $e$ and has pseudo min-cut size $k'_e$ in $H$. To prove the lemma, it is sufficient to prove that the pseudo min-cut size of $X$ in $H$ is at most $n k_e$.

    Let $Y=V$ and $E_c=\emptyset$. Consider the following iterative process: we find the minimum cut $(S,Y \setminus S)$ in $H[Y]$. If either $S$ or $Y \setminus S$ fully contains the set $X$, we add all edges crossing the cut into $E_c$, and set $Y$ to be $S$ or $Y \setminus S$ (whichever fully contains $X$), and repeat. Otherwise, we stop the process.

    After the process terminates, suppose $(S,Y \setminus S)$ is the minimum cut in $H[Y]$. Since the process terminated, $(S, Y\setminus S)$ must partition $X$. Let $S'=S \cap X$, and consider the pseudo cut $(S',X \setminus S')$. We prove that the number of edges in $H$ that intersect both $S'$ and $X \setminus S'$ (which is an upper bound on $\Delta_X(S')$) is at most $n k_e$. 

    First, note that no edge $e'$ such that $e'\not\subseteq Y$ and $e'\notin E_c$ can intersect with the set $Y$; hence any such edge $e'$ also does not intersect with $S'$ or $X \setminus S'$. Therefore every edge that intersects with both $S'$ and $X\setminus S'$ either belongs to $E_c$ or is completely contained in $Y$. During the iterative process, the set $Y$ always fully contains $e$, so by the definition of strength, the minimum cut size of $H[Y]$ is at most $k_e$. This implies during each step, at most $k_e$ edges are added into $E_c$. On the other hand, the process repeats at most $n-2$ times, since each time the size of $Y$ is reduced by at least 1. So $\card{E_c} \le (n-2)k_e$. Finally, any edge that is fully contained in $Y$ and intersects with both $S'$ and $X \setminus S'$ crosses the cut $(S,Y \setminus S)$ in $H[Y]$, and the number of such edges is at most the minimum cut size of $H[Y]$, which is at most $k_e$. So in total, there are at most $\card{E_c} + k_e\le (n-1)k_e$ edges that intersect both $S'$ and $X \setminus S'$.
\end{proof}

\subsection{Sampling the Edges}

We are now ready to present an algorithm that uses the cut size queries and cut edge sample queries to sample each edge with probability inversely proportional to its strength. Specifically, we will ensure that each edge $e$ gets sampled with probability at least $n^2/k'_e$ which is at least $n/k_e$ by Lemma~\ref{lem:pes-bound}. The algorithm is similar to Algorithm~\ref{alg:strength}, but uses pseudo cuts and pseudo strengths instead. To sample the edges, we call Algorithm~\ref{alg:pesudo} on set $V$.

\begin{algorithm}[htp] 
    \SetAlgoLined
    \SetKwInOut{Input}{Input}\SetKwInOut{Output}{Output}

    \Input{A subset of vertices $V'\subseteq V$}

    Find the pseudo min-cut $(S, V' \setminus S)$ within the set $V'$\;
    Let $c$ be $\card{\delta(S)}$, the cut size of $(S,V\setminus S)$ \;
    Sample an integer $N\sim B(c, \min\{1,\frac{10n^3}{\epsilon^2 c}\})$\;
    Keep sampling edges in cut $(S,\bar{S})$ until we get $N$ different hyperedges. \;
    Recurse on both $S$ and $V' \setminus S$ \;
    \caption{Sampling edges with probability proportional to their pseudo strength} \label{alg:pesudo}
\end{algorithm}

\vspace{\baselineskip}

We now prove that each edge gets sampled with probability at least as large as the sampling probability in Theorem~\ref{thm:kk}. Fix an edge $e$, let $S_1$ be the last input set that fully contains $e$. For any $i \ge 1$, if $S_i$ is not $V$, we define $S_{i+1}$ to be the input set in the recursion that generates a recursive call of the algorithm on the set $S_i$. In other word, $(S_i,S_{i+1} \setminus S_i)$ is the pseudo min cut within set $S_{i+1}$. Let $(S_0,S_1 \setminus S_0)$ be the pseudo min cut within $S_1$, by definition, $e \cap S_0 \neq \emptyset$ and $e \cap S_1 \setminus S_0 \neq \emptyset$. When the algorithm works on set $S_1$, $e$ gets sampled with probability $\min\{1,\frac{10 n^3}{\eps^2\card{\delta(S_0)}}\}$. If $e$ gets sampled with probability $1$, then it is clearly as large as the probability in Theorem~\ref{thm:kk}. Otherwise we need to prove that $n^3/\card{\delta(S_0)} = \Omega( (\log n + r)/k_e)$. Since $n = \Omega(\log n+r)$, by Lemma~\ref{lem:pes-bound}, it is sufficient to prove that $\card{\delta(S_0)} \le n k'_e$. 

\begin{lemma} \label{lem:sample}
    $\card{\delta(S_0)} \le n k'_e$.
\end{lemma}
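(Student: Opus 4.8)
The plan is to charge each edge of $\delta(S_0)$ to a pseudo min-cut along the nested chain of recursive calls lying above $S_1$, using that each such pseudo min-cut has size at most $k'_e$ and that the chain has length at most $n$. Concretely, I would first make the chain explicit: let $S_1 \subsetneq S_2 \subsetneq \cdots \subsetneq S_m = V$ be the nested sequence of input sets produced by Algorithm~\ref{alg:pesudo} on the path from $V$ down to $S_1$, so that $(S_{j-1}, S_j \setminus S_{j-1})$ is the pseudo min-cut within $S_j$ for $2 \le j \le m$, and recall that $(S_0, S_1 \setminus S_0)$ is by definition the pseudo min-cut within $S_1$. Every inclusion above is strict because a pseudo min-cut is a genuine bipartition, and $|S_1| \ge 2$ since $e \subseteq S_1$; hence $m \le n-1$. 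Moreover $e \subseteq S_1 \subseteq S_j$ for every $j \ge 1$, so $\Delta_{S_j}(S_{j-1})$ is the pseudo min-cut size of a vertex set containing $e$ and is therefore at most $k'_e$ by the definition of pseudo strength.

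Next I would extract from the identity already displayed for $\card{\delta(S)}+\card{\delta(T)}-\card{\delta(S\cup T)}$, applied with $S = S_{j-1}$ and $T = S_j \setminus S_{j-1}$ (so $S \cup T = S_j$), the fact that $2\Delta_{S_j}(S_{j-1})$ equals the number of edges crossing $(S_{j-1}, S_j \setminus S_{j-1})$ plus the number of those edges that are additionally fully contained in $S_j$. Since the latter count is at most the former, this yields that the number of edges fully contained in $S_j$ and crossing $(S_{j-1}, S_j \setminus S_{j-1})$ is at most $\Delta_{S_j}(S_{j-1}) \le k'_e$.

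Then comes the charging step, which is the only real content. Fix $f \in \delta(S_0)$; then $f$ meets $S_0$ but $f \not\subseteq S_0$, while $f \subseteq V = S_m$, so there is a smallest index $j \ge 1$ with $f \subseteq S_j$. For that $j$ we have $f \not\subseteq S_{j-1}$ but $f$ still meets $S_{j-1}$ (as $S_0 \subseteq S_{j-1}$), so $f$ crosses $(S_{j-1}, S_j \setminus S_{j-1})$ and is fully contained in $S_j$; thus $f$ is one of the at most $k'_e$ edges bounded in the previous step for this particular $j$. Since each edge of $\delta(S_0)$ is charged to exactly one index $j \in \{1,\dots,m\}$, we obtain $\card{\delta(S_0)} \le m\,k'_e \le (n-1)\,k'_e \le n\,k'_e$, as desired.

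I expect the subtle point to be precisely this charging argument: many edges of $\delta(S_0)$ can leave $S_1$ entirely and hence contribute nothing to the pseudo-cut of $S_1$ itself — these are exactly the edges absorbed by the larger sets $S_j$ higher in the chain, which is why bounding the chain length is essential and why a factor of $n$ (rather than a constant) is the right answer.
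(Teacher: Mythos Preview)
Your proposal is correct and is essentially the same argument as the paper's: the paper partitions $\delta(S_0)$ into sets $E_i$ of edges contained in $S_{i+1}$ but not $S_i$, which is exactly your charging of each $f$ to the smallest $j$ with $f\subseteq S_j$, and then bounds $|E_i|\le \Delta_{S_{i+1}}(S_i)\le k'_e$ using the same identity and the same observation that $e\subseteq S_{i+1}$. Your write-up is in fact a bit more explicit than the paper's about why the ``fully contained'' count is at most $\Delta_{S_j}(S_{j-1})$, but the structure and content are identical.
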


\begin{proof}
    We partition the edges crossing the cut $(S_0,\overline{S_0})$ into sets $E_1,E_2,\dots$ such that for any $i \ge 0$, $E_i$ is the set of edges that are fully contained in $S_{i+1}$ but not in $S_i$. Note that $\card{\delta(S_0)} = \sum_i \card{E_i}$. Since the algorithm has at most $n$ levels of recursion, to prove the lemma, it is sufficient to prove $\card{E_i} \le k'_e$ for all $i \ge 0$.

    For any edge $e' \in E_i$, $e' \cap S_i \neq \emptyset$ since $e'$ crosses the cut $(S_0,\overline{S_0})$ and $S_0 \subseteq S_i$. We also have $e' \cap S_{i+1} \setminus S_i \neq \emptyset$ and $e' \cap \overline{S_{i+1}} = \emptyset$ since $e'$ is fully contained in $S_{i+1}$ but not $S_i$. So $\card{E_i} \le \Delta_{S_{i+1}}(S_i)$. On the other hand, by definition of pseudo strength, $k'_e \ge \Delta_{S_{i+1}}(S_i)$ since $e$ is fully contained in $S_{i+1}$. Therefore, $\card{E_i} \le k'_e$.
\end{proof}

By Lemma~\ref{lem:sample}, we proved that each edge $e$ is sampled with probability at least the required probability in Theorem~\ref{thm:kk}. Next, we need to assign weights to each sampled edge. 

We do this after we finish sampling. For each edge $e$ that gets sampled, we need to know the probability that it gets sampled. Since we sample edges from each cut independently, we only need to know the probability that $e$ gets sampled during each recursive call, and that probability depends only on the size of the cut and whether $e$ crosses the cut. So we can compute the probability that $e$ gets sampled during Algorithm~\ref{alg:pesudo}.

To complete the proof of Theorem~\ref{thm:main}, we need to show that the running time of the whole process is polynomial in $n$.

\begin{proof}[Proof of Theorem~\ref{thm:main}]
    During each call to Algorithm~\ref{alg:pesudo}, we need $\tilde{O}(n^3)$ queries to cut size query oracle and $\tilde{O}(n^4)$ time to figure out the pseudo min-cut within the set $V'$ by Theorem~\ref{thm:submodular} and Lemma~\ref{lem:pes-sub}. At line 4, we call cut edge sample query $10n^3/\eps^2$ times in expectation. Total number of recursive calls to Algorithm~\ref{alg:pesudo} is $O(n)$, since each time, the input set gets partitioned into two sets, and there are $n$ sets in the end. Thus the running time of Algorithm~\ref{alg:pesudo} is $\tilde{O}(n^5 + n^4/\eps^2)$.

    We sample the edges in $O(n)$ cuts, so when assigning the weights, we only need to query the size of these $O(n)$ cuts and calculate the probability of each sampled edge, which can also be done in $O(n)$ time for each edge. So the running time of assigning the weights is $\tilde{O}(n^5/\eps^2)$.

    After sampling the edges and assigning weights, we get a $(1 \pm \eps)$-approximate cut sparsifier $H_1$ of $H$ with polynomial size in $n$. Then we run the algorithm in Theorem~\ref{thm:linear} to find a $(1+\eps)$-approximate cut sparsifier $H_2$ of $H_1$ with $\tilde{O}(n/\eps^2)$ number of edges in polynomial time in $n$. By definition of cut sparsifier, $H_2$ is a $(1 \pm \eps)^2$-approximate cut sparsifier of $H$. Since $H_1$ contains $\tilde{O}(n^4/\eps^2)$ edges, by Theorem~\ref{thm:linear}, the running time is $O(n^{10}/\eps^7)$.

    So the total running time is $O(n^{10}/\eps^7)$.
\end{proof}

\section{Sublinear Time Cut Sparsification with Cut Size and Pair Neighbor Queries}
\label{sec:ONTWO}

In this section, we show that cut edge queries can be simulated by a $\poly(n)$ number of cut size queries (oracle $\OVA$) and pair neighbor queries (oracle $\ONTWO$), establishing that cut size query oracle and pair neighbor query oracle are also sufficient to compute a $(1 \pm \eps)$-approximate cut sparsifier in $\poly(n)$ time. 

Given a pair of vertices $u$ and $v$, let $E(\{u,v\})$ be the set of edges that contain both $u$ and $v$. We first show how to approximate $|E(\{u,v\})|$ to within a factor of $(1\pm \eps)$ with probability $1-\xi$, for some small $\xi$. Note that we can compute $2 \Delta_{\{u,v\}}(\{u\}) = |E(\{u,v\})| + |E \cap \{u,v\}|$, where $|E\cap \{u,v\}|$ is the number of copies of the edge $\{u,v\}$. We now describe an algorithm to approximate $|E(\{u,v\})|$:

\begin{algorithm}[H] \label{alg:pair}
\SetAlgoLined
	\SetKwInOut{Input}{Input}\SetKwInOut{Output}{Output}

	\Input{A pair of vertices $u,v\in V$, $\eps$, $\xi$.}
    \Output{An approximation $\hat{E}(\{u,v\})$ of $|E(\{u,v\})|$.}
	
	Define $k = 12\log(2/\xi)/(\eps^2)$.

    Call the oracle $\ONTWO$ $k$ times on $(u,v)$, and let $\hat{\alpha}$ be the fraction of returned edges that were $\{u,v\}$.

    Return $\hat{E}(\{u,v\}):=2\Delta_{\{u,v\}}(\{u\})\cdot \frac{1}{1+\hat{\alpha}}$.
 \caption{Approximating $|E(\{u,v\})|$}
\end{algorithm}

Note that this algorithm makes $k = O(\frac{\log(1/\xi)}{\eps^2})$ queries.

\begin{lemma}
    With probability at least $1-\xi$, $\hat{E}(\{u,v\})$ is an approximation of $|E(\{u,v\})|$ to within a factor of $(1 \pm \eps)$.
\end{lemma}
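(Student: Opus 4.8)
The plan is to reduce the statement to a standard concentration bound on the empirical frequency $\hat\alpha$, and then to check that the map $\hat\alpha \mapsto 2\Delta_{\{u,v\}}(\{u\})/(1+\hat\alpha)$ turns an additive error of $\eps/2$ on $\hat\alpha$ into a multiplicative $(1\pm\eps)$ error on the returned estimate. First I would dispose of the degenerate case $|E(\{u,v\})|=0$: here $2\Delta_{\{u,v\}}(\{u\}) = |E(\{u,v\})| + |E\cap\{u,v\}| = 0$ can be detected with $O(1)$ calls to $\OVA$, and the algorithm can simply output $0$, which is exact. So assume $|E(\{u,v\})|\ge 1$ and set $\alpha := |E\cap\{u,v\}|/|E(\{u,v\})| \in [0,1]$, the true probability that a uniformly random element of $E(\{u,v\})$ is the rank-$2$ edge $\{u,v\}$.

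Next I would record two facts. (i) Each call to $\ONTWO$ on $(u,v)$ returns an independent uniformly random element of $E(\{u,v\})$, so the indicator that the $i$-th returned edge equals $\{u,v\}$ is an independent $\mathrm{Bernoulli}(\alpha)$ random variable; hence $k\hat\alpha \sim B(k,\alpha)$ and $\E[\hat\alpha]=\alpha$. (ii) Specializing the decomposition $\Delta_X(S)=\frac12(f_1(S)+f_2(S))$ from the proof of Lemma~\ref{lem:pes-sub} to $X=\{u,v\}$, $S=\{u\}$ gives the exact identity $2\Delta_{\{u,v\}}(\{u\}) = |E(\{u,v\})| + |E\cap\{u,v\}| = (1+\alpha)\,|E(\{u,v\})|$. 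Combining (ii) with the definition of the output yields $\hat E(\{u,v\})/|E(\{u,v\})| = (1+\alpha)/(1+\hat\alpha)$, so it suffices to show that this ratio lies in $[1-\eps,1+\eps]$ with probability at least $1-\xi$.

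For the concentration step I would apply a Chernoff/Hoeffding bound to the $k$ i.i.d.\ indicators of fact (i): $\Pr\!\big[\,|\hat\alpha-\alpha| > \eps/2\,\big] \le 2\exp(-k\eps^2/2)$, which is at most $\xi$ for the chosen $k = 12\log(2/\xi)/\eps^2$ (already $k \ge 2\ln(2/\xi)/\eps^2$ suffices, so there is room to spare regardless of the logarithm base). Now condition on the event $|\hat\alpha-\alpha|\le \eps/2$ and use $\alpha,\hat\alpha\in[0,1]$ and $\eps<1$. For the lower bound, $(1+\alpha)/(1+\hat\alpha) \ge (1+\alpha)/(1+\alpha+\eps/2) = 1 - \tfrac{\eps/2}{1+\alpha+\eps/2} \ge 1-\eps/2 \ge 1-\eps$. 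For the upper bound, since $1+\hat\alpha \ge 1+\alpha-\eps/2 \ge 1-\eps/2>0$, we get $(1+\alpha)/(1+\hat\alpha) \le (1+\alpha)/(1+\alpha-\eps/2) = 1 + \tfrac{\eps/2}{1+\alpha-\eps/2} \le 1 + \tfrac{\eps/2}{1-\eps/2} \le 1+\eps$, the last step using $\eps\le 1$.

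The only genuinely delicate point is the last calculation: it is the monotone estimate converting the two-sided additive guarantee on $\hat\alpha$ into a two-sided multiplicative guarantee on the ratio, and in particular the reason one samples to additive accuracy $\eps/2$ rather than $\eps$ is that in the upper-bound case the denominator $1+\hat\alpha$ can drop below $1+\alpha$, so the error is amplified by a factor of roughly $1/(1-\eps/2) \le 2$. Everything else — the identity for $2\Delta_{\{u,v\}}(\{u\})$, the uniformity of $\ONTWO$, and the Hoeffding bound — is routine, so I do not expect any real obstacle.
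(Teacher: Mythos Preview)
Your proposal is correct and follows essentially the same approach as the paper: both apply a Chernoff/Hoeffding bound to the empirical frequency $\hat\alpha$ to get $|\hat\alpha-\alpha|\le\eps/2$ with probability $1-\xi$, then convert this additive error into a multiplicative $(1\pm\eps)$ error on $2\Delta_{\{u,v\}}(\{u\})/(1+\hat\alpha)$ via the identity $2\Delta_{\{u,v\}}(\{u\})=(1+\alpha)|E(\{u,v\})|$. The only cosmetic differences are that you handle the degenerate case $|E(\{u,v\})|=0$ explicitly and bound the ratio $(1+\alpha)/(1+\hat\alpha)$ directly, whereas the paper factors as $\frac{1}{1+\hat\alpha}\in\frac{1}{(1\pm\eps/2)(1+\alpha)}$ and then uses $\frac{1}{1\mp\eps/2}\subseteq(1\pm\eps)$.
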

\begin{proof}

    Let $\alpha := \frac{|E\cap \{u,v\}|}{|E(\{u,v\})|}$ be the fraction of hyperedges that are $\{u,v\}$. The algorithm runs a Monte Carlo simulation to approximate $\alpha$ by the ratio $\hat{\alpha}$. In order to prove concentration of $\hat{\alpha}$ around $\alpha$, let $k'$ be the total number of $\{u,v\}$ edges returned, and observe that $k'$ is the sum of $k$ independent Bernoulli random variables each having probability equal to $\alpha$. By Chernoff bound, $\Pr[|k' - \alpha k| > \eps k/2]\leq 2\exp(-\alpha k \eps^2/12\alpha) \leq 2\exp(-k \eps^2/12) = 2\exp(\log(\xi/2)) = \xi$. Therefore with probability at least $1-\xi$, $|\hat{\alpha} - \alpha| \leq \eps/ 2$. This implies that $\frac{1}{1+\hat{\alpha}}\in \frac{1}{1+\alpha \pm \eps/2} \subseteq \frac{1}{(1 \pm \eps/2)(1+\alpha)}$. Finally, we use that $\frac{1}{(1\mp \eps/2)}\subseteq (1\pm \eps)$ to conclude that $\frac{1}{1+\hat{\alpha}}\in \frac{1\pm \eps}{1+\alpha}$, so 
	\[\frac{2\Delta_{\{u,v\}}(\{u\})}{1+\hat{\alpha}}\in (1\pm \eps)\frac{2\Delta_{\{u,v\}}(\{u\})}{1+\alpha} = (1\pm \eps)|E(\{u,v\})|.\]

\end{proof}

We now describe an algorithm to sample a random edge from $\delta(S)$, simulating a response to $\OED$. We first approximate the size of $E(\{u,v\})$ for each pair of vertices $u\in S$ and $v \in \bar{S}$. Then we sample a pair of $u,v$ with probability proportional to $\card{E(\{u,v\})}$, sample an edge in $E(\{u,v\})$, and then decide whether we keep it or not with probability proportional to its size. If we decide not to pick the edge, we repeat the whole process again.

 \begin{algorithm}[H]\label{alg:sample}
\SetAlgoLined
	\SetKwInOut{Input}{Input}\SetKwInOut{Output}{Output}

	\Input{A subset $S\subseteq V$}
	\Output{An edge $e\in \delta(S)$}
        For each pair of vertices $u,v$ such that $u \in S$ and $v \in \bar{S}$, call Algorithm~\ref{alg:pair} with $\xi = 1/n^{20}$, and let $\hat{E}(\{u,v\})$ be the output\;
        Sample a pair of vertices $(u,v)\in S\times \bar{S}$ with probability proportional to $\hat{E}(\{u,v\})$\;
		Use the oracle $\ONTWO$ to sample an edge $e$ in $E(\{u,v\})$\;
        With probability $\frac{1}{|e\cap S|\cdot |e\cap \bar{S}|}$, return	$e$. Otherwise go to Step 2.
	
 \caption{Sampling an edge in $\delta(S)$}
\end{algorithm}

\begin{lemma}
    With probability at least $1 - 1/n^{-10}$, Algorithm~\ref{alg:sample} samples each edge in $\delta(S)$ gets with probability $\frac{1 \pm \eps}{\card{\delta(S)}}$. The expected running time is $\tilde{O}(n^2/\eps^2)$.
\end{lemma}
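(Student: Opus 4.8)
The argument splits into correctness of the output distribution and a running‑time count. Throughout I condition on the event $\mathcal{G}$ that \emph{every} estimate $\hat E(\{u,v\})$ produced in Step~1 satisfies $\hat E(\{u,v\})\in(1\pm\eps)\card{E(\{u,v\})}$. By the guarantee of Algorithm~\ref{alg:pair} run with $\xi=1/n^{20}$, each individual estimate is off with probability at most $1/n^{20}$, and Step~1 invokes it on fewer than $n^2$ pairs, so a union bound gives $\Pr[\mathcal{G}]\ge 1-n^{-18}$; this failure probability is what gets absorbed into the $1-n^{-10}$ of the statement. (Pairs with $E(\{u,v\})=\emptyset$ get estimate $0$, since then $2\Delta_{\{u,v\}}(\{u\})=0$, and are never picked in Step~2, so they can be ignored; and if $\delta(S)=\emptyset$ the routine is not invoked.)

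\textbf{Per‑pass output probability and the key cancellation.} Write $Z=\sum_{(u,v)\in S\times\bar{S}}\hat E(\{u,v\})$ and fix $e\in\delta(S)$. A single pass through Steps~2--4 returns $e$ exactly when Step~2 picks some pair $(u,v)$ with $u\in e\cap S$ and $v\in e\cap\bar{S}$ (these are precisely the $\card{e\cap S}\cdot\card{e\cap\bar{S}}$ pairs of $S\times\bar{S}$ both of whose coordinates lie in $e$), Step~3 then draws $e$ among the $\card{E(\{u,v\})}$ edges of $E(\{u,v\})$, and Step~4 accepts, with probability $1/(\card{e\cap S}\cdot\card{e\cap\bar{S}})$. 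Multiplying the pair‑selection probability $\hat E(\{u,v\})/Z$, the uniform draw $1/\card{E(\{u,v\})}$, and the acceptance probability, and summing over admissible pairs,
\[
 p_e \;=\; \frac{1}{Z\,\card{e\cap S}\,\card{e\cap\bar{S}}}\;\sum_{u\in e\cap S}\;\sum_{v\in e\cap\bar{S}}\;\frac{\hat E(\{u,v\})}{\card{E(\{u,v\})}}.
\]
On $\mathcal{G}$ every summand lies in $[1-\eps,1+\eps]$ and the double sum has exactly $\card{e\cap S}\cdot\card{e\cap\bar{S}}$ terms, so the combinatorial factor cancels and $p_e\in[(1-\eps)/Z,(1+\eps)/Z]$ — the \emph{same} interval for every $e\in\delta(S)$. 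This exact bookkeeping is the one step I expect to require care; everything else is routine.

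\textbf{From per‑pass to final distribution.} The probability a pass returns any edge is $P:=\sum_{e\in\delta(S)}p_e\in[(1-\eps)\card{\delta(S)}/Z,(1+\eps)\card{\delta(S)}/Z]$, in particular $P>0$, so the loop terminates with probability $1$. Since passes are independent and the algorithm outputs the edge of the first successful pass, $e$ is returned with probability $p_e/P\in[\tfrac{1-\eps}{1+\eps},\tfrac{1+\eps}{1-\eps}]\cdot\tfrac{1}{\card{\delta(S)}}\subseteq\tfrac{1\pm 4\eps}{\card{\delta(S)}}$ for $\eps\le 1/2$. Running Algorithm~\ref{alg:pair} with parameter $\eps/4$ instead of $\eps$ (which affects its query count only by a constant factor) turns this into the claimed $(1\pm\eps)/\card{\delta(S)}$ guarantee.

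\textbf{Running time.} Step~1 runs, for each of the $O(n^2)$ pairs $(u,v)\in S\times\bar{S}$, a constant number of $\OVA$ queries to evaluate $\Delta_{\{u,v\}}(\{u\})=\tfrac12\big(\card{\delta(\{u\})}+\card{\delta(\{v\})}-\card{\delta(\{u,v\})}\big)$ together with the $O(\log n/\eps^2)$ $\ONTWO$ queries of Algorithm~\ref{alg:pair}, for a total of $\tilde O(n^2/\eps^2)$ oracle calls. For Steps~2--4 the number of passes is geometric with success probability $P=\Theta(\card{\delta(S)}/Z)$; since $\hat E(\{u,v\})=O(\card{E(\{u,v\})})$ and $\sum_{(u,v)\in S\times\bar{S}}\card{E(\{u,v\})}=\sum_{e\in\delta(S)}\card{e\cap S}\cdot\card{e\cap\bar{S}}=O(n^2\card{\delta(S)})$, we get $Z=O(n^2\card{\delta(S)})$ and hence an expected $O(n^2)$ passes, each costing one $\ONTWO$ query plus low‑order local bookkeeping (after an $O(n^2)$ one‑time construction of the cumulative distribution over pairs). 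Thus the oracle‑call complexity, which dominates, is $\tilde O(n^2/\eps^2)$, giving the stated expected running time.
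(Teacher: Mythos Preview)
Your proof is correct and follows essentially the same approach as the paper: condition on all pairwise estimates being accurate (union bound over $O(n^2)$ pairs with $\xi=1/n^{20}$), compute the per-iteration sampling probability of a fixed edge $e$ and observe that the $\card{e\cap S}\cdot\card{e\cap\bar S}$ factor cancels, and conclude that all edges are sampled with nearly equal probability. Your write-up is in fact a bit more careful than the paper's in handling the final ratio $p_e/P$ (noting the $(1\pm 4\eps)$ and the $\eps/4$ fix); for the running time, the paper takes the slightly shorter route of observing that the Step~4 acceptance probability $1/(\card{e\cap S}\cdot\card{e\cap\bar S})$ is always at least $1/n^2$, which immediately bounds the expected number of passes by $n^2$ without going through the $Z=O(n^2\card{\delta(S)})$ computation.
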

\begin{proof}
    We first condition on the $|S|\cdot |\bar{S}|\leq n^2$ events that for each pair $u, v$ with  $u\in S$ and $v\in \bar{S}$, the estimate $\hat{E}(\{u,v\})$ was indeed in $(1\pm \eps)\cdot |E(\{u,v\})|$, which happens with probability at least $1 - n^2 \xi > 1- 1/n^{-10}$. Now fix an edge $e \in \delta(S)$. The probability that it was sampled at a particular iteration of Algorithm~\ref{alg:sample} is
    \[ \sum_{u\in S\cap e, v\in \bar{S}\cap e}\frac{\hat{E}(\{u,v\})}{\sum_{(u',v') \in S \times \bar{S}}\hat{E}(\{u',v'\})}\cdot \frac{1}{|E(\{u,v\})|}\cdot \frac{1}{|e\cap S|\cdot |e\cap \bar{S}|} \]
    \[ \in \frac{1}{\sum_{(u',v') \in S \times \bar{S}}\hat{E}(\{u',v'\})}\sum_{u\in S\cap e, v\in \bar{S}\cap e}\frac{(1\pm \eps)}{|e\cap S|\cdot |e\cap \bar{S}|} = \frac{(1\pm \eps)}{\sum_{(u',v') \in S \times \bar{S}}\hat{E}(\{u',v'\})}\]

That is, the probability of sampling each edge at any given iteration of Algorithm~\ref{alg:sample} is within $(1\pm \eps)$ of every other edge. Therefore the probability of sampling each edge is within a factor of $(1\pm \eps)$ of every other edge.

    Step $1$ calls Algorithm~\ref{alg:pair} $O(n^2)$ times, so the running time is $\tilde{O}(n^2/\eps^2)$. At step $4$, the probability that we keep the edge and finish the algorithm is at least $1/n^2$, so the expected number of iterations through step 2 to 4 is at most $n^2$. So the total running time on step 2 to 4 is $\tilde{O}(n^2)$ in expectation.
\end{proof}

\begin{proof}[Proof of Theorem~\ref{thm:neighbor}]
    We run Algorithm~\ref{alg:pesudo}, but each time it calls $\OED$, we instead run Algorithm~\ref{alg:sample} twice. With high probability, each time we simulate $\OED$ by Algorithm~\ref{alg:sample}, the probability of any edge in the cut being sampled is within a $(1\pm \eps)$ factor of the uniform distribution. Denote by $q'_e$ be the probability that an edge $e$ is sampled by this algorithm, and let $q_e$ be the probability that the edge $e$ is sampled in Algorithm~\ref{alg:pesudo}. We have $q'_e \in 2(1 \pm \eps) q_e$, which is larger than $p_e$ the probability of sampling an edge in Theorem~\ref{thm:kk}. Also we cannot directly compute $q'_e$, but we can approximate it to within a factor of $(1 \pm \eps)$, which only adds another $(1 \pm \eps)$ factor to the approximation achieved by the cut sparsifier.

    Since the number of calls to $\OED$ oracle in Algorithm~\ref{alg:pesudo} is $\tilde{O}(n^4/\eps^2)$. So we need $\tilde{O}(n^6/\eps^2) = o(n^{10}/\eps^7)$ queries to simulate these calls. So the running time of the algorithm is still $O(n^{10}/\eps^7)$.
\end{proof}

\section{Sublinear Time Hypergraph Spectral Sparsification} 
\label{sec:spectral}

In this section, we consider the problem of hypergraph spectral sparsification in the access models considered in the previous sections. We will focus on unweighted hypergraphs here, and show that these results can be extended to the weighted case in Section~\ref{sec:weighted}. We first consider the setting when we can access the underlying hypergraph using $\OVA$ and $\ONTWO$ queries. It is easy to simulate the approach in Theorem~\ref{thm:spectral-S19} since $\ONTWO$ allows us to sample from $E(\{u,v\})$ for any $u$ and $v$. For any pair of vertices $u$ and $v$, we sample $Cn \log n /\eps^2$ edges in $E(\{u,v\})$. For any hyperedge $e$, $e$ gets sampled with probablity $q_e$ which is at least the required value $p_e$. Next we need to assign weights to the sampled edges. For any pair of vertices $u$ and $v$, we use Algorithm~\ref{alg:pair} to approximate $\card{E(\{u,v\})}$. Then for any $e$, we approximate $q_e$ by the approximation of $\card{E(\{u,v\})}$ for all pair $u$ and $v$ in $e$, and then assign the weight of $e$ as $1/q_e$.

\begin{proof} [Proof of Theorem~\ref{thm:spectral-neighbor}]
    For each pair of vertices $u$ and $v$, we sample $\tilde{O}(n)$ edges in $E(\{u,v\})$, and also use Algorithm~\ref{alg:pair} to approximate $\card{E(\{u,v\})}$. The total number of edges in the sparsifier and the total number of queries we perform is $\tilde{O}(n^3)$. For every sampled edge $e$, we need to calculate $q_e$, which costs at most $O(n^2)$ time as we need to combine the probabilities of sampling $e$ through any pair of vertices inside $e$. So the total time complexity of the algorithm is $\tilde{O}(n^5)$.
\end{proof}

We next consider the case when we are given access to hypergraph via $\OVA$ and $\OED$ queries only. We observe that we cannot simulate the algorithm in Theorem~\ref{thm:spectral-S19}. Consider the following hypergraph $H$: there is a pair of vertices $u$ and $v$, such that $H$ contains all possible hyperedges that do not contain both $u$ and $v$. $H$ also contains an edge $\{u,v\}$. Since $\{u,v\}$ is the only edge that contains both $u$ and $v$, we need to sample this edge with probability $1$ in the algorithm of Theorem~\ref{thm:spectral-S19}. However, any cut in the hypergraph has exponential (in $n$) size. So we need an exponential number of queries to sample the edge $\{u,v\}$.

Given the obstacle above, we will instead simulate Theorem~\ref{thm:spectral-B19}. Our approach is to show that the task of implementing Theorem~\ref{thm:spectral-B19} can be accomplished by our algorithm in Section~\ref{sec:OED}, except that we will sample $\poly(n)$ times more hyperedges. By doing so, we will guarantee that the probability that each hyperedge $e$ gets sampled is larger than the $p_e$ in Theorem~\ref{thm:spectral-B19}. We now develop the ideas needed to establish this coupling between Theorem~\ref{thm:spectral-B19} and our algorithm in Section~\ref{sec:OED}.

We first observe a relationship between the effective resistance and strength of an edge in a normal graph.

\begin{lemma} \label{lem:resistance}
    For any edge $f$ in a normal (weighted) graph $G$, we have $r_f \le \frac{n}{k_f}$.
\end{lemma}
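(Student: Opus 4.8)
The plan is to reduce to a single $k_f$-strong induced subgraph and then bound the effective resistance there by an electrical flow argument. Write $f=(u,v)$. By the definition of strength, there is a vertex set $X$ with $u,v\in X$ such that the induced subgraph $G[X]$ has minimum cut value at least $k_f$; in particular $|X|\le n$. Since $G[X]$ is obtained from $G$ by deleting vertices and their incident edges, Rayleigh monotonicity gives $r_f\le r_f^{G[X]}$, where $r_f^{G[X]}$ denotes the effective resistance between $u$ and $v$ computed inside $G[X]$. Hence it suffices to prove $r_f^{G[X]}\le \frac{n}{k_f}$.

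To bound $r_f^{G[X]}$ I would use Thomson's principle: $r_f^{G[X]}=\min_\theta \sum_e \theta(e)^2/w(e)$, where the minimum is over unit $u$-$v$ flows $\theta$ in $G[X]$, so it is enough to exhibit one unit flow of small energy. Because every cut of $G[X]$ has weight at least $k_f$, the max-flow/min-cut theorem (with edge capacities equal to the weights $w(e)$) produces a $u$-$v$ flow of value $k_f$ in $G[X]$ obeying these capacities; scaling it down by $k_f$ yields a unit $u$-$v$ flow $\theta$ with $|\theta(e)|\le w(e)/k_f$ on every edge $e$. Taking a conformal (cancellation-free) path/cycle decomposition of $\theta$ and discarding the cycles leaves a unit $u$-$v$ flow whose $|\theta(e)|$ is no larger on any edge, and which decomposes into simple $u$-$v$ paths carrying total flow $1$; since each simple path in $G[X]$ uses at most $|X|-1\le n-1$ edges, we get $\sum_e |\theta(e)|\le n-1$.

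Combining the two facts, for each edge $e$ we have $\theta(e)^2/w(e)=|\theta(e)|\cdot\bigl(|\theta(e)|/w(e)\bigr)\le |\theta(e)|/k_f$, so
$$
r_f^{G[X]}\ \le\ \sum_e \frac{\theta(e)^2}{w(e)}\ \le\ \frac{1}{k_f}\sum_e |\theta(e)|\ \le\ \frac{n-1}{k_f}\ \le\ \frac{n}{k_f},
$$
and together with $r_f\le r_f^{G[X]}$ this proves the lemma.

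The only step requiring genuine care — and the one I expect to be the most delicate to phrase cleanly — is the flow-decomposition step: one must invoke a \emph{conformal} path/cycle decomposition so that deleting cycles cannot increase $|\theta(e)|$ on any edge, thereby simultaneously keeping the per-edge bound $|\theta(e)|\le w(e)/k_f$ and obtaining the $\ell_1$ bound $\sum_e|\theta(e)|\le n-1$. The remaining ingredients (Rayleigh monotonicity, max-flow/min-cut with weights as capacities, and Thomson's principle) are entirely standard.
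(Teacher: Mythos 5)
Your proof is correct, and at heart it takes the same route as the paper's: get a large $u$\textendash$v$ flow from max-flow/min-cut, observe that flow paths have length at most $n$, and convert this into an effective-resistance bound. The execution differs in two respects. First, the paper avoids passing to the induced subgraph $G[X]$: it lets $c$ be the $u$\textendash$v$ min-cut in $G$ itself, notes $k_f\le c$ (since for any $X\ni u,v$ the global min-cut of $G[X]$ is at most the $u$\textendash$v$ min-cut of $G$), and then proves the stronger bound $r_f\le n/c$. You instead work inside a witness $G[X]$ for the strength $k_f$, apply Rayleigh monotonicity to pass from $G$ to $G[X]$, and bound $r_f^{G[X]}$; this gives the same conclusion $r_f\le n/k_f$. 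Second, the paper finishes with an informal ``$c$ edge-disjoint paths seen as parallel resistors of resistance at most $n$'' argument, which is tight reasoning only when edges have unit resistance (i.e., the integer-capacity picture); your version via Thomson's principle, capacity-respecting unit flow with $|\theta(e)|\le w(e)/k_f$, and a conformal path decomposition to get $\sum_e|\theta(e)|\le n-1$ is the careful way to make the same calculation rigorous for general weighted graphs. What you lose is the slightly sharper constant the paper's $c$ would have yielded (irrelevant here), and what you gain is a proof that needs no implicit integrality assumption.
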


\begin{proof}
    Suppose the edge $f = (u,v)$, and let $c$ denote the min-cut size between $u$ and $v$. Since for any vertex-induced subgraph of $G$ that contains both $u$ and $v$, the min-cut size is at most $c$, it follows that $k_f \le c$. So it is sufficient to prove that $r_f \le \frac{n}{c}$. 

    Since the min-cut size between $u$ and $v$ is $c$, the max-flow size between $u$ and $v$ is also $c$, which means we have a set of $c$ edge-disjoint paths from $u$ to $v$. As each path can have length at most $n$, this set of edge-disjoint paths can be interpreted as $c$ parallel resistors each with resistance at most $n$. By Rayleigh's monotonicity law, it then follows that $r_f \le \frac{n}{c}$.
\end{proof}

We now briefly review the approach underlying Theorem~\ref{thm:spectral-B19} (see Section~\ref{subsec:HSS}). Recall that for any hyperedge $e$ in a hypergraph $H$, $F_e$ is the clique associated with $e$ in the auxiliary graph $G_H$, and $r_e = \max_{f \in F_e} r_f$. Define $\kappa_e = \min_{f \in F_e} k_f$. Then by Lemma~\ref{lem:resistance}, we have $r_e \le \frac{n}{\kappa_e}$. The following lemma shows a relationship between $\kappa_e$ and $k_e$.

\begin{lemma} \label{lem:kappa}
    For any hyperedge $e$ in a hypergraph $H$, $\kappa_e \ge k_e$.
\end{lemma}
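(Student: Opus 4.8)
The plan is to unfold the definition $\kappa_e = \min_{f \in F_e} k_f$ and reduce the statement to showing that \emph{every} clique edge $f = (u,v) \in F_e$ (so that $u, v \in e$) has strength $k_f \ge k_e$ in the auxiliary graph $G_H$. Recall that the strength $k_e$ of the hyperedge $e$ is witnessed by some vertex set $X \supseteq e$ for which the minimum cut of the induced subhypergraph $H[X]$ equals $k_e$. Since $u, v \in e \subseteq X$, the edge $f$ survives in the induced normal graph $G_H[X]$, and therefore $k_f$ is at least the minimum cut of $G_H[X]$. Thus it suffices to prove that the minimum cut of $G_H[X]$ is at least the minimum cut of $H[X]$.

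To see this, I would fix an arbitrary partition $(S, X \setminus S)$ of $X$ and bound the weight of the edges of $G_H[X]$ crossing this partition from below by the number of hyperedges of $H[X]$ (i.e.\ hyperedges of $H$ contained in $X$) that cross the same partition. Indeed, each hyperedge $e' \subseteq X$ with $e' \cap S \ne \emptyset$ and $e' \cap (X \setminus S) \ne \emptyset$ contributes, inside $G_H$, a clique $F_{e'}$ all of whose vertices lie in $X$; this clique contains at least $\card{e' \cap S}\cdot\card{e' \cap (X \setminus S)} \ge 1$ edges between $S$ and $X \setminus S$, all of which lie in $G_H[X]$. Summing the contributions of all hyperedges (weights of coinciding clique edges simply add up, so there is no cancellation), the crossing weight of $G_H[X]$ is at least the number of crossing hyperedges contained in $X$, i.e.\ at least $\card{\delta_{H[X]}(S)}$. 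Taking the minimum over all partitions gives that the minimum cut of $G_H[X]$ is at least the minimum cut of $H[X]$, which is $k_e$. Hence $k_f \ge k_e$, and since this holds for every $f \in F_e$, we conclude $\kappa_e = \min_{f \in F_e} k_f \ge k_e$.

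I expect the only real subtlety — hence the main point to get right — to be the bookkeeping in the crossing-edge count: one must keep in mind that $G_H[X]$ is obtained from the cliques of \emph{all} hyperedges of $H$ restricted to $X$ (not merely the cliques of hyperedges contained in $X$), so it is at least as dense as the auxiliary graph of $H[X]$, and that distinct crossing hyperedges inside $X$ contribute crossing weight additively rather than destructively. Once this is set up, the inequality $\min\text{-}\mathrm{cut}(G_H[X]) \ge \min\text{-}\mathrm{cut}(H[X])$ is immediate and the lemma follows. (Note also that $k_e \ge 1$ for any hyperedge $e$, since $H[e]$ already has minimum cut at least $1$, so there is no degenerate case to worry about.)
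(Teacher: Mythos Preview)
Your proposal is correct and follows essentially the same approach as the paper: both pick a witness set $X \supseteq e$ with $\min\text{-}\mathrm{cut}(H[X]) = k_e$, observe that every $f \in F_e$ lies in $G_H[X]$, and then show $\min\text{-}\mathrm{cut}(G_H[X]) \ge \min\text{-}\mathrm{cut}(H[X])$ by noting that each crossing hyperedge in $H[X]$ contributes at least one crossing clique edge in $G_H[X]$. Your write-up is slightly more explicit about the additivity of contributions and about $G_H[X]$ containing restrictions of cliques from hyperedges not entirely inside $X$, but the core argument is the same.
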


\begin{proof}
    For any subset of vertices $X$, consider the corresponding vertex-induced subgraphs of $H$ and $G_H$, which we denote by $H[X]$ and $G_H[X]$, respectively. For any cut defined by a partition of $X$, and for any hyperedge $e$ crossing this cut in $H[X]$, at least one edge in $F_e$ must cross this cut in $G_H[X]$. So the min-cut size of $H[X]$ is at most the min-cut size of $G_H[X]$. Let $X_e$ be the set of vertices such that $e \subseteq X_e$ and the min-cut size of $H[X_e]$ equals $k_e$. The min-cut size of $G_H[X_e]$ is at least $k_e$. For any edge $f \in F_e$, $f$ is contained in $X_e$ and so $k_f \ge k_e$, which means $\kappa_e = \min_{f \in F_e} k_f \ge k_e$.
\end{proof}

By Lemma~\ref{lem:resistance} and Lemma~\ref{lem:kappa}, we have $\frac{n}{k_e} \ge r_e$ for any hyperedge $e$. So by Theorem~\ref{thm:spectral-B19}, if we sample each hyperedge with probability $q_e$ which is at least $\min\{1,\frac{Cn^5\log n}{\eps^2 k_e}\}$, and assign it weight $1/q_e$ if sampled, then the resulting hypergraph is a $(1 \pm \eps)$-spectral sparsifier. So we can use the same process as described in Section~\ref{sec:OED}, except that we oversample hyperedges by a factor of $n^4$. The resulting hypergraph will then be a $(1 \pm \eps)$-spectral hypergraph sparsifier.

\begin{proof} [Proof of Theorem~\ref{thm:spectral-edge}]
    We run Algorithm~\ref{alg:pesudo}, except that in line 3, we set $N \sim B(c,\min \{1,\frac{Cn^7}{\eps^2 c}\})$ where $C$ is the constant in Theorem~\ref{thm:spectral-B19}. By the same argument as in Section~\ref{sec:OED}, we sample each edge with probability at least $\min \{1,\frac{Cn^5 \log n}{\eps^2 k_e}\}$, which means the resulting graph is a $(1 \pm \eps)$-spectral sparsifier if we assign the weight of each edge also using the same process in Section~\ref{sec:OED}. The number of hyperedges sampled is $\tilde{O}(n^8/\eps^2)$. We then run the algorithm in Theorem~\ref{thm:spectral-S19} on our sparsifier and get a $(1 \pm 2\eps)$-spectral sparsifier with $\tilde{O}(n^3/\eps^2)$ hyperedges. The time taken by running Algorithm~\ref{alg:pesudo} and assigning weights is $\tilde{O}(n^9/\eps)$ since there are $n^4$ times more hyperedges being sampled. The running time of the algorithm in Theorem~\ref{thm:spectral-S19} is $\tilde{O}(n^{10})$ since there are $\tilde{O}(n^8)$ hyperedges sampled by Algorithm~\ref{alg:pesudo}. So the overall running time is $\tilde{O}(n^{10})$.
\end{proof}

\section{Lower Bounds} \label{sec:lowerbound}

In this section we show that any natural relaxation of the assumptions underlying Theorem~\ref{thm:main} and~\ref{thm:neighbor} rules out $\poly(n)$ time sparsification algorithms, proving Theorem~\ref{thm:OED_LB} and~\ref{thm:OVA_LB}.

\subsection{Queries $\OVA$ and $\ONONE$ Together are not Sufficient}

In this section, we prove that if any randomized algorithm can only access the underlying hypergraph via $\OVA$ and $\ONONE$, it is not possible to find with probability better than $o(1)$ a $(1 \pm \eps)$-approximate cut sparsifier with only   $\poly(n)$ queries, proving Theorem~\ref{thm:OVA_LB}. We start by showing a weaker result, as stated in the lemma below, which shows that the failure probability of a $\poly(n)$ time algorithm must be at least $1/2 - o(1)$, and then show how to amplify the failure probability to $1 - o(1)$.

\begin{lemma}
	There is no polynomial time algorithm that can use $\OVA$ and $\ONONE$ queries alone to construct a $(1\pm \eps)$-approximate sparsifier of an underlying hypergraph $H$ with probability at least $1/2+\xi$ for any constant $\xi > 0$. 
\end{lemma}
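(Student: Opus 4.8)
The plan is a Yao-style lower bound built around a single hard \emph{gadget} that $\OVA$ cannot see at all and that $\ONONE$ sees only with exponentially small probability, yet that rigidly determines the sparsifier's edge set. For notational convenience assume $8\mid n$ and split $V$ into $n/8$ disjoint \emph{gadgets}, each on four \emph{core} vertices $a,b,c,d$ and four private \emph{sink} vertices $z_a,z_b,z_c,z_d$. In \emph{both} instances every gadget carries $M:=2^n$ parallel copies of each of $\{a,z_a\},\{b,z_b\},\{c,z_c\},\{d,z_d\}$ (equivalently, in the weighted setting these four edges with weight $M$; if parallel edges are disallowed, replace each bundle by $2^{\omega(\log n)}$ distinct edges on the core vertex and a private pool of sink vertices — the argument is unaffected). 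The two instances differ only on the cores: in $H^{\mathsf{Yes}}$ each core carries the six pairs inside $\{a,b,c,d\}$ (a $K_4$), while in $H^{\mathsf{No}}$ each core carries the four triples inside $\{a,b,c,d\}$. On four vertices $K_4$ and the rank-$3$ clique have identical cut functions — a split into parts of sizes $(4,0),(3,1),(2,2)$ has $0,3,4$ crossing edges in each case (cf.\ the examples in Section~\ref{sec:OED}) — and since the cores are vertex-disjoint and the heavy background is common to both instances, $H^{\mathsf{Yes}}$ and $H^{\mathsf{No}}$ have exactly the same cut function on $V$. Hence every $\OVA$ query returns the same answer in the two worlds, so $\OVA$ reveals nothing about the $\mathsf{Yes}/\mathsf{No}$ bit (the algorithm can recover the gadget partition from cut queries, but that partition is identical in both worlds). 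For $\ONONE$: a sink has a single incident edge (up to multiplicity) common to both worlds, and a core vertex $v$ is incident to $M$ copies of $\{v,z_v\}$ plus exactly three core edges, so $\ONONE(v)$ returns a background copy with probability $M/(M+3)$ and otherwise a core edge — a pair in $H^{\mathsf{Yes}}$, a triple in $H^{\mathsf{No}}$; thus the laws of $\ONONE(v)$ differ by at most $3/(M+3)<2^{-n/2}$ in total variation, and over any $\poly(n)$ queries the transcripts differ by $o(1)$ in total variation.

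Next I would prove the structural claim that makes these indistinguishable instances have genuinely incompatible sparsifiers: for a small enough constant $\eps$ (say $\eps\le 1/30$), every $(1\pm\eps)$-approximate cut sparsifier of $H^{\mathsf{Yes}}$ contains all six core pairs of every gadget with strictly positive weight. Fix a gadget and consider the seven ``gadget-local'' cuts given by $S\in\{\,\{a,z_a\},\{b,z_b\},\{c,z_c\},\{d,z_d\},\ \{a,b,z_a,z_b\},\{a,c,z_a,z_c\},\{a,d,z_a,z_d\}\,\}$ (everything else on the other side). The point of the private sinks and the enormous multiplicity is precisely that each of these cuts is crossed, in $H^{\mathsf{Yes}}$, \emph{only} by core pairs of this gadget: a heavy bundle $\{v,z_v\}$ lies entirely inside $S$ or entirely outside it (a core vertex and its sink are always placed together), and every other gadget lies wholly on one side. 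Writing $W_{xy}\ge 0$ for the total sparsifier weight on the core pair $\{x,y\}$, the sparsification guarantee on these cuts gives constraints $W_{ab}+W_{ac}+W_{ad}\in(1\pm\eps)\cdot 3$, $W_{ac}+W_{ad}+W_{bc}+W_{bd}\in(1\pm\eps)\cdot 4$, together with the analogous constraints from the remaining listed cuts. At $\eps=0$ the unique solution is $W_{xy}=1$ for all six pairs; a short case analysis — suppose some $W_{xy}=0$, combine three or four of the constraints, and conclude that two remaining nonnegative weights would have to sum to a negative number — shows that every $W_{xy}$ stays strictly positive once $\eps\le 1/30$. Consequently any $(1\pm\eps)$-sparsifier of $H^{\mathsf{Yes}}$ contains a size-$2$ edge both of whose endpoints lie in a common core, whereas $H^{\mathsf{No}}$ has no such edge at all (its only size-$2$ edges are background edges, each incident to a sink), so any $(1\pm\eps)$-sparsifier of $H^{\mathsf{No}}$ — being a subhypergraph of $H^{\mathsf{No}}$ — contains no such edge.

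To finish I would apply Yao's principle to the uniform distribution over $\{H^{\mathsf{Yes}},H^{\mathsf{No}}\}$. Fix any randomized algorithm making $\poly(n)$ queries and condition on its coins, making it deterministic with output a function of the query transcript. Let $\mathcal{E}$ be the event ``the output contains a size-$2$ edge with both endpoints in one core.'' By the structural claim, success on $H^{\mathsf{Yes}}$ forces $\mathcal{E}$ and success on $H^{\mathsf{No}}$ forces $\neg\mathcal{E}$; since $\mathcal{E}$ is determined by the transcript and the two transcript distributions are $o(1)$-close in total variation, $\tfrac12\Pr[\text{succ}\mid H^{\mathsf{Yes}}]+\tfrac12\Pr[\text{succ}\mid H^{\mathsf{No}}]\le \tfrac12\Pr[\mathcal{E}\mid H^{\mathsf{Yes}}]+\tfrac12\big(1-\Pr[\mathcal{E}\mid H^{\mathsf{No}}]\big)\le \tfrac12+o(1)$. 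Averaging over the coins preserves the bound, so for all large $n$ the success probability on this two-point distribution is below $\tfrac12+\xi$, proving the lemma.

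The main obstacle is the rigidity claim of the middle paragraph: one must check carefully that the chosen gadget-local cuts are ``clean'' — no background bundle and no foreign gadget ever crosses them, which is exactly what the private sinks and the super-polynomial multiplicity buy us — and then verify that the resulting tiny linear program forces each of the six pair-weights to be positive, not merely that their total is $\approx 6$. The remaining ingredients (cut-equivalence of the two four-vertex cliques, the total-variation bound for $\ONONE$, and the Yao wrap-up) are routine once the gadget is in place. The subsequent amplification to failure probability $1-o(1)$ will instead set the $\mathsf{Yes}/\mathsf{No}$ bit of each gadget independently.
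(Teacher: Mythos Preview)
Your argument is correct. You use the same core gadget as the paper --- the $K_4$ versus the rank-$3$ complete hypergraph on four vertices, which are cut-equivalent --- and you hide it from $\ONONE$ by flooding each core vertex with super-polynomially many ``background'' incident edges. The Yao wrap-up is standard and fine.

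That said, the paper's construction is considerably leaner on both of the points where you do real work. First, instead of $M=2^n$ parallel edges (or the sink-pool fallback), the paper takes the background to be the complete hypergraph on each side, $E=2^V\cup 2^{V'}$: this is a simple hypergraph, yet every vertex automatically has $2^{n-1}$ incident edges, so no parallel edges or extra sink vertices are needed. Your fallback for simple hypergraphs is workable but needs care --- to get $2^{\omega(\log n)}$ distinct edges at a core vertex without blowing up the vertex count, you must use higher-rank edges over the sink pool (all subsets, not just rank-$2$ edges), and then the ``gadget-local'' cuts must be taken with the entire sink pool on the same side as its core vertex.

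Second, and more substantively, the paper avoids your rigidity/LP step entirely. Rather than arguing that all six core pairs must appear with positive weight, the paper simply observes that the two instances have \emph{disjoint} edge sets crossing a single global cut (namely $(V,V')$): in $H_1$ the crossing edges are four size-$2$ pairs, in $H_2$ they are four size-$3$ triples. Since any $(1\pm\eps)$-sparsifier with $\eps<1$ must place positive weight on some edge crossing this cut, and since the sparsifier is a subhypergraph, the output immediately determines the instance. In your setup the analogous one-line argument is: the cut $\{a,z_a\}$ has value $3$ in $H^{\mathsf{Yes}}$ and is crossed only by core pairs, so the sparsifier must contain \emph{some} core pair --- no LP needed, and no restriction to $\eps\le 1/30$. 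Your stronger rigidity claim will be useful for the per-gadget amplification step, but it is unnecessary for the present lemma.
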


\begin{proof}
	Suppose the runtime of the algorithm is bounded by some polynomial $f(n)$. We will construct two graphs $H_1=(V\cup V', E_1)$ and $H_2 = (V\cup V', E_2)$ with $|V| = |V'| = n$ and the algorithm is shown with probability $1/2$ the graph $H_1$ and with probability $1/2$ the graph $H_2$. We will then show that (a) any algorithm that can only access the underlying graph using $\OVA$ and $\ONONE$ cannot distinguish between these two graphs with probability at least $1/2+\xi$ for any constant $\xi > 0$, and (b) there exists a non-empty cut such that $H_1$ and $H_2$ do not have any common edges crossing the cut. Together, these properties immediately imply the lemma .

	Let $u,v\in V$ and $u',v'\in V'$ be two arbitrary pairs of vertices. Let $E=2^V\cup 2^{V'}$ be the union of the complete hypergraphs on $V$ and $V'$. We define $E_1$ as $E$ along with all possible edges of size two among $\{u,v,u',v'\}$. We define $E_2$ as $E$ along with all possible edges of size $3$ among $\{u, v, u', v'\}$. It is easy to verify that for any cut, the number of edges in $E_1$ crossing the cut equals the number of edges in $E_2$ crossing the cut. So any cut size query  $\OVA$ has the same answer in $H_1$ and $H_2$, and hence can not distinguish between these two graphs, no matter the number of queries allowed.

The algorithm can additionally make at most $f(n)$ calls to $\ONONE$. But since each vertex $w \in V \cup V'$ has at least $2^n$ edges incident on it, the probability that a uniformly random edge incident on $w$ is not in $E$ is at most $3/2^n$. Using a union bound over all $f(n)$ queries along with the fact that $3f(n)/2^n \le \xi$ for sufficiently large $n$, we get that for both hypergraphs, with probability at least $1-3f(n)/2^n \ge 1-\xi$, all sampled edges are in $E$.

Thus conditioned on the event that all of the sampled edges are in $E$, the algorithm cannot distinguish between $H_1$ and $H_2$. On the other hand, there are no common edges crossing the cut $(V,V')$ in $H_1$ and $H_2$, so to output a proper $(1 \pm \eps)$-approximate cut sparsifier, the algorithm must distinguish between $H_1$ and $H_2$. Hence the probability that algorithm succeeds is at most $1/2 + \xi$.
\end{proof}

To amplify the failure probability to $1 - o(1)$, we can independently generate $\log n$ instances from the distribution above with each instance containing $n/\log n$ vertices. We now let our underlying graph be a union of these $\log n$ instances. Any algorithm that outputs a $(1\pm \eps)$-approximate sparsifier, must successfully identify for each of the $\log n$ instances whether it is an instance of $H_1$ or $H_2$. Thus the probability of success is at most $(1/2+o(1))^{\log n} = o(1)$. This completes the proof of Theorem~\ref{thm:OVA_LB}.

\subsection{$\OED$ Queries Alone are not Sufficient}
In this section, we prove that if the algorithm can access the hypergraph through only $\OED$ queries, it is not possible to find a proper $(1 \pm \eps)$-approximate cut sparsifier with $\poly(n)$ queries with success probability better than $o(1)$, proving Theorem~\ref{thm:OED_LB}.
As above, we start by showing a weaker result, which shows that the failure probability of a $\poly(n)$ time algorithm must be at least $1/2 - o(1)$, and then show how to amplify the failure probability to $1 - o(1)$.

We first define two distributions of hypergraphs $\mathcal{H}_1$ and $\mathcal{H}_2$ such that for any sequence of the queries the algorithm asks to $\OED$, the distribution of the answers are almost identical regardless of whether the graph was chosen from $\mathcal{H}_1$ or $\mathcal{H}_2$.

A graph in each of the distributions $\mathcal{H}_1$ and $\mathcal{H}_2$  is generated as follows.
There are $n+1$ vertices $v_0,v_1,\dots,v_n$ and the generated graph will have $2^n-n-1$ edges. If the graph is generated by $\mathcal{H}_1$, then we randomly choose $2^{n/2}$ subsets of $\{v_1,\dots,v_n\}$ with size at least 2. If the graph is generated by $\mathcal{H}_2$, then we randomly choose $2^{n/4}$ subsets of $\{v_1,\dots,v_n\}$ with size at least 2. Then for any subset $S$ of $\{v_1,\dots,v_n\}$ of size at least $2$, if $S$ is chosen in the previous step, then the edge $S \cup \{v_0\}$ is in the graph, otherwise the edge $S$ is in the graph.

The algorithm is presented with probability $1/2$ a graph $H$ generated by $\mathcal{H}_1$, and with probability $1/2$ a graph $H$ generated by $\mathcal{H}_2$, that is, the algorithm sees a graph $H$ generated by the distribution $1/2 \mathcal{H}_1 + 1/2\mathcal{H}_2$.
Since the cut sizes of $(\{v_0\},\overline{\{v_0\}})$ in the graph generated by $\mathcal{H}_1$ and $\mathcal{H}_2$ are $2^{n/2}$ and $2^{n/4}$ respectively, any algorithm that outputs a $(1 \pm \eps)$-approximate cut sparsifier with $\eps < 1$ must be able to distinguish between the graphs generated by $\mathcal{H}_1$ and $\mathcal{H}_2$. 
However, the following lemma shows that unless the algorithm makes exponential number of queries, it cannot distinguish between the graphs generated by $\mathcal{H}_1$ and $\mathcal{H}_2$. 

\begin{lemma} \label{lem:OED}
    Any algorithm that only makes $k$ $\OED$ queries where $k={\rm Poly}(n)$ cannot determine with probability better than $\frac{1}{2}+\frac{k^2}{2^{n/4}}$ if the underlying graph $H$ is generated from $\mathcal{H}_1$ or $\mathcal{H}_2$.
    \end{lemma}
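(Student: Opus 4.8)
The plan is to use a standard indistinguishability argument via coupling the transcripts of the algorithm's interaction with the oracle on the two distributions. The key observation is that an $\OED$ query on a cut $(S,\bar S)$ returns a uniformly random edge crossing that cut, and the only structural difference between a graph drawn from $\mathcal{H}_1$ and one drawn from $\mathcal{H}_2$ is \emph{which} subsets $S\subseteq\{v_1,\dots,v_n\}$ of size $\ge 2$ get ``promoted'' by adding $v_0$; everything else about the edge set is identical and deterministic. So I would set up a coupling in which, before any queries, we fix the same ground set of $2^n-n-1$ ``potential edges'' (indexed by subsets of $\{v_1,\dots,v_n\}$ of size $\ge 2$) for both worlds, and the promoted set is a uniformly random subset of size $2^{n/2}$ in world~1 and of size $2^{n/4}$ in world~2. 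Since promoted edges and non-promoted edges cross \emph{exactly the same cuts} among partitions of $\{v_1,\dots,v_n\}$ (the vertex $v_0$ only changes whether the edge also touches $v_0$), for any query $(S,\bar S)$ the identity of the returned \emph{subset of $\{v_1,\dots,v_n\}$} has the same distribution in both worlds; the only thing that can differ is whether the returned edge additionally contains $v_0$, i.e.\ whether that particular subset happened to be promoted.

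Next I would argue inductively on the number of queries. Maintain the invariant that, conditioned on the transcript so far, in both worlds the set of subsets ``seen so far'' (those returned by previous queries) and their promotion status is known, while the promotion status of every \emph{unseen} subset is still an (almost) uniform random subset of the appropriate size among the unseen subsets. When the algorithm makes its next query $(S,\bar S)$, couple the two worlds so that the returned subset $T\subseteq\{v_1,\dots,v_n\}$ is the same in both; this is possible because the distribution of $T$ depends only on the number of subsets crossing the cut, which is the same. The transcripts then diverge only if $T$ is a previously unseen subset whose promotion status differs between the two worlds. For a fresh subset, the probability it is promoted is at most $2^{n/2}/(2^n-n-1-k)\le 2\cdot 2^{-n/2}$ in world~1 and at most $2^{n/4}/(\dots)\le 2\cdot 2^{-3n/4}$ in world~2, so the chance that a given query is the first to reveal a discrepancy is $O(2^{-n/4})$. (One subtlety: the coupling must also handle the event that $T$ \emph{was} seen before — then both worlds agree on its status by the invariant — and the event that $T$ is unseen but promotion agrees.) Summing over the $k$ queries and also accounting for the $O(k^2)$-type collision term that arises from conditioning on previously-seen subsets (the chance that two queries return the same fresh subset, which slightly perturbs the conditional promotion probabilities), the total variation distance between the two transcript distributions is at most $O(k^2/2^{n/4})$.

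Finally, by the standard reduction from distinguishing distributions to total variation distance, any algorithm — even an unbounded one making only $k$ queries — can guess the identity of the underlying distribution with probability at most $\tfrac12+\tfrac12\,\Delta_{\textnormal{\texttt{TV}}}(\cdot,\cdot)\le \tfrac12+\tfrac{k^2}{2^{n/4}}$, which is exactly the claimed bound. I would then invoke the fact, noted just before the lemma, that distinguishing $\mathcal{H}_1$ from $\mathcal{H}_2$ is forced by any valid $(1\pm\eps)$-cut sparsifier with $\eps<1$ (their $(\{v_0\},\overline{\{v_0\}})$ cuts differ by a factor $2^{n/4}$), to conclude Theorem~\ref{thm:OED_LB} after the amplification step.

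The main obstacle I anticipate is making the coupling argument fully rigorous when queries can \emph{repeat} subsets or be \emph{adaptively chosen} based on previously returned edges: once some subsets' promotion statuses are revealed, the conditional distribution of the remaining promoted set is uniform over a shrinking pool, and one must carefully verify that the ``probability a fresh query reveals a new promoted edge'' bound survives this conditioning (hence the $k^2$ rather than $k$ in the denominator's exponent's companion). The cleanest way to handle adaptivity is to define the coupling query-by-query on the joint probability space and bound the divergence probability as a stopping time, rather than trying to bound a static total variation distance between product measures.
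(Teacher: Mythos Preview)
There is a genuine gap. Your central claim --- that for any query $(S,\bar S)$ the returned edge, projected to $\{v_1,\dots,v_n\}$, has the same distribution in both worlds --- fails for the cut $(\{v_0\},\overline{\{v_0\}})$. That cut is crossed \emph{only} by promoted edges, so $\OED$ returns a uniformly random element of the promoted set, whose size is $2^{n/2}$ in world~1 and $2^{n/4}$ in world~2; there is no common ``pick the underlying subset first'' step to couple on. This is not a corner case you can ignore: an adaptive algorithm will query exactly this cut repeatedly, and your per-query discrepancy bound (which, from your own promotion-probability computation, is $O(2^{-n/2})$ --- the jump to $O(2^{-n/4})$ in your write-up is not justified by what precedes it) simply does not cover it.

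The paper deals with this by dropping the graph-level coupling and instead bounding, for each $i$, the total variation distance between the conditional laws of the $i$th answer given the first $i-1$ answers, with an explicit case split on whether $S_i=\{v_0\}$. In the $S_i=\{v_0\}$ case, each of the (at most $i$) previously-revealed promoted edges carries mass $1/2^{n/2}$ in world~1 versus $1/2^{n/4}$ in world~2, so the conditional TVD is $O(i/2^{n/4})$; the telescoping sum $\sum_{i\le k} i/2^{n/4}$ is exactly where the $k^2/2^{n/4}$ in the statement comes from. Your ``$O(k^2)$-type collision term'' is gesturing toward the right phenomenon (previously-seen answers perturb the conditionals), but the actual mechanism is the $1/2^{n/4}$-scale mass on already-seen promoted edges under the $\{v_0\}$-cut query, not birthday collisions among returned subsets. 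To fix your proposal you would need to add this case split and analyze the $\{v_0\}$-cut separately, at which point the argument becomes essentially the paper's.
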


Thus any algorithm that makes only $\poly(n)$ $\OED$ queries, fails with probability at least $1/2 - o(1)$.
To amplify the failure probability to $1 - o(1)$, we can as before independently generate $\log n$ instances from the distribution above with each instance containing $n/\log n$ vertices. We now let our underlying graph be a union of these $\log n$ instances. Any algorithm that outputs a $(1\pm \eps)$-approximate sparsifier, must successfully identify for each of the $\log n$ instances whether it was generated from the first distribution or the second. Thus the probability of success is at most $(1/2+o(1))^{\log n} = o(1)$. This completes the proof of Theorem~\ref{thm:OED_LB}. 

\subsection{Proof of Lemma~\ref{lem:OED}}

    Given a sequence of $k$ $\OED$ queries $Q = (C_1,\dots,C_k)$, we denote by $e_1,\dots,e_k$ the sequence of random edges that are returned by the query $\OED$. We first prove that for any $i$, if we fix the first $i-1$ answers $e_1,\dots,e_{i-1}$, then the distribution of edge $e_i$ is almost the same irrespective of whether the underlying graph was sampled from $\mathcal{H}_1$ or $\mathcal{H}_2$. In particular, if we denote these two distributions by $\mathcal{D}^i_1$ and $\mathcal{D}^i_2$, respectively, then we will show that the total variation distance $\tvd{\mathcal{D}^i_1}{\mathcal{D}^i_2} \le \frac{i}{2^{n/4}}$.

    Let $C_i=(S_i,\bar{S_i})$ be the $i_{th}$ cut on which the algorithm makes an $\OED$ query. Without loss of generality, assume that $v_0 \in S_i$. We first consider the case when $S_i \neq \{v_0\}$. In this case, we can sample a random edge by the following two steps: we first sample a random set $S \subseteq \{v_1,\dots,v_n\}$ that intersects both $S_i$ and $\bar{S_i}$, and we then return $S$ or $S \cup \{v_0\}$ depending on which edge is in the graph. We couple the random process that samples the edge in $\mathcal{D}^i_1$ and $\mathcal{D}^i_2$, so that in the first step, these two processes sample the same set $S$. If $S$ or $S \cup \{v_0\}$ is among $e_1 \dots e_{i-1}$, then the output of $\OED$ is fixed, which means the distributions are the same in both cases. If neither $S$ nor $S \cup \{v_0\}$ are among $e_1 \dots e_{i-1}$ and suppose there are $\ell$ different edges among $e_1,\dots,e_{i-1}$ and $j$ of them contain $v_0$, then the probability that the oracle returns $S \cup \{v_0\}$ is $\frac{2^{n/2}-j}{2^n-n-1-\ell}$ when the graph is sampled from $\mathcal{H}_1$, and $\frac{2^{n/4}-j}{2^n-n-1-\ell}$ when the graph is sampled from $\mathcal{H}_2$. So
    \begin{align*}
        \tvd{\mathcal{D}^i_1}{\mathcal{D}^i_2} \le \frac{2^{n/2}-j}{2^n-n-1-\ell} - \frac{2^{n/4}-j}{2^n-n-1-\ell} < \frac{1}{2^{n/4}}
    \end{align*}

    If $S_i = \{v_0\}$, then the oracle returns a random edge that includes $v_0$. Let $\ell$ be the number of different edges among $e_1,\dots,e_{i-1}$ and $j$ of them contain $v_0$. For any subset of $S \subseteq V$ which contains $v_0$ and has size at least $3$, if $S$ is an edge among $e_1\dots,e_{i-1}$, then $e_i=S$ with probability $\frac{1}{2^{n/2}}$ if the graph is sampled from $\mathcal{H}_1$, and with probability $\frac{1}{2^{n/4}}$ if the graph is sampled from $\mathcal{H}_2$. If $S \setminus \{v_0\}$ is among the edges $e_1,\dots,e_{i-1}$, then the probability that $e_i=S$ is $0$ for both cases. If neither $S$ nor $S \setminus \{v_0\}$ are among the edges $e_1,\dots,e_{i-1}$, then if the graph is generated by $\mathcal{H}_1$, the probability that $S$ is in the graph is $\frac{2^{n/2}-j}{2^n-n-1-\ell}$. If $S$ is indeed in the graph, then it gets sampled with probability $\frac{1}{2^{n/2}}$. So the probability that $e_i = S$ is 
$ \frac{1}{2^{n/2}} \cdot \frac{2^{n/2}-j}{2^n-n-1-\ell} =
\frac{2^{n/2}-j}{2^{n/2}(2^n-n-1-\ell)}$. Similarly, in the case of the graph generated by $\mathcal{H}_2$, the probability that $e_i=S$ is $\frac{2^{n/4}-j}{2^{n/4}(2^n-n-1-\ell)}$. Let $X = 2^n-n-1$, then we have 
    
    \begin{align*}
        2\tvd{\mathcal{D}^i_1}{\mathcal{D}^i_2} = &j \cdot \left( \frac{1}{2^{n/4}}-\frac{1}{2^{n/2}}\right) + (X-\ell)\cdot \left( \frac{2^{n/2}-j}{2^{n/2}(X-\ell)} - \frac{2^{n/4}-j}{2^{n/4}(X-\ell)}\right)\\
        \le &j \cdot \frac{1}{2^{n/4}} + \left(\frac{2^{n/2}-j}{2^{n/2}} - \frac{2^{n/4}-j}{2^{n/4}} \right) \\
        \le &\frac{2j}{2^{n/4}} \le \frac{2i}{2^{n/4}}.
    \end{align*}

    Now we are ready to prove the lemma. Let $e^1_i$ and $e^2_i$ be the random edges sampled by $\OED$ in the $i^{th}$ query when the graph is sampled from $\mathcal{H}_1$ and $\mathcal{H}_2$ respectively. Given a possible answer $\mathcal{A}_k=(e_1,e_2,\dots,e_k)$ to the $k$ queries, denote by $\mathcal{E}^1_i(\mathcal{A}_k)$ the event that $e^1_1=e_1,\dots, e^1_i = e_i$ and by $\mathcal{E}^2_i(\mathcal{A}_k)$ as the event that $e^2_1=e_1,\dots,e^2_i=e_i$. Then we can bound two times the total variation distance of the distributions of the answers when the graph is generated by $\mathcal{H}_1$ and $\mathcal{H}_2$ as below:
    
    \begin{align*}
        &\sum_{\mathcal{A}_k=(e_1,\dots,e_k)} \card{\prob{\mathcal{E}^1_k(\mathcal{A}_k)}-\prob{\mathcal{E}^2_k(\mathcal{A}_k)}} \\ 
        = &\sum_{\mathcal{A}_k=(e_1,\dots,e_k)} \card{\prob{e^1_1 = e_1 , \dots, e^1_k = e_k} - \prob{e^2_1=e_1 ,\dots, e^2_k=e_k}} \\
        = &\sum_{\mathcal{A}_k=(e_1,\dots,e_k)} \card{\prob{\mathcal{E}^1_{k-1}(\mathcal{A}_k)} \cdot \prob{e^1_k=e_k |\mathcal{E}^1_{k-1}(\mathcal{A}_k)} - \prob{\mathcal{E}^2_{k-1}(\mathcal{A}_k)} \cdot \prob{e^2_k=e_k | \mathcal{E}^2_{k-1}(\mathcal{A}_k)}}\\
        \le &\sum_{\mathcal{A}_k=(e_1,\dots,e_k)} \Big( \card{\prob{\mathcal{E}^1_{k-1}(\mathcal{A}_k)} - \prob{\mathcal{E}^2_{k-1}(\mathcal{A}_k)}}\cdot \prob{e^1_k=e_k |\mathcal{E}^1_{k-1}(\mathcal{A}_k)} \\
        & ~~~~~~~~~~~~~~~~~~~~~~+ \prob{\mathcal{E}^2_{k-1}(\mathcal{A}_k)} \cdot \card{\prob{e^1_k=e_k |\mathcal{E}^1_{k-1}(\mathcal{A}_k)} - \prob{e^2_k=e_k |\mathcal{E}^2_{k-1}(\mathcal{A}_k)} }\Big)\\
        = &\sum_{\mathcal{A}_{k-1}=(e_1,\dots,e_{k-1})} \Big( \card{\prob{\mathcal{E}^1_{k-1}(\mathcal{A}_{k-1})} - \prob{\mathcal{E}^2_{k-1}(\mathcal{A}_{k-1})}}\cdot \sum_e \prob{e^1_k=e |\mathcal{E}^1_{k-1}(\mathcal{A}_{k-1})} \\
        & ~~~~~~~~~~~~~~~~~~~~~~~~~~~~+ \prob{\mathcal{E}^2_{k-1}(\mathcal{A}_{k-1})} \cdot \sum_e \card{\prob{e^1_k=e |\mathcal{E}^1_{k-1}(\mathcal{A}_{k-1})} - \prob{e^2_k=e |\mathcal{E}^2_{k-1}(\mathcal{A}_{k-1})} }\Big)\\
        \le &\sum_{\mathcal{A}_{k-1}=(e_1,\dots,e_{k-1})} \Big( \card{\prob{\mathcal{E}^1_{k-1}(\mathcal{A}_{k-1})} - \prob{\mathcal{E}^2_{k-1}(\mathcal{A}_{k-1})}} + \prob{\mathcal{E}^2_{k-1}(\mathcal{A}_{k-1})} \cdot \frac{2k}{2^{n/4}} \Big)\\
        = &\sum_{\mathcal{A}_{k-1}=(e_1,\dots,e_{k-1})} \card{\prob{\mathcal{E}^1_{k-1}(\mathcal{A}_{k-1})} - \prob{\mathcal{E}^2_{k-1}(\mathcal{A}_{k-1})}} + \frac{2k}{2^{n/4}} \\
        \le & \sum_{\mathcal{A}_{k-2}=(e_1,\dots,e_{k-2})} \card{\prob{\mathcal{E}^1_{k-2}(\mathcal{A}_{k-2})} - \prob{\mathcal{E}^2_{k-2}(\mathcal{A}_{k-2})}} + \frac{2(k-1)}{2^{n/4}} +  \frac{2k}{2^{n/4}} \\
        & \dots \\
        \le & \sum_{i=1}^k \frac{2i}{2^{n/4}} \le \frac{2k^2}{2^{n/4}}
    \end{align*}
    Thus any algorithm that makes at most $k$ queries can distinguish between a graph generated from $\mathcal{H}_1$ and a graph generated from $\mathcal{H}_2$ with probability at most $\frac{1}{2}+\frac{k^2}{2^{n/4}}$.

\section{Weighted Hypergraphs} \label{sec:weighted}

In this section, we describe how to extend the results of Theorem~\ref{thm:main}, Theorem~\ref{thm:neighbor}, Theorem~\ref{thm:spectral-neighbor}, and Theorem~\ref{thm:spectral-edge} to weighted hypergraphs. Given a weighted hypergraph $H = (V, E, w)$, we consider access to this graph by weighted generalizations of oracles $\OVA$, $\OED$, and $\ONTWO$: the oracle $\OVA$ now returns the weight of a cut rather than its size, and instead of sampling uniformly, the oracles $\OED$ and $\ONTWO$ sample edges with probability proportional to their weight.

We first note that all the Theorems we use to derive these results (Theorem~\ref{thm:kk}, Theorem~\ref{thm:spectral-S19}, and Theorem~\ref{thm:spectral-B19}) can be modified to work when the input hypergraph is weighted. 

\begin{lemma}\label{lem:kk-weighted}[Weighted version of Theorem~\ref{thm:kk}]
    Let $H=(V, E, w)$ be a weighted hypergraph with rank $r$, and let $\eps>0$ be an error parameter. Consider the hypergraph $H'$ obtained by sampling each hyperedge $e$ in $H$ independently with probability $p_e \ge \min\{1, w(e)\cdot\frac{3((d+2)\log n +r)}{k_e \eps^2}\}$, giving it weight $w(e)/p_e$ if included. Then with probability at least $1-O(n^{-d})$, $H'$ is a $(1 \pm \eps)$-approximate cut sparsifier of $H$, and has $O(\frac{n}{\eps^2}(r+\log n))$ hyperedges.
\end{lemma}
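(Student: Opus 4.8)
The plan is to retrace the Benczúr--Karger / Kogan--Krauthgamer argument underlying Theorem~\ref{thm:kk} (and its reformulation Lemma~\ref{lem:kk}), isolate the single place where \emph{unit} edge weights were used---namely the concentration step---and patch it with a weighted tail bound. Every other ingredient of that argument (edge strength, the $k$-strong component decomposition, the bound $\sum_e w_e/k_e\le n-1$ of Lemma~\ref{lem:n-1}, and the cut-counting union bound of~\cite{kogan2015sketching}) is already phrased for, or is manifestly insensitive to, weighted (hyper)graphs, so the bulk of the proof is a faithful re-run. I expect the only non-mechanical point to be formulating the weighted tail inequality with a normalization that both is provable and lines up with the dyadic strength-band structure of the union bound.

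\emph{Step 1: a weighted concentration bound.} I would first record a weighted analogue of Lemma~\ref{chernoff}: if $x_1,\dots,x_k$ are independent with $x_i$ equal to $w_i/p_i$ with probability $p_i$ and $0$ otherwise, and $W\ge\max_i w_i$, $\mu=\sum_i w_i$, then for every $N\ge\mu/W$ and $\eps\in(0,1]$ one has $\Pr\big[\,\big|\sum_i x_i-\mu\big|\ge\eps W N\,\big]\le 2\exp(-0.38\,\eps^2(\min_i p_i)N)$. This specializes to Lemma~\ref{chernoff} when all $w_i=1$ (with $W=1$, $N\ge k$), and is proved by the same moment-generating-function estimate as Theorem~2.2 of~\cite{fung2019general}, now applied to variables whose single ``jump'' has size $w_i/p_i\le W/p_i$; the normalization by $W=\max_i w_i$ is forced because one heavy variable can on its own shift $\sum x_i$ by $\sim W/p_i$. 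The relevance of this normalization is the following observation about our sampling scheme: whenever $p_e<1$ we have $w_e/p_e=\frac{k_e\eps^2}{3((d+2)\log n+r)}$, which depends only on the strength $k_e$; and edges with $p_e=1$ are included deterministically, contributing no error.

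\emph{Step 2: slotting this into the Theorem~\ref{thm:kk} analysis.} It then suffices to check that the weighted bound of Step~1 produces, for each cut and each dyadic strength band, a failure probability of the same shape that Lemma~\ref{chernoff} produced in the proof of Theorem~\ref{thm:kk}. Fix a cut $(S,\bar S)$, write $\rho=3((d+2)\log n+r)$, and partition its crossing edges into bands $E_j(S)=\{e\in\delta_H(S):2^{j-1}\le k_e<2^j\}$; one bounds $|w'(\delta_{H'}(S))-w(\delta_H(S))|$ band by band. Within band $j$, all reweighted values $w_e/p_e$ of the sampled ($p_e<1$) edges lie in $[2^{j-1}\eps^2/\rho,\ 2^j\eps^2/\rho)$, i.e. within a factor $2$ of $W_j:=2^j\eps^2/\rho$, so Step~1 applies with $W=W_j$ and $N=w(E_j(S))/W_j$ and yields a tail $2\exp(-\Omega(\eps^2(\min_e p_e)(w(E_j(S))/W_j)))$; since $p_e\ge w_e\rho/(k_e\eps^2)$ with $k_e<2^j$, this is the same expression the unweighted proof had, with $w_e$ in place of unit weights. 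Summing over the $O(\log n)$ bands and over the relevant cuts via the (weight-agnostic) cut-counting bound of~\cite{kogan2015sketching}---the $\log n$ band factor being absorbed into constants---gives that $H'$ is a $(1\pm\eps)$-approximate cut sparsifier with probability $1-O(n^{-d})$, exactly as in Theorem~\ref{thm:kk}. (Real-valued weights cause no trouble here: the bands are over real strengths, Lemma~\ref{lem:n-1} is stated for weighted $H$, and near-minimum-cut counting holds for weighted hypergraphs.)

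\emph{Step 3: size bound, and the main obstacle.} The number of hyperedges of $H'$ is a sum of independent indicators with mean $\sum_e p_e\le\sum_e w_e\cdot\frac{3((d+2)\log n+r)}{k_e\eps^2}=\frac{3((d+2)\log n+r)}{\eps^2}\sum_e\frac{w_e}{k_e}\le\frac{3((d+2)\log n+r)(n-1)}{\eps^2}=O\!\big(\tfrac{n}{\eps^2}(r+\log n)\big)$ by Lemma~\ref{lem:n-1}, taking $p_e$ equal to the stated lower bound (a larger $p_e$ only strengthens the sparsification guarantee); a Chernoff bound gives the same bound with probability $1-O(n^{-d})$. The one step I regard as genuinely requiring care is Step~1: the weighted tail inequality must be stated with the $W=\max_i w_i$ normalization so that, once restricted to a dyadic strength band (where the reweighted values are within a factor $2$), its exponent coincides with the exponent the Kogan--Krauthgamer cut-counting union bound is calibrated against. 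With that in hand, the remainder is a line-by-line adaptation of the proof of Theorem~\ref{thm:kk}.
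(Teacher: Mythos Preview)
The paper does not actually prove this lemma; it simply asserts that Theorem~\ref{thm:kk} ``can be modified to work when the input hypergraph is weighted'' and states the weighted version without further argument. Your proposal fills in exactly the details one would expect: replace Lemma~\ref{chernoff} by a weighted tail bound, run the same dyadic strength-band decomposition and (weight-agnostic) cut-counting union bound from~\cite{kogan2015sketching}, and invoke Lemma~\ref{lem:n-1} (already stated for weighted hypergraphs) for the size estimate. This is correct and is the natural way to substantiate the paper's one-line claim; your observation that for $p_e<1$ the reweighted value $w_e/p_e=k_e\eps^2/\rho$ depends only on the strength is precisely what makes the band-by-band concentration go through unchanged.

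One small remark on your Step~3: the size bound $O\big(\frac{n}{\eps^2}(r+\log n)\big)$ only follows when $p_e$ is taken equal to (or within a constant factor of) the stated threshold; if $p_e$ is allowed to be arbitrarily larger---as the ``$\ge$'' in the lemma statement literally permits---then $H'$ could have as many as $|E|$ edges. Your parenthetical ``taking $p_e$ equal to the stated lower bound'' is the correct reading, and the imprecision is in the lemma statement (inherited from the paper), not in your argument.
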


\begin{lemma}\label{lem:sy-weighted}[Weighted version of Theorem~\ref{thm:spectral-S19}]
Let $H=(V, E, w)$ be a weighted hypergraph, and let $\eps>0$ be an error parameter. Consider the hypergraph $H'$ obtained by sampling each hyperedge $e$ in $H$ independently with probability $p_e \ge \min\{1, w(e)\cdot\frac{Cn\log n}{\eps^2\min_{u,v\in e}w(E(\{u,v\}))}\}$, giving it weight $w(e)/p_e$ if included. Then with high probability, $H'$ is a $(1 \pm \eps)$-approximate spectral sparsifier of $H$, and has $\tilde{O}(n^3/\eps^2)$ hyperedges.
\end{lemma}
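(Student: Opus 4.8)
The plan is to reduce to the unweighted statement of Theorem~\ref{thm:spectral-S19} by a parallel-copies construction, mirroring the way one passes from Benczúr--Karger on graphs to its weighted form. Since rescaling all weights by a common factor does not change what it means to be a $(1\pm\eps)$-spectral sparsifier, and since irrational weights may be perturbed by a negligible amount, I would first assume every $w(e)$ is a positive integer $m_e$. Let $\hat H$ be the unweighted multi-hypergraph obtained from $H$ by replacing each edge $e$ with $m_e$ parallel copies. Then $x^T L_{\hat H}(x)=x^T L_H(x)$ for every $x\in\mathbb{R}^n$ (a multiset of $m_e$ copies of $e$ and a single weighted copy of weight $m_e$ contribute the same $m_e\max_{u,v\in e}(x(u)-x(v))^2$), so it suffices to build a $(1\pm\eps)$-spectral sparsifier of $\hat H$, and Theorem~\ref{thm:spectral-S19} and its proof apply verbatim to multisets of hyperedges.

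Next I would set up the coupling. On $\hat H$, draw one Bernoulli $Z_e\sim\mathrm{Bernoulli}(p_e)$ for each edge $e$ of $H$, independently across $e$; if $Z_e=1$, keep all $m_e$ copies of $e$, each with weight $1/p_e$. Again $m_e$ copies of weight $1/p_e$ act exactly like one weighted edge of weight $m_e/p_e$, so this process produces precisely the hypergraph $H'$ in the statement. Each copy of $e$ has marginal inclusion probability $p_e$; since $|\hat E(\{u,v\})|=w(E(\{u,v\}))$ for every pair $u,v$, the Soma--Yoshida threshold for such a copy is $\pi_e:=\min\{1,\frac{Cn\log n}{\eps^2\,\min_{u,v\in e}w(E(\{u,v\}))}\}$, and the hypothesis $p_e\ge\min\{1,\,w(e)\cdot\frac{Cn\log n}{\eps^2\,\min_{u,v\in e}w(E(\{u,v\}))}\}=\min\{1,m_e\pi_e'\}$ (with $\pi_e'$ the uncapped threshold, $m_e\ge1$) gives that each copy is sampled at or above $\pi_e$. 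Thus we are in the oversampling regime in which, as already noted in Section~\ref{subsec:HSS}, Theorem~\ref{thm:spectral-S19} still yields a valid sparsifier.

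The one point that is not black-box — and the part I expect to be the main obstacle — is that copies of the same edge are now perfectly correlated rather than independent, so one cannot invoke Theorem~\ref{thm:spectral-S19} off the shelf. I would resolve this by inspecting the concentration step inside the proof of Theorem~\ref{thm:spectral-S19}: for each fixed test vector $x$ one linearizes $x^T L_{\hat H}(x)=\sum_e m_e(x(u_e)-x(v_e))^2$ to an ordinary weighted graph, proves concentration for that $x$ via a Bernstein/Chernoff-type bound over the independent sampling choices, and then unions over a polynomial-size net of critical vectors. Replacing the $m_e$ independent copies of $e$ by the single block variable $Z_e$ leaves the mean of $e$'s contribution unchanged, and a short calculation shows the block variable has \emph{no larger} variance (it is $\le m_e^2 c_e^2/p_e\le m_e c_e^2/\pi_e$ whenever $p_e\ge m_e\pi_e$, and is deterministic when $p_e=1$), and no larger range; crucially, the coupling only destroys — never creates — independence between distinct hyperedges, which is all the tail bound uses. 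Hence the same concentration inequality and the same union bound go through. (Equivalently, one can simply check that the importance-sampling analysis of \cite{SomaY19} is agnostic to whether the weight on a hyperedge is carried as one edge or split into unit copies, i.e. it already establishes the weighted statement with $|E(\{u,v\})|$ replaced everywhere by $w(E(\{u,v\}))$.)

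Finally, for the $\tilde O(n^3/\eps^2)$ size bound I would argue as in the unweighted case: the expected number of kept edges is $\sum_e p_e$, and charging each edge $e$ to a pair $(u^{\ast},v^{\ast})$ attaining $\min_{u,v\in e}w(E(\{u,v\}))$ gives $\sum_{e\ \text{charged to}\ (u^{\ast},v^{\ast})} w(e)/w(E(\{u^{\ast},v^{\ast}\}))\le 1$, so $\sum_e p_e=O(\binom{n}{2}\cdot n\log n/\eps^2)=\tilde O(n^3/\eps^2)$ (the same charging also bounds the number of edges with $p_e=1$); a Chernoff bound then upgrades this to a high-probability guarantee, completing the proof.
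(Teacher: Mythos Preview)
The paper does not actually prove this lemma. It simply states Lemmas~\ref{lem:kk-weighted}--\ref{lem:bst-weighted} in Section~\ref{sec:weighted} after the single sentence ``We first note that all the Theorems we use to derive these results \ldots\ can be modified to work when the input hypergraph is weighted,'' and then moves on. In other words, the paper treats the weighted version of the Soma--Yoshida theorem as a routine extension of \cite{SomaY19} and gives no argument.

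Your proposal therefore supplies substantially more than the paper does. The parallel-copies reduction is a correct and standard way to pass from unweighted to weighted statements, and you correctly identify the only non-black-box point: once the $m_e$ copies of $e$ are fused into a single Bernoulli, the copies are no longer independent, so Theorem~\ref{thm:spectral-S19} cannot be invoked as stated. Your two resolutions are both sound: either open up the concentration step in \cite{SomaY19} and observe that it uses only independence across distinct hyperedges together with a per-edge variance/range bound that your blocking does not worsen, or (more directly, and closer in spirit to what the paper seems to have in mind) note that the importance-sampling analysis in \cite{SomaY19} is already written for weighted contributions and goes through verbatim with $|E(\{u,v\})|$ replaced by $w(E(\{u,v\}))$. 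Your charging argument for the $\tilde O(n^3/\eps^2)$ size bound is also the standard one and is correct.

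In short: the paper's ``proof'' is an appeal to the obvious weighted modification of \cite{SomaY19}; your reduction via integer parallel copies is a valid alternative that makes the extension explicit, at the cost of having to re-inspect one concentration inequality. Either route is fine.
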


\begin{lemma}\label{lem:bst-weighted}[Weighted version of Theorem~\ref{thm:spectral-B19}]
    Let $H=(V, E, w)$ be a weighted hypergraph with rank $r$, and let $\eps>0$ be an error parameter. Consider the hypergraph $H'$ obtained by sampling each hyperedge $e$ in $H$ independently with probability $p_e \ge \min\{1, w(e)\cdot\frac{Cr^4r_e\log n}{\eps^2}\}$, giving it weight $w(e)/p_e$ if included. Then with high probability, $H'$ is a $(1 \pm \eps)$-approximate spectral sparsifier of $H$, and has $\tilde{O}(nr^3/\eps^2)$ hyperedges.
\end{lemma}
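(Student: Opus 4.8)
The plan is to carry the weight $w(e)$ through the proof of the unweighted Theorem~\ref{thm:spectral-B19}, in exact analogy with how the excerpt already passes from the unweighted concentration bound of Lemma~\ref{chernoff} (and Theorem~\ref{thm:kk}) to the weighted Lemma~\ref{lem:kk-weighted}; the one extra point that needs attention is that the importance parameter $r_e$ is intrinsically a quantity of the weighted auxiliary graph. Concretely, for a weighted $H$ let $G_H$ be the graph in which each hyperedge $e$ becomes a clique whose edge weights are a fixed multiple of $w(e)$, let $r_e=\max_{f\in F_e}r_f$ be the largest effective resistance in that clique, and recall that the analysis underlying Theorem~\ref{thm:spectral-B19} bounds, for every $x\in\mathbb{R}^n$, the deviation of $x^TL_{H'}(x)=\sum_e Y_e$ from its mean $x^TL_H(x)$, where $Y_e$ equals $(w(e)/p_e)\max_{u,v\in e}(x(u)-x(v))^2$ with probability $p_e$ and $0$ otherwise, via a chaining / $\eps$-net argument whose variance input is controlled by the ``effective weight'' $w(e)/p_e$ together with the resistances $r_e$. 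The hypothesis $p_e\ge\min\{1,w(e)\cdot Cr^4r_e\log n/\eps^2\}$ is exactly the bound $w(e)/p_e\le\eps^2/(Cr^4r_e\log n)$ on the contribution of any sampled hyperedge that this argument requires, and, as the excerpt already notes, replacing $p_e$ by any larger $q_e$ only helps; so the proof applies verbatim with $w(e)$ inserted, and the edge count $\tilde{O}(nr^3/\eps^2)$ is inherited from the unweighted bound.

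A cleaner way to see that no new ideas are needed, at least for rational weights, is to reduce directly to the unweighted theorem by splitting. Scaling all weights by a common factor multiplies $x^TL_H(x)$ uniformly for every $x$ and hence preserves the $(1\pm\eps)$-spectral relation, so we may assume all $w(e)$ are positive integers; replace each $e$ by $w(e)$ parallel unit-weight copies to obtain a multi-hypergraph $\tilde H$, used only in the analysis. Then $L_{\tilde H}=L_H$, the rank and the number of vertices are unchanged, and --- crucially --- since the $w(e)$ parallel clique edges across any pair $u,v\in e$ combine to the same conductance that the single weight-$w(e)$ clique edge has in $G_H$, the auxiliary graph $G_{\tilde H}$ coincides with $G_H$ and every copy of $e$ inherits the importance $r_e$. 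Applying Theorem~\ref{thm:spectral-B19} to $\tilde H$ (sampling each copy independently with probability $\min\{1,Cr^4r_e\log n/\eps^2\}$, weighting by the reciprocal) and then merging the sampled copies of each hyperedge into one weighted hyperedge yields a weighted $(1\pm\eps)$-spectral sparsifier of $\tilde H=H$ whose number of distinct hyperedges is at most the number of sampled copies, namely $\tilde{O}(nr^3/\eps^2)$; the merge does not disturb the guarantee because the quadratic form $x^TL_H(x)$ is additive over parallel hyperedges. Arbitrary real weights are then covered by the direct argument of the previous paragraph.

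I expect the only real obstacle here to be bookkeeping rather than anything conceptual: one must verify that in the BST analysis the sole weight-dependent quantities are the effective resistances $r_e$ of $G_H$ (fixed once the weighted hypergraph is fixed, and independent of the random sample) and the effective weights $w(e)/p_e$ of the sampled hyperedges, so that the stated lower bound on $p_e$ is precisely the right one; and, for the splitting route, that $r_e$ is genuinely preserved when a weighted hyperedge is expanded into parallel unit copies. Both checks are routine, so the lemma follows.
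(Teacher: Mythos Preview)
Your proposal is correct and in fact considerably more detailed than what the paper offers. The paper does not give a proof of this lemma at all: Section~\ref{sec:weighted} simply asserts that Theorems~\ref{thm:kk}, \ref{thm:spectral-S19}, and \ref{thm:spectral-B19} ``can be modified to work when the input hypergraph is weighted,'' states the three weighted lemmas, and moves on. Your first paragraph (carrying $w(e)$ through the BST analysis, noting that the variance input depends only on $w(e)/p_e$ and the resistances $r_e$) is exactly the kind of verification the paper is implicitly deferring to the reader, and your splitting-into-parallel-copies reduction is a clean alternative sanity check that the paper does not mention. Either route suffices; nothing is missing.
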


Most of our arguments and definitions for the unweighted case of Theorem~\ref{thm:main} and Theorem~\ref{thm:spectral-edge} translate directly to the weighted case once we replace every mention of the cardinality of an edge set by the weight of that edge set. In particular, if we generalize the definition of pseudo cut size to be $\Delta_X(S) = \frac{1}{2}(w(\delta(S))+w(\delta(X \setminus S))-w(\delta(X)))$, then the proofs for Lemma~\ref{lem:pes-sub}, Lemma~\ref{lem:pes-bound}, and Lemma~\ref{lem:sample} are completely analogous. The only difference in the analysis of Algorithm~\ref{alg:pesudo} is that the probability that a sample from the cut $(S_0, \bar{S_0})$ returns an edge $e$ in the cut is now $w(e)/w(\delta(S))$, implying that the probability that $e$ is sampled by Algorithm~\ref{alg:pesudo} is at least $\min\{1, \frac{10w(e)n^3}{\epsilon^2w(\delta(S_0))}\}$. Since this is at least the requisite sampling probability in Lemma~\ref{lem:kk-weighted}, the hypergraph $H_1$ is a sparsifier of $H$ with high probability. In the case of Theorem~\ref{thm:spectral-edge}, we still have that for every hyperedge $e$, $\frac{n}{k_e}\geq r_e$, so Algorithm~\ref{alg:pesudo} can be applied to sample each hyperedge $e$ with the desired probability of at least $\min\{1, w(e)\cdot\frac{Cr^4r_e\log n}{\eps^2}\}$.

Similarly for Theorem~\ref{thm:neighbor} and Theorem ~\ref{thm:spectral-neighbor}, the proof of correctness of Algorithm~\ref{alg:pair} is almost completely analogous (although this time, the algorithm outputs an estimate of the total weight of hyperedges containing both $u$ and $v$). The proof of correctness of Algorithm~\ref{alg:sample} is modified to assert that the probability of sampling an edge $e\in E(\{u, v\})$ is $w(e)/w(E(\{u, v\}))$, implying that the probability that the algorithm samples $e$ in each iteration is proportional to $(1\pm\eps)\cdot w(e)$.

Note that the running time of our algorithms are independent of the number of edges in the unweighted setting. Similarly, in the weighted setting, our running times have no dependence on the weights of the edges, and the running time and the size of sparsifier are the same as in the unweighted cases.




\section{Concluding Remarks}
\label{sec:conclusions}

We presented the first sublinear time algorithms for creating a hypergraph sparsifier. Given access to a hypergraph through cut size and suitable edge sampling queries, our algorithm outputs a $(1 \pm \eps)$-approximate sparsifier with $\tilde{O}(n/\eps^2)$ hyperedges in polynomial time in $n$, independent of the number of hyperedges. We also showed that for any natural weakening of our query access assumptions, there is no $\poly(n)$ time algorithm for building a hypergraph sparsifier of $\poly(n)$ size. An intriguing question is if an information-theoretic cut sparsifier can be constructed using cut value queries alone. Cut value queries alone can not distinguish between hypergraphs which only contain edges of rank $2$ from hypergraphs which only contain edges of  rank $3$, making it impossible for them to output a proper sparsifier. But this does not rule out the possibility that a suitable data structure can be created using these queries alone that can recover the value of any cut to within a $(1 \pm \eps)$-approximation.

\section*{Acknowledgements}
We thanks the anonymous referees for their valuable comments on an earlier version of this paper. This research was partially supported by NSF awards CCF-1763514, CCF-1617851, CCF-1934876, and CCF-2008305.

\bibliographystyle{plainurl}

\bibliography{general}

\end{document}